\newtheorem{theorem}{Theorem}[section]
\newtheorem{proposition}[theorem]{Proposition}
\newtheorem{lemma}[theorem]{Lemma}
\theoremstyle{definition}
\newtheorem{definition}[theorem]{Definition}
\newtheorem{example}[theorem]{Example}
\theoremstyle{remark}
\newtheorem{remark}[theorem]{Remark}
\numberwithin{equation}{section}
\newcommand{\cb}{\mathcal{B}}
\newcommand{\C}{\mathbb{C}}
\renewcommand{\epsilon}{\varepsilon}
\newcommand{\F}{\mathcal{F}}
\newcommand{\loc}{{\rm loc}}
\newcommand{\mt}{\mathrm{MT}}
\newcommand{\N}{\mathbb{N}}
\renewcommand{\phi}{\varphi}
\newcommand{\R}{\mathbb{R}}
\newcommand{\cw}{\mathcal{W}}
\newcommand{\Z}{\mathbb{Z}}
\newcommand{\high}{\mathrm{high}}
\newcommand{\low}{\mathrm{low}}
\newcommand{\reg}{\mathrm{reg}}
\newcommand{\sing}{\mathrm{sing}}
\DeclareMathOperator{\dist}{dist}
\DeclareMathOperator{\Div}{div}
\DeclareMathOperator{\supp}{supp}
\DeclareMathOperator{\sgn}{sgn}
\def\bs{\mathbb{S}}
\def\cs{\mathcal{S}}
\def\cv{\mathcal{V}}
\def\rd{\mathrm{d}}
\def\dk{\mathrm{d}k}
\def\dr{\mathrm{d}r}
\def\dt{\mathrm{d}t}
\def\dx{\mathrm{d}x}
\newcommand{\me}[1]{\mathrm{e}^{#1}}
\newcommand{\one}{\mathbf{1}}
\newcommand{\e}{{\rm e}}
\newcommand\I{\mathrm{i}}
\DeclareMathOperator{\rk}{rank}
\newcommand*{\rom}[1]{\expandafter\@slowromancap\romannumeral #1@}
\begin{document}

\title[Weak coupling limit]{Weak coupling limit for Schr\"odinger-type operators with degenerate kinetic energy for a large class of potentials}

 \author[J.-C. Cuenin]{Jean-Claude Cuenin}
 \address[Jean-Claude Cuenin]{Department of Mathematical Sciences, Loughborough University, Loughborough,
 Leicestershire, LE11 3TU United Kingdom}
 \email{J.Cuenin@lboro.ac.uk}

 \author[K. Merz]{Konstantin Merz}
 \address[Konstantin Merz]{Institut f\"ur Analysis und Algebra, Technische Universit\"at Braunschweig, Universit\"atsplatz 2, 38106 Braunschweig, Germany}
 \email{k.merz@tu-bs.de}

\subjclass[2010]{Primary: 58C40; Secondary: 81Q10}
\keywords{Degenerate kinetic energy, weak coupling limit, eigenvalue asymptotics, Fourier restriction, Tomas-Stein theorem}

\date{March 04, 2021}

\begin{abstract}
  We improve results by Frank, Hainzl, Naboko, and Seiringer
  \cite{MR2365659} and Hainzl and Seiringer \cite{MR2643024}
  on the weak coupling limit of eigenvalues for
  Schr\"odinger-type operators whose kinetic energy vanishes
  on a codimension one submanifold. The main technical
  innovation that allows us to go beyond the potentials
  considered in \cite{MR2365659,MR2643024} is the use of the
  Tomas--Stein theorem.
\end{abstract}

\maketitle
\section{Introduction and main results}

There has been recent interest in Schr\"odinger-type operators
of the form
\begin{align}
  \label{eq:defhlambda}
  H_{\lambda} = T(-\I\nabla)-\lambda V \quad \text{in}\ L^2(\R^d)\,,
\end{align}
where the kinetic energy $T(\xi)$ vanishes on a submanifold of
codimension one, $V$ is a real-valued potential, and $\lambda>0$
is a coupling constant. We are interested in the weak coupling
limit $\lambda\to 0$ for potentials that decay slowly in some
$L^p$ sense to be made precise. Operators of this type appear in
many areas of mathematical physics
\cite{MR0404846,MR1671985,MR772044,MR1970614,MR2303586,MR2466689,MR2250814,PhysRevB.77.184517,MR2365659,MR2450161,MR2410898,MR2643024,hoang2016quantitative,Gontier_2019}. 
The goal of \cite{MR2643024} was to generalize the results and
techniques of \cite{MR2365659} and \cite{PhysRevB.77.184517} to
a large class of kinetic energies. Our goal, complementary to
\cite{MR2643024}, is to relax the conditions on the potential.
To keep technicalities to a minimum, we state our result for
$T(-\I\nabla)=|\Delta+1|$. This was one of the main motivations
to study operators of the form \eqref{eq:defhlambda}, due to their
role in the BCS theory of superconductivity
\cite{MR2365659,MR2410898}. As in previous works
\cite{MR2365659,MR2643024} a key role is played by an operator
$\mathcal{V}_S$ on the unit sphere $S\subset\R^d$, whose convolution
kernel is given by the Fourier transform of $V$. The potentials we
consider here need not be in $L^1(\R^d)$, but $\mathcal{V}_S$ may be
defined as a norm limit of a regularized version (see Section
\ref{Section def V_S} for details). The potential $V$ is assumed to
belong to the amalgamated space $\ell^{\frac{d+1}{2}}L^{\frac{d}{2}}$,
where the first space measures global (average) decay and the
second measures local regularity (see \eqref{potential class def.}).
We note that
$L^{\frac{d+1}{2}}\cup L^{\frac d2}\subseteq \ell^{\frac{d+1}{2}}L^{\frac d2}$
by Jensen's inequality.
\begin{theorem}
  \label{thm. asymptotics}
  Let $d\geq 3$ and $H_{\lambda} =|\Delta+1|-\lambda V$. 
  If $V\in\ell^{\frac{d+1}{2}}L^{\frac{d}{2}}$, then for every eigenvalue
  $a_S^j>0$ of $\mathcal{V}_S$ in \eqref{eq:lswbs2}, counting
  multiplicity, and every $\lambda>0$, there is an eigenvalue
  $-e_j(\lambda)<0$ of $H_{\lambda}$ with weak coupling limit
  \begin{align}
    \label{eq. weak coupling limit}
    e_j(\lambda) = \exp\left(-\frac{1}{\lambda a_S^j}(1+o(1))\right)
    \quad \text{as}\ \lambda \to 0.
  \end{align}
\end{theorem}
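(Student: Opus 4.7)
The plan is to apply the Birman--Schwinger principle, extract the logarithmic singularity of the resolvent $(|\Delta+1|+e)^{-1}$ near the zero set $S = \{|\xi|=1\}$ of the symbol via a coarea decomposition, and control all remainder terms with the Tomas--Stein restriction theorem. Writing $V = UW$ with $W = |V|^{1/2}$ and $U = \sgn(V)W$, the Birman--Schwinger principle says that $-e < 0$ is an eigenvalue of $H_\lambda$ if and only if $1$ is an eigenvalue of
\[
K_\lambda(e) \;=\; \lambda\, W\,(|\Delta+1|+e)^{-1}\, U .
\]
On the Fourier side, polar coordinates and the local computation $\int_{-\delta}^{\delta}(2|s|+e)^{-1}\,ds = \log(e^{-1}) + O(1)$ (obtained by setting $r=1+s$ in the radial integral of $(|r^2-1|+e)^{-1}$) suggest a decomposition
\[
K_\lambda(e) \;=\; \lambda \log(e^{-1})\,\mathcal{A} \;+\; \lambda\, \mathcal{B}(e),
\]
with $\mathcal{A} = W \mathcal{R}_S^* \mathcal{R}_S U$, where $\mathcal{R}_S$ denotes Fourier restriction to $S$, and $\mathcal{B}(e)$ is a remainder that must be bounded in operator norm uniformly for $e \in (0, e_0]$.

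To make sense of $\mathcal{A}$ and to control $\mathcal{B}(e)$ I would invoke the Tomas--Stein theorem $\mathcal{R}_S : L^{2(d+1)/(d+3)}(\R^d) \to L^2(S)$. Since $V \in L^{(d+1)/2}$ gives $W,U \in L^{d+1}(\R^d)$, H\"older yields the bounded factorizations $W\mathcal{R}_S^*:L^2(S)\to L^2(\R^d)$ and $\mathcal{R}_S U:L^2(\R^d)\to L^2(S)$. Hence $\mathcal{A}$ is compact on $L^2(\R^d)$, and writing $\mathcal{A} = XY$ with $X = W\mathcal{R}_S^*$, $Y = \mathcal{R}_S U$, the identity $YX = \mathcal{R}_S V \mathcal{R}_S^* = \mathcal{V}_S$ shows that the non-zero spectra of $\mathcal{A}$ and $\mathcal{V}_S$ coincide with multiplicities. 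To bound $\mathcal{B}(e)$ uniformly I would foliate a tubular neighbourhood of $S$ by the level sets $||\xi|^2-1|=t$ and apply scale-adapted variants of the Stein--Tomas estimate, with the amalgam structure of $\ell^{(d+1)/2}L^{d/2}$ ensuring that the global $\ell^{(d+1)/2}$ part feeds into Tomas--Stein while the local $L^{d/2}$ part is absorbed via a standard Kato--Rollnik bound.

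With the decomposition in hand I would divide the eigenvalue equation by $\lambda\log(e^{-1})$ and study the compact family $\mathcal{A} + (\log(e^{-1}))^{-1}\mathcal{B}(e)$, which converges in operator norm to $\mathcal{A}$ as $e\to 0^+$. For each isolated eigenvalue $a_S^j > 0$ of $\mathcal{V}_S$, Kato's analytic perturbation theory, applied around a small circle enclosing only $a_S^j$ in the spectrum of $\mathcal{A}$, produces an eigenvalue branch $a_j(e) \to a_S^j$. Solving the implicit equation $\lambda\log(e^{-1})\,a_j(e) = 1$ then yields $e_j(\lambda) = \exp(-(1+o(1))/(\lambda a_S^j))$ as $\lambda\to 0$, which is the claimed asymptotic. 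The main obstacle is the uniform operator-norm bound on $\mathcal{B}(e)$ for $V$ only in the amalgam class $\ell^{(d+1)/2}L^{d/2}$: this is precisely where \cite{MR2365659,MR2643024} needed stronger hypotheses on $V$, and where the Tomas--Stein theorem enters as the essential new ingredient, both for defining $\mathcal{A}$ and for controlling the remainder.
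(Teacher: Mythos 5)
Your proposal mirrors the paper's proof almost exactly: Birman--Schwinger, extraction of the logarithmic singularity via the coarea/spectral-measure decomposition near $S$, Tomas--Stein to define the limiting operator $\mathcal{V}_S = \mathcal{R}_S V \mathcal{R}_S^*$ and to control the remainder, and Kato-style Riesz-projection perturbation theory to convert the eigenvalue branch into the asymptotic formula. The only imprecision is your stated goal of a \emph{uniform-in-$e$} operator-norm bound on $\mathcal{B}(e)$: for the full amalgam class $\ell^{\frac{d+1}{2}}L^{\frac{d}{2}}$ the paper only establishes (and, via division by $\log(e^{-1})$, you also only need) the weaker bound $\|\mathcal{B}(e)\| = o(\log(1/e))$, obtained by approximating $\sqrt{V}$ by Schwartz functions and exploiting the locally uniform Tomas--Stein estimate on the level sets $S_t$; the genuine uniform bound holds for Schwartz (or, e.g., $L^{(d+1)/2-\epsilon}$) potentials via Lipschitz/H\"older continuity of $t\mapsto\sqrt{V}\mathcal{F}_{S_t}^*\mathcal{F}_{S_t}\sqrt{V}$, but not obviously at the endpoint.
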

For simplicity we stated the result for $d\geq 3$, but it easily
transpires from the proof that it also holds in $d=2$ for
$V\in\ell^{\frac{d+1}{2}}L^{1+\epsilon}$ for arbitrary $\epsilon>0$.
All other possible negative eigenvalues (not corresponding to
$\mathcal{V}_S$) satisfy $e_j(\lambda)\leq \exp(-c/\lambda^2)$.
The statement in \cite{MR2643024} about the convergence of
eigenfunctions also holds for the potentials considered here. Since
the proofs are completely analogous we will not discuss them.

In previous works \cite{MR2365659,MR2643024} it was assumed that
$V\in L^{1}(\R^d)\cap L^{\frac{d}{2}}(\R^d)$. Our main contribution
is to remove the $L^1$ assumption, allowing for potentials with
slower decay. 
The main new idea is to use the Tomas--Stein theorem (see Subsection
\ref{subsection def. V_s} and \eqref{eq:tsorg}, \eqref{eq:tsexplicit}).
In view of its sharpness, our result is optimal in the sense that the
exponent $(d+1)/2$ in our class of admissible potentials cannot be
increased, unless one imposes further (symmetry) restrictions on $V$,
see also the discussion below. Moreover, the use of amalgamated spaces
allows us to relax the global regularity to the local condition
$V\in L_{\loc}^{\frac{d}{2}}(\R^d)$ which just suffices to guarantee
that $H_{\lambda}$ is self-adjoint.

The idea of applying the Tomas--Stein theorem and related results such
as \cite{MR894584} to problems of mathematical physics is not new, see,
e.g., \cite{MR2219246} and \cite{MR2820160}. The validity of the
Tomas--Stein theorem crucially depends on the curvature of the
underlying manifold. A slight modification of our proof (see, e.g.,
\cite{MR3608659,MR3942229}) shows that the result of Theorem
\ref{thm. asymptotics} continues to hold for general Schr\"odinger-type
operators (with a suitable modification of the local regularity
assumption) of the form \eqref{eq:defhlambda} as long as the Fermi
surface $S=\{\xi\in\R^d:\ T(\xi)=0\}$ is smooth and has everywhere
non-vanishing Gaussian curvature. For example, if $T$ is elliptic at
infinity of order $2d/(d+1)\leq s<d$, then the assumption on the
potential becomes $V\in\ell^{\frac{d+1}{2}}L^{\frac{d}{s}}$. This is
outlined in Theorem \ref{asymptoticsgen} and improves
\cite[Theorem 2.1]{MR2643024}. The moment-type condition on the
potential in that theorem is unnecessary, regardless of whether the
kinetic energy is radial or not.
A straightforward generalization to the case where $S$ has at least
$k$ non-vanishing principal curvatures can be obtained from the
results of \cite{MR620265,MR3942229}. In that case the global decay
assumption has to be strengthened to
$V\in\ell^{\frac{k+2}{2}}L^{\frac{d}{s}}$. Sharp restriction theorems for
surfaces with degenerate curvature are available in the
three-dimensional case \cite{MR3524103}. 

Based on the results of \cite{MR1479544,MR3713021,Vega1992}, if the
potential $V$ is radial, one might be able to relax the assumption in
Theorem \ref{thm. asymptotics} to $V\in\ell^{d}L^{\frac{d}{2}}$.
This naive belief is supported by the discussion in Appendix \ref{a:mt},
see especially Theorem \ref{asymptoticsradial} where we generalize
Theorem \ref{thm. asymptotics} to spherically symmetric potentials with
almost $L^d$ decay.

For long-range potentials the weak coupling limit
\eqref{eq. weak coupling limit} does not hold in general.
Gontier, Hainzl, and Lewin \cite{Gontier_2019} showed
$\exp(-C_1/\sqrt{\lambda})\leq e_1(\lambda)\leq\exp(-C_2/\sqrt{\lambda})$
for the Coulomb potential $V=|x|^{-1}$ in $d=3$.

The key estimate \eqref{Key estimate} is a consequence of the
Tomas--Stein theorem. The remainder of the proof is standard first
order perturbation theory that is done in exactly the same way as
in \cite{MR2365659,MR2643024}. In a similar manner -- again following
\cite{PhysRevB.77.184517,MR2643024} -- we will carry out higher order
perturbation theory in Subsection \ref{ss:higherorders} and show how
one may in principle obtain any order in the asymptotic expansion of
$e_j(\lambda)$ at the cost of restricting the class of admissible
potentials. For instance, our methods allow us to derive the second
order for $V\in L^{\frac{d+1}{2}-\epsilon}$ and some $\epsilon\in(0,1/2]$.
For $V\in L^1\cap L^{d/2}$ this was first carried out in
\cite{PhysRevB.77.184517,MR2643024}.
Furthermore, we will give an alternative proof for the existence of
eigenvalues of $H_\lambda$ based on Riesz projections in Subsection
\ref{Section alternative proof}. This approach allows us to handle
complex-valued potentials\footnote{In this case, a transformation of
  statements about non-self-adjoint operators into those about a
  self-adjoint operator as in the proof of Theorem
  \ref{thm. asymptotics} seems impossible.} on the same footing as
real-valued ones. The former play a role, e.g., in the theory of
resonances, but are also of independent interest.

We use the following notations: For two non-negative numbers $a,b$
the statement $a\lesssim b$ means that $a\leq C b$ for some universal
constant $C$. If the estimate depends on a parameter $\tau$, we indicate
this by writing $a\lesssim_{\tau} b$. The dependence on the dimension
$d$ is always suppressed. We will assume throughout the article that
the (asymptotic) scales $e$ and $\lambda$ are positive, sufficiently
small, and that $\lambda\ln(1/e)$ remains uniformly bounded from above
and below. The symbol $o(1)$ stands for a constant that tends to zero
as $\lambda$ (or equivalently $e$) tends to zero. We set
$\langle\nabla\rangle=(\mathbf{1}-\Delta)^{1/2}$.


\section{Preliminaries}
\label{s:prelims}

\subsection{Potential class}

Let $\{Q_s\}_{s\in\Z^d}$ be a collection of axis-parallel unit cubes
such that $\R^d=\bigcup_{s}Q_s$. We then define the norm
\begin{align}
  \label{potential class def.}
  \|V\|_{\ell^{\frac{d+1}{2}} L^{\frac{d}{2}}}
  := \left[\sum_s\|V\|_{L^{\frac{d}{2}}(Q_s)}^{\frac{d+1}{2}}\right]^{\frac{2}{d+1}}.
\end{align}
The exponent $(d+1)/2$ is natural (cf. \cite{MR2038194,MR2252331})
in view of the Tomas--Stein theorem. This is the assertion that the
Fourier transforms of $L^p(\R^d)$ functions indeed belong to $L^2(S)$
whenever $p\in[1,\kappa]$ where $\kappa=2(d+1)/(d+3)$ denotes the
``Tomas--Stein exponent''. We discuss this theorem and a certain
extension thereof in more detail in the next subsection. Observe that
$1/\kappa-1/\kappa'=2/(d+1)$. The following lemma is a straightforward
generalization of \cite[Lemma 6.1]{MR2219246}.

\begin{lemma}\label{lemma Ionescu--Schlag}
  Let $s\geq2d/(d+1)$ and $V\in \ell^{\frac{d+1}{2}}L^{\frac{d}{s}}$.
  Then
  \begin{align}
    \||V|^{1/2}\langle\nabla\rangle^{-\left(\frac{s}{2}-\frac{d}{d+1}\right)}\phi\|_{L^2}
    \lesssim \|V\|^{1/2}_{\ell^{\frac{d+1}{2}}L^{\frac{d}{s}}}\|\phi\|_{L^{\kappa'}}.
  \end{align}
\end{lemma}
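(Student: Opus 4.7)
The strategy is to dualize by squaring and reduce to an amalgam--Sobolev embedding of the Bessel potential. Setting $\alpha := s/2 - d/(d+1)\geq 0$ and $q := 2d/(d-s)$, the claim is equivalent to
\[
\int_{\R^d} |V|\,|\langle\nabla\rangle^{-\alpha}\phi|^2\,\rd x \lesssim \|V\|_{\ell^{(d+1)/2} L^{d/s}}\,\|\phi\|_{L^{\kappa'}}^2.
\]
Splitting the left-hand side over the unit cubes $\{Q_n\}_{n\in\Z^d}$ and applying H\"older with exponents $d/s$ and $d/(d-s)$ on each cube yields
\[
\int_{Q_n}|V|\,|\langle\nabla\rangle^{-\alpha}\phi|^2\leq \|V\|_{L^{d/s}(Q_n)}\,\|\langle\nabla\rangle^{-\alpha}\phi\|_{L^q(Q_n)}^2.
\]
Summing in $n$ and applying H\"older with exponents $(d+1)/2$ and $(d+1)/(d-1)$ then produces $\|V\|_{\ell^{(d+1)/2} L^{d/s}}$ multiplied by $\|\langle\nabla\rangle^{-\alpha}\phi\|_{\ell^{\kappa'}L^q}^2$, since $2(d+1)/(d-1)=\kappa'$. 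The problem reduces to proving the amalgam--Sobolev embedding
\[
\|\langle\nabla\rangle^{-\alpha}\phi\|_{\ell^{\kappa'}L^q}\lesssim \|\phi\|_{L^{\kappa'}},\qquad \tfrac{1}{\kappa'}-\tfrac{1}{q}=\tfrac{\alpha}{d},
\]
which is the scaling-critical pair.

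To prove this embedding I would write $\langle\nabla\rangle^{-\alpha}\phi=G_\alpha*\phi$ with the Bessel kernel $G_\alpha$ and split $G_\alpha=G_\alpha\1_{|x|\leq 1}+G_\alpha\1_{|x|>1}$. The short-range piece has singularity $|x|^{\alpha-d}$ at the origin and compact support, so its convolution restricted to $Q_n$ depends only on $\phi$ on a fixed enlargement $Q_n^*$ of $Q_n$; Hardy--Littlewood--Sobolev on $Q_n^*$ at the scaling-critical pair gives $\|(G_\alpha\1_{|x|\leq 1})*\phi\|_{L^q(Q_n)}\lesssim \|\phi\|_{L^{\kappa'}(Q_n^*)}$, and raising to the $\kappa'$-th power and summing via the finite overlap of the $Q_n^*$ produces the required $\ell^{\kappa'}$ bound. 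The tail $G_\alpha\1_{|x|>1}$ decays exponentially, so
\[
\|(G_\alpha\1_{|x|>1})*\phi\|_{L^q(Q_n)}\lesssim \sum_m c_{n-m}\,\|\phi\|_{L^1(Q_m)}
\]
with $c\in\ell^1(\Z^d)$; combining $\|\phi\|_{L^1(Q_m)}\leq \|\phi\|_{L^{\kappa'}(Q_m)}$ (H\"older on a unit cube) with Young's convolution inequality on $\ell^{\kappa'}$ closes the estimate.

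The main technical obstacle is the amalgam--Sobolev embedding itself, which is strictly stronger than the standard Sobolev inequality $L^{\kappa'}\to L^q$: the local $L^q$ norms are controlled in an $\ell^{\kappa'}$ sense rather than merely globally. Its validity rests on the essentially local character of the Bessel kernel, the exponential tail being harmless. The critical endpoint $s=2d/(d+1)$ corresponds to $\alpha=0$ and $q=\kappa'$, in which case the embedding is trivial and the lemma collapses to pure H\"older's inequality with the conjugate pair $(d+1)/2,(d+1)/(d-1)$. This structure is exactly that of \cite[Lemma~6.1]{MR2219246}, with only cosmetic modifications to accommodate the inhomogeneous Bessel potential $\langle\nabla\rangle^{-\alpha}$ in place of the homogeneous $|\nabla|^{-\alpha}$.
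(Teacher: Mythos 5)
Your proof is correct and follows essentially the same strategy as the paper's: decompose over unit cubes, apply H\"older on each cube to separate $V$ from $\langle\nabla\rangle^{-\alpha}\phi$, and control the Bessel potential via Hardy--Littlewood--Sobolev near the diagonal together with the rapid decay of the kernel off the diagonal and Young's inequality in the $\ell$-variable. The only organizational difference is that you isolate a clean intermediate amalgam--Sobolev embedding $L^{\kappa'}\hookrightarrow\ell^{\kappa'}L^q$ and apply H\"older in the $\ell$-variable before proving it, whereas the paper folds the cube decomposition of $\phi$ directly into the chain of inequalities using the majorant $W_\alpha$; both are the straightforward generalization of \cite[Lemma~6.1]{MR2219246} that you indicate, and the case $\alpha\ge d$ that the paper dispatches separately via the H\"ormander--Mihlin theorem never actually arises for $s\le d$.
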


\begin{proof}
  We abbreviate $\alpha=s/2-d/(d+1)\geq0$ and first note that, by duality,
  the assertion is equivalent to
  $$
  \|\langle\nabla\rangle^{-\alpha}|V|^{1/2}\phi\|_{L^\kappa}
  \lesssim \|V\|^{1/2}_{\ell^{\frac{d+1}{2}}L^{\frac{d}{s}}}\|\phi\|_{L^2}\,.
  $$
  If $\alpha=0$, the claim follows from H\"older's inequality,
  $d/s=(d+1)/2$ in this case, and $\ell^p L^p= L^p$ for all $p\in[1,\infty]$.
  On the other hand, if $\alpha\geq d$ we use the fact that
  $\langle\nabla\rangle^{-\gamma}$ is $L^p$ bounded for
  all $p\in(1,\infty)$ and $\gamma\geq0$ (by the H\"ormander--Mihlin
  multiplier theorem, cf. \cite[Theorem 6.2.7]{MR3243734}).
  Thus we shall show
  $$
  \||V|^{1/2}\langle\nabla\rangle^{-\alpha}\phi\|_{L^2}
  \lesssim \|V\|^{1/2}_{\ell^{\frac{d+1}{2}}L^{\frac{d}{s}}}\|\phi\|_{L^{\kappa'}}
  $$
  for $\alpha=s/2-d/(d+1)$ with $s\geq2d/(d+1)$ such that $\alpha\in(0,d)$.
  Let $\{Q_s\}_{s\in\Z^d}$ be the above family of axis-parallel unit
  cubes tiling $\R^d$, i.e., for $s\in\Z^d$ let
  $Q_s=\{x\in\R^d:\, \max_{j=1,...,d}|x_j-s_j|\leq1/2\}$. Next, recall that
  for $\alpha\in(0,d)$, we have for any $N\in\N_0$
  \begin{align*}
    |\langle\nabla\rangle^{-\alpha}\phi(x)| \lesssim_{\alpha,N} |\phi|\ast W_\alpha(x)
  \end{align*}
  where
  \begin{align}
    \label{eq:defwalpha}
    W_\alpha(x) = |x|^{-(d-\alpha)} \one_{\{|x|\leq1\}} + |x|^{-N}\one_{\{|x|\geq1\}}\,.
  \end{align}
  (For a proof of these facts, see, e.g., \cite[p. 132]{Stein1970}.)
  Abbreviating further $q_0=d/s$, we obtain
  \begin{align*}
    \||V|^{1/2}\langle\nabla\rangle^{-\alpha}\phi\|_{L^2}^2
    & \lesssim_{\alpha,N} \sum_{s\in\Z^d} \int_{Q_s}|V(x)| [(|\phi|\ast W_\alpha)(x)]^2\,dx\\
    & \leq \sum_{s\in\Z^d} \|V\|_{L^{q_0}(Q_s)}\cdot \||\phi|\ast W_\alpha\|_{L^{2q_0'}(Q_s)}^2\\
    & \leq \sum_{s\in\Z^d} \|V\|_{L^{q_0}(Q_s)}\left[\sum_{s'\in\Z^d}\|(\one_{Q_{s'}}|\phi|)\ast W_\alpha\|_{L^{2q_0'}(Q_s)}\right]^2\\
    & \lesssim \sum_{s\in\Z^d} \|V\|_{L^{q_0}(Q_s)}\left[\sum_{s'\in\Z^d}\|\phi\|_{L^{\kappa'}(Q_{s'})}(1+|s-s'|)^{-N}\right]^2\\
    & \lesssim_N \left[\sum_{s\in\Z^d}\|V\|_{L^{q_0}(Q_s)}^{(d+1)/2}\right]^{2/(d+1)}\|\phi\|_{L^{\kappa'}}^2
  \end{align*}
  where we used H\"older's inequality in the second line, the
  Hardy--Littlewood--Sobolev inequality in the penultimate line,
  and H\"older's and Young's inequality in the last line.
  This concludes the proof.
\end{proof}

\subsection{Definition of $\mathcal{V}_S$}
\label{Section def V_S}\label{subsection def. V_s}

As observed in \cite{MR1970614}, the weak coupling limit of
$e_j(\lambda)$ is determined by the behavior of the potential
on the zero energy surface of the kinetic energy, i.e., on
the unit sphere $S$. We denote the Lebesgue measure on $S$
by $\rd\omega$. For $V\in L^1(\R^d)$ we consider the
self-adjoint operator $\mathcal{V}_S:L^2(S)\to L^2(S)$,
defined by
\begin{align}
  \label{eq:lswbs}
  (\mathcal{V}_Su)(\xi)
  = \int_S \widehat{V}(\xi-\eta) u(\eta)\,\rd\omega(\eta),
  \quad u\in L^2(S),
\end{align}
see, e.g., \cite[Formula (2.2)]{MR2365659}. Here we have
absorbed the prefactors in the definition of the Fourier
transform, i.e., we use the convention
\begin{align*}
  \widehat{V}(\xi)=\int_{\R^d}\e^{-2\pi \I x\cdot\xi}V(x)\rd x.
\end{align*} 
Our definition of $\mathcal{V}_S$ differs from that of
\cite{MR2365659,MR2643024} by a factor of $2$; this is
reflected in the formula \eqref{eq. weak coupling limit}.
Since $V\in L^1(\R^d)$, its Fourier transform is a bounded
continuous function by the Riemann--Lebesgue lemma and is
therefore defined pointwise. The Tomas--Stein theorem allows
us to extend the definition of $\mathcal{V}_S$ to a larger
potential class. To this end we observe that the operator
in \eqref{eq:lswbs} can be written as
\begin{align}
  \label{eq:lswbs2}
  \cv_S = \mathcal{F}_{S}V\mathcal{F}_{S}^*\,,
\end{align}
where
$\mathcal{F}_{S}:\cs(\R^d)\to L^2(S)$, $\phi\mapsto\widehat{\phi}|_S$
is the Fourier restriction operator (here $\mathcal{S}$ is
the Schwartz space on $\R^d$). Its adjoint, the Fourier
extension operator $\mathcal{F}_S^*:L^2(S)\to \cs'(\R^d)$,
is given by
\begin{align}
  (\mathcal{F}_S^* u)(x) = \int_S u(\xi) \me{2\pi ix\cdot\xi} \,\rd\omega(\xi)\,.
\end{align}
A fundamental question in harmonic analysis is to find
optimal sufficient conditions for $\kappa$ such that
$\mathcal{F}_S$ is an $L^{\kappa}\to L^q$ bounded operator.
By the Hausdorff--Young inequality, the case $\kappa=1$ is
trivial. On the other hand, the Knapp example (see, e.g.,
\cite[p. 387-388]{Stein1993}) and the decay of the Fourier
transform of the surface measure \cite{Herz1962} show that
$\kappa<2d/(d+1)$ and $(d+1)/\kappa'\leq(d-1)/q$ are necessary
conditions.
The content of the Tomas--Stein theorem (unpublished, but
see, e.g., Stein \cite[Theorem 3]{Stein1986} and Tomas
\cite{Tomas1975}) is that, for $q=2$, these conditions are
indeed also sufficient. Concretely, the estimate
\begin{align}
  \label{eq:tsorg}
  \|\F_S\phi\|_{L^2(S)} \lesssim \|\phi\|_{L^p(\R^d)}
  \,, \quad p\in[1,\kappa]\,, \quad \kappa=2(d+1)/(d+3)
\end{align}
holds for all $d\geq2$, whenever $S$ is a smooth and compact
hypersurface with everywhere non-zero Gaussian curvature.
In particular, this estimate is applicable to the Fermi
surfaces that we consider later in Subsection
\ref{generalkinen}. Moreover, by H\"older's inequality it
follows that $|V|^{1/2}\mathcal{F}_S^*$ is an
$L^2(S)\to L^2(\R^d)$ bounded operator, whenever
$V\in L^q(\R^d)$ and $q\in[1,(d+1)/2]$. In this case,
$\cv_S$ is of course $L^2(S)$ bounded as well with
\begin{align}
  \label{eq:tsexplicit}
  \|\cv_S\| \lesssim \|V\|_{L^{q}}\,, \quad q\in[1,(d+1)/2]\,.
\end{align}
In the following, we will often refer to this estimate as the
Tomas--Stein theorem. Recently, Frank and Sabin
\cite[Theorem 2]{MR3730931} extended \eqref{eq:tsexplicit}
and showed
\begin{align}
  \label{eq:tsfsexplicitgen}
  \|W_1\F_S^*\F_SW_2\|_{\mathfrak{S}^{\frac{(d-1)q}{d-q}}}
  \lesssim_{q} \|W_1\|_{L^{2q}}\|W_2\|_{L^{2q}} \,, \quad W_1,W_2\in L^{2q}(\R^d)\,,\ q\in[1,(d+1)/2]
\end{align}
where $\mathfrak{S}^q(L^2)$ denotes the $q$-th Schatten
space over $L^2$. Observe that the Schatten exponent is
monotonously increasing in $q$. In particular, taking
$q=(d+1)/2$, $W_1=|V|^{1/2}$, and $W_2=V^{1/2}$ where
$V^{1/2}=|V|^{1/2}\sgn(V)$ with $\sgn(V(x))=1$ whenever
$V(x)=0$, shows that $\cv_S$ belongs to
$\mathfrak{S}^{d+1}(L^2(S))$ with
\begin{align}
  \label{eq:tsfsexplicit}
  \|\cv_S\|_{\mathfrak{S}^{d+1}} \lesssim \|V\|_{L^{(d+1)/2}}\,.
\end{align}

We will now extend the definition of \eqref{eq:lswbs2} to
incorporate potentials in the larger class
$\ell^{(d+1)/2}L^{d/2}$ that appears in our main result.
\begin{proposition}
  \label{proposition def. V_Sw}
  Let $V\in \ell^{\frac{d+1}{2}}L^{\frac{d}{2}}$. Then
  \eqref{eq:lswbs2} defines a bounded operator on $L^2(S)$.
  Moreover, if $(V_n)_n$ is a sequence of Schwartz functions
  converging to $V$ in $\ell^{\frac{d+1}{2}}L^{\frac{d}{2}}$
  and $\mathcal{V}_S^{(n)}$ are the corresponding operators
  in \eqref{eq:lswbs}, then $\mathcal{V}_S$ is the norm
  limit of the $\mathcal{V}_S^{(n)}$.
\end{proposition}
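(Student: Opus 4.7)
The plan is to reduce the claim to the single operator bound
\begin{equation*}
\||V|^{1/2}\F_S^* u\|_{L^2(\R^d)} \lesssim \|V\|_{\ell^{(d+1)/2}L^{d/2}}^{1/2}\,\|u\|_{L^2(S)}
\end{equation*}
and then package the rest as standard functional analysis. The symmetric factorization $V=|V|^{1/2}\sgn(V)|V|^{1/2}$ gives
\begin{equation*}
\cv_S = (|V|^{1/2}\F_S^*)^*\,\sgn(V)\,(|V|^{1/2}\F_S^*),
\end{equation*}
so the display above both assigns a quadratic-form meaning to $\cv_S$ via $\langle u,\cv_S v\rangle = \langle |V|^{1/2}\F_S^* u,\sgn(V)\,|V|^{1/2}\F_S^* v\rangle$ and bounds $\|\cv_S\|_{\ope}$ by the square of the operator norm of $|V|^{1/2}\F_S^*$.

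To prove that bound I would apply Lemma~\ref{lemma Ionescu--Schlag} with $s=2$, which corresponds exactly to the amalgam class $\ell^{(d+1)/2}L^{d/2}$ and yields
\begin{equation*}
\||V|^{1/2}\langle\nabla\rangle^{-1/(d+1)}\phi\|_{L^2} \lesssim \|V\|_{\ell^{(d+1)/2}L^{d/2}}^{1/2}\,\|\phi\|_{L^{\kappa'}}.
\end{equation*}
The obvious obstacle is the derivative loss $1/(d+1)$: I want to take $\phi=\F_S^* u$ with no derivative. The resolution is that $\F_S^* u$ is spectrally localized: its Fourier transform is a measure carried by $S=\{|\xi|=1\}$, where $\langle\xi\rangle=\sqrt{2}$ is constant, so $\langle\nabla\rangle^{-1/(d+1)}\F_S^* u = 2^{-1/(2(d+1))}\F_S^* u$. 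Combining this identity with the lemma and the dual Tomas--Stein estimate $\|\F_S^* u\|_{L^{\kappa'}} \lesssim \|u\|_{L^2(S)}$ from \eqref{eq:tsorg} delivers the required bound.

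For the approximation assertion, I would first verify that for Schwartz $V_n$ the two definitions agree: unwinding $\langle u,\cv_S^{(n)} v\rangle$ by Fubini reduces both \eqref{eq:lswbs} and \eqref{eq:lswbs2} to
\begin{equation*}
\int_S\int_S \overline{u(\xi)}\,v(\eta)\,\widehat{V_n}(\xi-\eta)\,\rd\omega(\eta)\,\rd\omega(\xi),
\end{equation*}
and the interchange of integrals is legitimate because $V_n$ and $\widehat{V_n}$ are Schwartz. Density of $\cs(\R^d)$ in $\ell^{(d+1)/2}L^{d/2}$ follows from a standard truncation-and-mollification argument (truncate over finitely many cubes, then mollify). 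Applying the bound of the second paragraph to the difference $V-V_n$ then gives
\begin{equation*}
\|\cv_S-\cv_S^{(n)}\|_{\ope} \lesssim \|V-V_n\|_{\ell^{(d+1)/2}L^{d/2}}\to 0.
\end{equation*}
The only genuinely substantive step is the absorption of the $1/(d+1)$ derivative loss; it is the combination of the Tomas--Stein curvature input inside Lemma~\ref{lemma Ionescu--Schlag} with the compactness and smoothness of $S$ that allows the amalgam class $\ell^{(d+1)/2}L^{d/2}$, which is strictly larger than the pure Tomas--Stein space $L^{(d+1)/2}$, to serve as an admissible potential class.
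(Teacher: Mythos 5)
Your argument is correct, and it is genuinely different from the paper's. The paper uses the "truncation trick": since $\cv_S$ only depends on $\widehat V$ restricted to $B(0,2)$, one may replace $V$ by $\phi^\vee * V$ where $\phi$ is a bump function equal to $1$ there; the smoothing estimate \eqref{eq:pfsmoothingbound} then upgrades $\phi^\vee * V$ from $\ell^{(d+1)/2}L^{d/2}$ to $L^{(d+1)/2}$, after which the plain Tomas--Stein theorem applies. You instead exploit the spectral localization of $\F_S^*u$: its Fourier transform is a measure on $S$, so $\langle\nabla\rangle^{-1/(d+1)}$ acts on $\F_S^*u$ as the constant $(1+4\pi^2)^{-1/(2(d+1))}$ (not $\sqrt{2}^{-1/(d+1)}$, given the paper's convention $\langle\nabla\rangle = (\one-\Delta)^{1/2}$ with the $2\pi$ in the phase — but this is immaterial); this lets you invoke Lemma~\ref{lemma Ionescu--Schlag} at $s=2$ directly, together with the dual Tomas--Stein bound $\|\F_S^*u\|_{L^{\kappa'}}\lesssim\|u\|_{L^2(S)}$. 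In a sense the two routes are dual: the paper modifies $V$, you use a structural property of $\F_S^*u$. Your approach is arguably slicker for the operator-norm bound, but the paper's modification of $V$ is what plugs directly into the Schatten-class version \eqref{eq:tsfsexplicit} in the subsequent Lemma~\ref{lemma compactenss of V_S}, which your argument would not reproduce without extra work. One small misattribution in your commentary: Lemma~\ref{lemma Ionescu--Schlag} contains no curvature input at all — it is a pure amalgam/Hardy--Littlewood--Sobolev estimate; the Tomas--Stein curvature enters only through the bound $\|\F_S^*u\|_{L^{\kappa'}}\lesssim\|u\|_{L^2(S)}$ that you cite separately. For the approximation step, note that $\cv_S^V-\cv_S^{(n)}=\cv_S^{V-V_n}$ is most cleanly seen on the quadratic form $\langle u,\cv_S^V v\rangle=\int V\,\overline{\F_S^*u}\,\F_S^*v$, where the integral converges absolutely by Cauchy--Schwarz and your operator bound on $|V|^{1/2}\F_S^*$; this makes linearity in $V$ transparent and justifies the final display.
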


\begin{proof}
  We first assume that $V\in L^{\frac{d+1}{2}}(\R^d)$. It
  follows from the above discussion that $\mathcal{V}_S$ is
  the norm limit of the $\mathcal{V}_S^{(n)}$. To extend the
  definition to all $V\in \ell^{\frac{d+1}{2}}L^{\frac{d}{2}}$,
  we prove
  \begin{align}
    \label{enhanced Tomas--Stein}
    \|\mathcal{F}_{S}V\mathcal{F}_{S}^*\|
    \lesssim \|V\|_{\ell^{\frac{d+1}{2}}L^{\frac{d}{2}}}.
  \end{align}
  To this end we use the following observation.
  For $u\in L^2(S)$ and $\xi\in S$ we write
  \begin{align}
    \label{eq:trick}
    (\mathcal{V}_S^{(n)}u)(\xi)
    = \int_S (\widehat{V_n}\phi)(\xi-\eta) u(\eta)\,\rd \omega(\eta),
  \end{align}
  where $\phi$ is a bump function that equals $1$ in $B(0,2)$.
  This has the same effect as replacing $V_n$ by $\phi^{\vee}*V_n$.
  (Here,
  $\phi^\vee(x):=\int_{\R^d}\me{2\pi ix\cdot\xi}\phi(\xi)\,\rd\xi$
  denotes the inverse Fourier transform.) Since
  \eqref{enhanced Tomas--Stein} is equivalent to the bound
  \begin{align}
    \label{enhanced Tomas--Stein TT*}
    \|\sqrt{|V|}\mathcal{F}_{S}^*\mathcal{F}_{S}\sqrt{V}\|
    \lesssim \|V\|_{\ell^{\frac{d+1}{2}}L^{\frac{d}{2}}},
  \end{align}
  where $V^{1/2}=|V|^{1/2}\sgn(V)$ and $\sgn(V)$ is a unitary
  multiplication operator, we may assume without loss of generality
  that $V\geq 0$. Passing to a subsequence, we may also assume that
  $(V_n)_n$ converges to $V$ almost everywhere. By Fatou's lemma,
  for any $u\in L^2(S)$, 
  \begin{align}
    \label{eq:defvsaux}
    \begin{split}
      \langle \mathcal{F}_{S}^*u,V\mathcal{F}_{S}^*u\rangle
      & \leq \liminf_{n\to\infty}\langle \mathcal{F}_{S}^*u,V_n\mathcal{F}_{S}^*u\rangle
      \leq \liminf_{n\to\infty}\|(\phi^{\vee}\ast V_n)(F_S^*u)\|_{L^{\kappa}}\|u\|_2\\
      & \lesssim \|V\|_{\ell^{\frac{d+1}{2}}L^{\frac{d}{2}}}\|u\|_2^2\,,
    \end{split}
  \end{align}
  where the penultimate inequality follows from the Tomas--Stein
  theorem \eqref{eq:tsexplicit} and the last inequality from
  the bound
  \begin{align}
    \label{smoothing bound}
    \|(\phi^{\vee}\ast V)(F_S^*u)\|_{L^\kappa}
    \leq \|\phi^\vee\ast V\|_{L^{\frac{d+1}{2}}}\|F_S^*u\|_{L^{\kappa'}}
    \lesssim_{\phi} \|V\|_{\ell^{\frac{d+1}{2}}L^{\frac{d}{2}}}\|u\|_{L^2}
  \end{align}
  whose proof is similar to that of Lemma \ref{lemma Ionescu--Schlag}
  since the convolution kernel of $\phi^\vee$ is a Schwartz function, i.e.,
  in particular $|\phi^\vee(x)|\lesssim_N(1+|x|)^{-N}$ for any $N\in\N$.
  More precisely, for the same family $\{Q_s\}_{s\in\Z^d}$ of axis-parallel
  unit cubes tiling $\R^d$ that we used in the proof of Lemma
  \ref{lemma Ionescu--Schlag}, we have for any $N>0$,
  \begin{align}
    \label{eq:pfsmoothingbound}
    \begin{split}
      \|\phi^\vee\ast V\|_{\frac{d+1}{2}}^{\frac{d+1}{2}}
      & =  \|\sum_{s}\one_{Q_s}(\phi^\vee\ast V)\|_{\frac{d+1}{2}}^{\frac{d+1}{2}}
      = \sum_s \|\phi^\vee\ast (\sum_{s'}V\one_{Q_{s'}})\|_{L^{\frac{d+1}{2}}(Q_s)}^{\frac{d+1}{2}}\\
      & \leq \sum_s\left[\sum_{s'}\|\phi^\vee\ast(V\one_{Q_{s'}})\|_{L^{\frac{d+1}{2}}(Q_s)}\right]^{\frac{d+1}{2}}\\
      & \lesssim_N \sum_s\left[\sum_{s'}(1+|s-s'|)^{-N}\|V\|_{L^{\frac d2}(Q_{s'})}\right]^{\frac{d+1}{2}}
      \lesssim_N \|V\|_{\ell^{\frac{d+1}{2}}L^{\frac{d}{2}}}^{\frac{d+1}{2}}
    \end{split}
  \end{align}
  where we used Young's inequality in the last two estimates.
  This concludes the proof.
\end{proof}

\subsection{Compactness of $\mathcal{V}_S$}

We show that $\mathcal{V}_S$ belongs to a certain Schatten space
$\mathfrak{S}^{p}(L^2(S))$ and is thus a compact operator. In
particular, the spectrum of $\mathcal{V}_S$ is compact and countable
with accumulation point $0$. The nonzero elements are eigenvalues
of finite multiplicity. That $0$ is in the spectrum follows from
the fact that $L^2(S)$ is infinite-dimensional.

\begin{lemma}
  \label{lemma compactenss of V_S}
  Let $V\in \ell^{\frac{d+1}{2}}L^{\frac{d}{2}}$. Then
  $\mathcal{V}_S\in \mathfrak{S}^{d+1}(L^2(S))$ and
  \begin{align*}
    \|\mathcal{V}_S\|_{\mathfrak{S}^{d+1}}
    \lesssim \|V\|_{\ell^{\frac{d+1}{2}}L^{\frac{d}{2}}}.
  \end{align*}
\end{lemma}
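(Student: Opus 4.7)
The plan is to follow the proof of Proposition \ref{proposition def. V_Sw} but to replace the Tomas--Stein operator-norm bound \eqref{eq:tsexplicit} by its Frank--Sabin Schatten refinement \eqref{eq:tsfsexplicitgen}.

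First, using the convolution trick from that proof -- namely, choosing a Schwartz bump function $\phi$ that equals $1$ on the ball $B(0,2)\supset S-S$ and setting $\widetilde V:=\phi^\vee\ast V$ -- I would rewrite the operator as $\cv_S=\F_S\widetilde V\F_S^*$. The gain is that \eqref{smoothing bound} (i.e.\ the estimate carried out in detail in \eqref{eq:pfsmoothingbound}) already supplies $\widetilde V\in L^{(d+1)/2}(\R^d)$ together with the bound $\|\widetilde V\|_{L^{(d+1)/2}}\lesssim \|V\|_{\ell^{(d+1)/2}L^{d/2}}$, so the task is reduced to proving the claimed Schatten bound for $L^{(d+1)/2}$ potentials.

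Second, I would factor the multiplication operator $\widetilde V = \widetilde V^{1/2}\cdot |\widetilde V|^{1/2}$ (with $\widetilde V^{1/2}:=|\widetilde V|^{1/2}\sgn(\widetilde V)$) and invoke the cyclicity of Schatten norms, $\|AB\|_{\mathfrak{S}^p}=\|BA\|_{\mathfrak{S}^p}$, which is a standard consequence of polar decomposition and remains valid when $A$ and $B$ act between distinct Hilbert spaces. This gives
\[
\|\cv_S\|_{\mathfrak{S}^{d+1}(L^2(S))} = \||\widetilde V|^{1/2}\F_S^*\F_S \widetilde V^{1/2}\|_{\mathfrak{S}^{d+1}(L^2(\R^d))}.
\]
Applying the Frank--Sabin estimate \eqref{eq:tsfsexplicitgen} at the endpoint $q=(d+1)/2$, for which both $2q=d+1$ and the Schatten exponent $(d-1)q/(d-q)$ equal $d+1$, yields $\|\cv_S\|_{\mathfrak{S}^{d+1}}\lesssim \||\widetilde V|^{1/2}\|_{L^{d+1}}^2 = \|\widetilde V\|_{L^{(d+1)/2}}$, and combining with the previous step finishes the proof.

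Since every ingredient is already in place, I do not anticipate a real obstacle. The only minor issue is the cyclic-Schatten identity between $L^2(S)$ and $L^2(\R^d)$, which is standard. An equivalent approach would be to apply the bound first to a sequence of Schwartz approximants $V_n\to V$ in $\ell^{(d+1)/2}L^{d/2}$ (where $\widetilde V_n\in L^{(d+1)/2}$ holds trivially) and then pass to the limit using the uniform Schatten-$(d+1)$ bound together with reflexivity of $\mathfrak{S}^{d+1}$ and the operator-norm convergence already established in Proposition \ref{proposition def. V_Sw}; the direct route above, however, avoids this step.
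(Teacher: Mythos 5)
Your approach matches the paper's essentially verbatim: the convolution trick replacing $V$ by $\widetilde V=\phi^\vee\ast V$, the estimate \eqref{eq:pfsmoothingbound} giving $\|\widetilde V\|_{L^{(d+1)/2}}\lesssim\|V\|_{\ell^{(d+1)/2}L^{d/2}}$, and the Frank--Sabin bound \eqref{eq:tsfsexplicitgen} at the endpoint $q=(d+1)/2$. However, the step you use to pass from $L^2(\R^d)$ to $L^2(S)$ is incorrect as stated. There is no general ``cyclicity of Schatten norms'' $\|AB\|_{\mathfrak S^p}=\|BA\|_{\mathfrak S^p}$: polar decomposition gives cyclicity of the \emph{trace}, equivalently that $AB$ and $BA$ have the same nonzero \emph{eigenvalues} (with multiplicity), but eigenvalues and singular values differ for non-normal operators, and the singular values of $AB$ and $BA$ need not coincide. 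For instance, with $A=\bigl(\begin{smallmatrix}1&1\\0&0\end{smallmatrix}\bigr)$ and $B=\bigl(\begin{smallmatrix}1&0\\0&0\end{smallmatrix}\bigr)$ the top singular value of $AB$ is $1$ while that of $BA$ is $\sqrt2$, so equality of $\mathfrak S^p$ norms fails.

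The step is easily repaired, and the repaired version is what the paper tacitly uses in passing from \eqref{eq:tsfsexplicitgen} to \eqref{eq:tsfsexplicit}. Since $\widetilde V$ is real-valued, $\cv_S=AB$ (with $A=\F_S\widetilde V^{1/2}$, $B=|\widetilde V|^{1/2}\F_S^*$) is self-adjoint, so its singular values are $|\lambda_n(AB)|=|\lambda_n(BA)|$; Weyl's inequality $\sum_n|\lambda_n(BA)|^{d+1}\le\sum_n s_n(BA)^{d+1}$ then gives $\|\cv_S\|_{\mathfrak S^{d+1}}\le\|BA\|_{\mathfrak S^{d+1}}$, an inequality (sometimes strict) that suffices. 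Alternatively, split $\widetilde V=\widetilde V_+-\widetilde V_-$; each piece $\F_S\widetilde V_\pm\F_S^*=T_\pm T_\pm^*$ with $T_\pm=\F_S\widetilde V_\pm^{1/2}$ does have $\mathfrak S^{d+1}$ norm equal to that of $T_\pm^*T_\pm=\widetilde V_\pm^{1/2}\F_S^*\F_S\widetilde V_\pm^{1/2}$ (this is the genuine $TT^*$ versus $T^*T$ identity), to which \eqref{eq:tsfsexplicitgen} applies directly, and the triangle inequality finishes the argument up to a factor of $2$. With either fix your proof is correct and is the paper's.
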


\begin{proof}
  We recycle the proof of Proposition \ref{proposition def. V_Sw}
  and suppose $V\geq0$ without loss of generality again. We apply
  the Tomas--Stein theorem \eqref{eq:tsfsexplicit} for trace ideals
  with $V$ replaced by $\phi^\vee\ast V$ where $\phi$ is the same
  bump function as in that proof. Note that, by \eqref{eq:trick},
  this replacement does not affect the value of
  $\|\mathcal{V}_S\|_{\mathfrak{S}^{d+1}}$ since the eigenvalues
  remain the same. Thus, by \eqref{eq:pfsmoothingbound},
  $\|\mathcal{V}_S\|_{\mathfrak{S}^{d+1}}\lesssim \|\phi^\vee\ast V\|_{L^{(d+1)/2}} \lesssim \|V\|_{\ell^{(d+1)/2}L^{d/2}}$.
\end{proof}

\subsection{Birman--Schwinger operator}

As in \cite{MR2365659,MR2643024}, our proof is based on the
well-known Birman--Schwinger principle. This is the assertion
that, if 
\begin{align}
  \label{BS(e)}
  BS(e):=\sqrt{|V|}(T+e)^{-1}\sqrt{V}
\end{align}
with $e>0$, then 
\begin{align}
  \label{eq:bsprinc}
  -e\in \mathrm{spec}\left(H_\lambda\right)\iff 
    \frac{1}{\lambda}\in \mathrm{spec}\left(BS(e)\right).
\end{align}	
Here $\sqrt{V}:=\sgn(V)\sqrt{|V|}$ and $T=|\Delta+1|$. Thus,
\eqref{eq. weak coupling limit} would follow from
\begin{align}
  \label{To show}
  \ln(1/e)a_S^j(1+o(1))\in \mathrm{spec}(BS(e))
\end{align}
for every eigenvalue $a_S^j>0$ of $\mathcal{V}_S$. We note
that since $V$ and the symbol of $(T+e)^{-1}$ both vanish at
infinity, $BS(e)$ is a compact operator, see, e.g.,
\cite[Chapter 4]{MR2154153}. Moreover, we have the following
operator norm bound.

\begin{lemma}
  \label{lemma bound BS(e)}
  Let $V\in \ell^{\frac{d+1}{2}}L^{\frac{d}{2}}$. Then 
  \begin{align*}
    \|BS(e)\|\lesssim \ln(1/e)\|V\|_{\ell^{\frac{d+1}{2}}L^{\frac{d}{2}}}
  \end{align*}
  for all\, $e\in (0,1/2)$.
\end{lemma}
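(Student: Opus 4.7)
The plan is to split the resolvent of $T=|\Delta+1|$ into a part concentrated near the Fermi surface $\{|\xi|=1\}$, which produces the $\ln(1/e)$ divergence, and a part supported away from it, which remains uniformly bounded in $e$. Fix a smooth cutoff $\chi\in C_c^\infty(\R)$ with $\chi\equiv 1$ on $[-1,1]$ and $\supp\chi\subset[-2,2]$, and write
\[
  (T+e)^{-1}=R_{\low}(e)+R_{\high}(e),\qquad R_{\low}(e):=(T+e)^{-1}\chi(T),\quad R_{\high}(e):=(T+e)^{-1}(1-\chi(T)).
\]

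For the low part, I would invoke the coarea formula. Since $T(\xi)=||\xi|^2-1|$, for each $t\in(0,1)$ the level set $\{T=t\}$ is a disjoint union of two spheres $S_t^\pm=\{|\xi|=\sqrt{1\pm t}\}$ with radii in the compact interval $[\sqrt{1/2},\sqrt{3/2}]$, on which $|\nabla T|=2|\xi|$ is bounded above and below; for $t\in[1,2]$ only the outer sphere survives. Writing
\[
  R_{\low}(e)=\int_0^2 \frac{\chi(t)}{t+e}\,\mathcal{F}^*\bigl(|\nabla T|^{-1}\rd\sigma_{\{T=t\}}\bigr)\mathcal{F}\,\rd t,
\]
I would repeat the proof of Proposition \ref{proposition def. V_Sw} on each $S_t^\pm$ in place of the unit sphere to obtain
\[
  \bigl\|\sqrt{|V|}\,\mathcal{F}_{S_t^\pm}^*\mathcal{F}_{S_t^\pm}\sqrt{V}\bigr\|\lesssim \|V\|_{\ell^{\frac{d+1}{2}}L^{\frac{d}{2}}}
\]
uniformly in $t\in[0,2]$. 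Uniformity holds because the constants in \eqref{eq:tsexplicit} and \eqref{eq:pfsmoothingbound} depend only on the Gaussian curvature and surface area of the sphere, both uniformly controlled for radii in a fixed compact subset of $(0,\infty)$. The remaining integral $\int_0^2(t+e)^{-1}\,\rd t\lesssim\ln(1/e)$ then produces the desired logarithmic bound on $\sqrt{|V|}R_{\low}(e)\sqrt{V}$.

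For the high part, set $\alpha=1/(d+1)$. The symbol of $\langle\nabla\rangle^{\alpha}R_{\high}(e)\langle\nabla\rangle^{\alpha}$ factors as $\langle\xi\rangle^{-2d/(d+1)}\,m(\xi,e)$, where
\[
  m(\xi,e)=\langle\xi\rangle^{2}(T(\xi)+e)^{-1}(1-\chi(T(\xi)))
\]
is supported in $\{|\xi|\geq\sqrt{2}\}$, is bounded, and satisfies H\"ormander--Mihlin estimates uniformly in $e\in(0,1/2)$ since $T+e\sim|\xi|^2$ as $|\xi|\to\infty$. Hence $m(-\I\nabla)$ is $L^p$-bounded for every $p\in(1,\infty)$, and since $\langle\nabla\rangle^{-2d/(d+1)}$ maps $L^\kappa\to L^{\kappa'}$ by Hardy--Littlewood--Sobolev (recall $1/\kappa-1/\kappa'=2/(d+1)$), so does $\langle\nabla\rangle^{\alpha}R_{\high}(e)\langle\nabla\rangle^{\alpha}$, uniformly in $e$. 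Applying Lemma \ref{lemma Ionescu--Schlag} with $s=2$ in its dual form, $\|\langle\nabla\rangle^{-\alpha}\sqrt{|V|}\|_{L^2\to L^\kappa}\lesssim\|V\|^{1/2}_{\ell^{(d+1)/2}L^{d/2}}$, and sandwiching the $L^\kappa\to L^{\kappa'}$ boundedness of $\langle\nabla\rangle^{\alpha}R_{\high}(e)\langle\nabla\rangle^{\alpha}$ between these on either side yields $\|\sqrt{|V|}R_{\high}(e)\sqrt{V}\|_{L^2\to L^2}\lesssim\|V\|_{\ell^{(d+1)/2}L^{d/2}}$ uniformly in $e$.

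Adding the two bounds gives the claim. The main technical hurdle is the uniformity assertion in the first step: one must verify that the constants in the Tomas--Stein theorem and in the amalgamated bound of Proposition \ref{proposition def. V_Sw} are stable under varying the sphere's radius across a bounded range. This reduces to tracking the scaling of a few constants (or to an elementary rescaling argument), but it is the only place where the argument must go beyond the unit-sphere case treated in Section \ref{s:prelims}.
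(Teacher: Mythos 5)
Your proposal is correct and takes essentially the same route as the paper: it splits $BS(e)$ into a high-frequency piece, controlled uniformly in $e$ via Lemma \ref{lemma Ionescu--Schlag} with $s=2$ (the paper goes through the Sobolev embedding $H^{d/(d+1)}\subset L^{\kappa'}$, you carry an explicit H\"ormander--Mihlin multiplier; these are cosmetic variants of the same $TT^*$ bookkeeping), and a low-frequency piece written via the coarea/spectral-measure representation \eqref{spectral measure 2}, on which the locally-uniform-in-$t$ Tomas--Stein bound \eqref{eq:tslocallyuniform} applied to the scaled spheres $S_t^\pm$ produces the $\ln(1/e)$ factor after integrating $(t+e)^{-1}$. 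The only inessential difference is your choice of cutting at $T\leq 2$ rather than at $T\leq\tau$ with $\tau<1$, which forces you to keep track of the disappearing inner sphere; the paper's smaller cutoff sidesteps this, and also lets it reuse the bounds \eqref{BShigh bound} and \eqref{BSlow bound} verbatim, which is exactly what its one-line proof of Lemma \ref{lemma bound BS(e)} does.
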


\begin{proof}
  The proof follows from \eqref{BShigh bound} and
  \eqref{BSlow bound} below.
\end{proof}

\section{Proof of Theorem \ref{thm. asymptotics}}
\label{proofmainresult}

\subsection{Outline of the proof}
We briefly sketch the strategy of the proof of \eqref{To show}.
We first split the Birman--Schwinger operator into a sum of high
and low energy pieces
\begin{align*}
  BS(e)=BS^{\rm low}(e)+BS^{\rm high}(e).
\end{align*}
More precisely, we fix $\chi\in C_c^{\infty}(\R^d)$ such that
$0\leq \chi\leq 1$ and $\chi\equiv 1$ on the unit ball. We also
fix $0<\tau<1$ and set
\begin{align*}
  BS^{\rm low}(e)=\sqrt{|V|}\chi(T/\tau)(T+e)^{-1}\sqrt{V}.
\end{align*}
As we will see in \eqref{BShigh bound}, the high energy piece
is harmless. The low energy piece is split further into a
singular and a regular part,
\begin{align*}
  BS^{\rm low}(e)=BS^{\rm low}_{\rm sing}(e)+BS^{\rm low}_{\rm reg}(e),
\end{align*} 
where the singular part is defined as
\begin{align}
  \label{BSlowsing def.}
  BS^{\rm low}_{\rm sing}(e)
  = \ln\left(1+\tau/e\right)\sqrt{|V|}\mathcal{F}_{S}^*\mathcal{F}_{S}\sqrt{V}.
\end{align}
Note that $\sqrt{|V|}\mathcal{F}_{S}^*\mathcal{F}_{S}\sqrt{V}$ is
isospectral to $\mathcal{V}_S$. As already mentioned in the
introduction and the previous section, Theorem
\ref{thm. asymptotics} would follow from standard perturbation
theory if we could show the key bound
\begin{align}
  \lambda\|BS^{\rm low}_{\rm reg}(e)\|&=o(1)\label{Key estimate}
\end{align}
for $V\in \ell^{\frac{d+1}{2}}L^{\frac{d}{2}}$, as long as
$\lambda\ln(1/e)$ remains uniformly bounded from above and below. 

\subsection{Bound for $BS^{\rm high}(e)$}

Here we prove that
\begin{align}
  \label{BShigh bound}
  \|BS^{\rm high}(e)\|\lesssim_{\tau} \|V\|_{\ell^{\frac{d+1}{2}}L^{\frac{d}{2}}}
\end{align}

\begin{proof}
  By a trivial $L^2$-bound we have
  \begin{align}
    \label{trivial L2 bound BShigh}
    \|BS^{\rm high}(e)\| \lesssim_{\tau} \||V|^{1/2}\langle\nabla\rangle^{-1}\|^2\,.
  \end{align}
  The $TT^*$ version of Lemma \ref{lemma Ionescu--Schlag}
  for $s=2$,
  \begin{align*}
    \|\langle\nabla\rangle^{-\frac{1}{d+1}}V\langle\nabla\rangle^{-\frac{1}{d+1}}\phi\|_{L^{\kappa}}
    \lesssim \|V\|_{\ell^{\frac{d+1}{2}}L^{\frac{d}{2}}}\|\phi\|_{L^{\kappa'}},
  \end{align*}
  together with Sobolev embedding $H^{\frac{d}{d+1}}(\R^d)\subset L^{\kappa'}(\R^d)$ yields
  \begin{align*}
    \|\langle\nabla\rangle^{-1}V\langle\nabla\rangle^{-1}\phi\|_{L^{2}}
    \lesssim \|V\|_{\ell^{\frac{d+1}{2}}L^{\frac{d}{2}}}\|\phi\|_{L^2}.
  \end{align*} 
  Combining the last inequality with \eqref{trivial L2 bound BShigh}
  yields the claim.
\end{proof}

\subsection{Bound for $BS^{\rm low}(e)$}

The Fermi surface of $T$ at energy $t\in (0,\tau]$ consists of
two connected components $S_t^{\pm}=(1\pm t)^{1/2}S$. The spectral
measure $E_{T}$ of $T$ is given by
\begin{align}
  \label{spectral measure 2}
  \rd E_{T}(t) = \sum_{\pm}\mathcal{F}_{S_t^{\pm}}^*\mathcal{F}_{S_t^{\pm}}\frac{\rd t}{2\sqrt{1\pm t}}.
\end{align}
in the sense of Schwartz kernels, see, e.g., \cite[Chapter XIV]{MR705278}.
By the spectral theorem, \eqref{spectral measure 2} implies that 
\begin{align}
  \label{BSlow spectral measure rep.}
  BS^{\rm low}(e)
  = \sum_{\pm}\int_0^{\tau}\frac{\sqrt{|V|}\mathcal{F}_{S_t^{\pm}}^*\mathcal{F}_{S_t^{\pm}}\sqrt{V}}{t+e}\,\frac{\rd t}{2\sqrt{1\pm t}}.
\end{align}
Together with the proof of Lemma \ref{lemma compactenss of V_S} this yields
\begin{align}
  \label{BSlow bound}
  \|BS^{\rm low}(e)\|_{\mathfrak{S}^{d+1}}
  \lesssim_\tau \ln(1/e) \|V\|_{\ell^{\frac{d+1}{2}}L^{\frac{d}{2}}}.
\end{align}

\subsection{Proof of the key bound \eqref{Key estimate}}\label{pfkeybound}

From \eqref{BSlow spectral measure rep.} and the definition of
$BS^{\rm low}_{\rm sing}(e)$ (see \eqref{BSlowsing def.}) we infer
that
\begin{align}
  \label{BSlowreg}
  BS^{\rm low}_{\rm reg}(e)
  = \sum_{\pm}\int_0^{\tau}\frac{\sqrt{|V|}(\mathcal{F}_{S_t^{\pm}}^*\mathcal{F}_{S_t^{\pm}}-\sqrt{1\pm t}\,\mathcal{F}^*_{S}\mathcal{F}_{S})\sqrt{V}}{t+e}\,\frac{\rd t}{2\sqrt{1\pm t}}.
\end{align}
If $V$ were a strictly positive Schwartz function, then by the
Sobolev trace theorem, the map
$t\mapsto\sqrt{V}\mathcal{F}_{S_t^{\pm}}^*\mathcal{F}_{S_t^{\pm}}\sqrt{V}$
would be Lipschitz continuous in operator norm, see, e.g.,\
\cite[Chapter 1, Proposition 6.1]{MR2598115},
\cite[Theorem IX.40]{MR0493420}. Hence, we would obtain a
stronger bound than \eqref{Key estimate} in this case. Using
\eqref{spectral measure 2} and observing that
\begin{align*}
  \mathcal{F}^*_{\mu S}\mathcal{F}_{\mu S}(x,y)
  = \mu^{d-1}\int_S\e^{2\pi \I\mu(x-y)\cdot\xi}\rd\omega(\xi)
\end{align*}   
for $\mu>0$, it is not hard to see that Lipschitz continuity even
holds in the Hilbert--Schmidt norm. Since
$\mathfrak{S}^2\subseteq\mathfrak{S}^{d+1}$ we conclude that, if
$V$ were Schwartz, we would get
\begin{align}
  \label{Key estimate Schatten norm}
  \lambda\|BS^{\rm low}_{\rm reg}(e)\|_{\mathfrak{S}^{d+1}}=o(1).
\end{align}
We now prove that \eqref{Key estimate Schatten norm} (and hence
also \eqref{Key estimate}) holds for the potentials considered
in Theorem \ref{thm. asymptotics}.

\begin{lemma}
  \label{proofkeybound}
  If $V\in \ell^{\frac{d+1}{2}}L^{\frac{d}{2}}$, then
  \eqref{Key estimate Schatten norm} holds as $\lambda\to 0$
  and $\lambda\ln(1/e)$ remains bounded.
\end{lemma}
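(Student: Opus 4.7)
The plan is to reduce the lemma to its Schwartz case (already sketched in the paragraph preceding the statement) by a density argument, exploiting that $\lambda\ln(1/e)$ is bounded and that Lemma \ref{lemma compactenss of V_S} is uniform in $e$. First I would choose Schwartz potentials $V_n$ with $V_n\to V$ in $\ell^{\frac{d+1}{2}}L^{\frac{d}{2}}$ and split
\begin{align*}
BS^{\rm low}_{\rm reg}(e)[V] = BS^{\rm low}_{\rm reg}(e)[V_n] + R_n(e).
\end{align*}
For the Schwartz term, Lipschitz continuity of $t\mapsto \sqrt{V_n}\mathcal{F}_{S_t^\pm}^*\mathcal{F}_{S_t^\pm}\sqrt{V_n}$ in $\mathfrak{S}^2\subset\mathfrak{S}^{d+1}$ turns the integrand in \eqref{BSlowreg} into one of size $O(t/(t+e))$, hence $\|BS^{\rm low}_{\rm reg}(e)[V_n]\|_{\mathfrak{S}^{d+1}}\leq C_n$ uniformly in small $e$, so $\lambda\|BS^{\rm low}_{\rm reg}(e)[V_n]\|_{\mathfrak{S}^{d+1}}=o(1)$ as $\lambda\to 0$ for each fixed $n$.

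The bulk of the work is a uniform (in $\lambda$ and $e$) bound on the remainder. Setting $E_t^\pm := \mathcal{F}_{S_t^\pm}^*\mathcal{F}_{S_t^\pm}-\sqrt{1\pm t}\,\mathcal{F}_S^*\mathcal{F}_S$, the integrand of $R_n(e)$ decomposes as
\begin{align*}
(\sqrt{|V|}-\sqrt{|V_n|})\,E_t^\pm\sqrt{V} + \sqrt{|V_n|}\,E_t^\pm(\sqrt{V}-\sqrt{V_n}),
\end{align*}
and the key input is the bilinear extension
\begin{align*}
\|W_1\mathcal{F}_{S_t^\pm}^*\mathcal{F}_{S_t^\pm}W_2\|_{\mathfrak{S}^{d+1}}
\lesssim_\tau \||W_1|^2\|_{\ell^{\frac{d+1}{2}}L^{\frac{d}{2}}}^{1/2}\,\||W_2|^2\|_{\ell^{\frac{d+1}{2}}L^{\frac{d}{2}}}^{1/2}, \qquad t\in[0,\tau],
\end{align*}
of Lemma \ref{lemma compactenss of V_S}. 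I would derive this from the diagonal lemma by the Schatten H\"older inequality $\|A^*B\|_{\mathfrak{S}^{d+1}}\leq \|A\|_{\mathfrak{S}^{2(d+1)}}\|B\|_{\mathfrak{S}^{2(d+1)}}$ applied to $A=\mathcal{F}_{S_t^\pm}W_1^*$ and $B=\mathcal{F}_{S_t^\pm}W_2$, noting that each $\|A\|_{\mathfrak{S}^{2(d+1)}}^2$ is a diagonal norm controlled by Lemma \ref{lemma compactenss of V_S}. Applied in turn to the four pairs appearing in the decomposition and combined with the pointwise bounds $|\sqrt{|V|}-\sqrt{|V_n|}|^2\leq |V-V_n|$ and $|\sqrt V-\sqrt{V_n}|^2\leq 2|V-V_n|$, together with $\int_0^\tau (t+e)^{-1}\,\rd t\lesssim \ln(1/e)$, this yields
\begin{align*}
\lambda\|R_n(e)\|_{\mathfrak{S}^{d+1}} \lesssim (\lambda\ln(1/e))\,\|V-V_n\|_{\ell^{\frac{d+1}{2}}L^{\frac{d}{2}}}^{1/2}\big(\|V\|_{\ell^{\frac{d+1}{2}}L^{\frac{d}{2}}}+\|V_n\|_{\ell^{\frac{d+1}{2}}L^{\frac{d}{2}}}\big)^{1/2}.
\end{align*}

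Since $\lambda\ln(1/e)$ is bounded by hypothesis, the right-hand side tends to $0$ as $n\to\infty$ uniformly in small $\lambda$. A standard $\varepsilon/2$ argument (fix $n$ large so the error is $<\varepsilon$, then shrink $\lambda$ so the Schwartz term is $<\varepsilon$) completes the proof. The main obstacle is the bilinear Schatten estimate above: the map $V\mapsto BS^{\rm low}_{\rm reg}(e)[V]$ is genuinely nonlinear, so the naive route of applying Lemma \ref{lemma compactenss of V_S} directly to $V-V_n$ is unavailable, and one must first polarize the diagonal bound. Fortunately, the polarization reduces to a one-line Schatten H\"older argument once Lemma \ref{lemma compactenss of V_S} is in place.
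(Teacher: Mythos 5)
Your proof is correct and follows essentially the same route as the paper's: a density argument combining the uniform-in-$t$ Schatten bound (polarized to a bilinear form) with the Lipschitz-in-$\mathfrak{S}^2$ bound for the smooth approximants. The only cosmetic difference is that the paper approximates $V^{1/2}$ directly in $\ell^{d+1}L^d$, whereas you approximate $V$ in $\ell^{\frac{d+1}{2}}L^{\frac{d}{2}}$ and pass to square roots via the elementary pointwise inequalities — both hinge on the same polarization step, which the paper leaves implicit and you usefully spell out.
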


\begin{proof}
  Without loss of generality we may again assume $V\geq 0$. Let
  $V_{n}^{1/2}$ be strictly positive Schwartz functions converging
  to $V^{1/2}$ in $\ell^{d+1}L^{d}$. We use that the bound
  \eqref{enhanced Tomas--Stein TT*} is locally uniform in $t$ and
  can be upgraded to a Schatten bound as in Lemma
  \ref{lemma compactenss of V_S}. That is, for fixed $\tau$, we
  have the bound
  \begin{align}
    \label{eq:tslocallyuniform}
    \sup_{t\in [0,\tau]}\|\sqrt{V}\mathcal{F}_{S_t}^*\mathcal{F}_{S_t}\sqrt{V}\|_{\mathfrak{S}^{d+1}}
    \lesssim_\tau \|V\|_{\ell^{\frac{d+1}{2}}L^{\frac{d}{2}}}.
  \end{align}
  Since we have already proved \eqref{Key estimate Schatten norm}
  for such $V_n$, we may thus estimate
  \begin{align*}
    \lambda\|BS^{\rm low}_{\rm reg}(e)\|_{\mathfrak{S}^{d+1}}
    \lesssim_\tau \lambda\ln(1/e) \|\sqrt{V}-\sqrt{V_n}\|_{\ell^{d+1}L^{d}}\|\sqrt{V}\|_{\ell^{d+1}L^{d}} + o(1)\,.
  \end{align*}
  Since $\lambda\ln(1/e)$ is bounded, \eqref{Key estimate Schatten norm}
  follows upon letting $n\to\infty$.
\end{proof}


\section{Further results}

The purpose of this subsection is fourfold.
First we outline how our main result, Theorem \ref{thm. asymptotics},
can be generalized to treat operators whose kinetic energy vanishes
on other smooth, curved surfaces.
Second, we provide an alternative proof (to that of
\cite{MR2365659,MR2643024}), based on Riesz projections, that weakly
coupled bound states of $H_{\lambda}=|\Delta+1|-\lambda V$ actually
exist, provided $\mathcal{V}_S$ has at least one positive eigenvalue.
This follows from standard perturbation theory
\cite[Sections IV.3.4-5]{Ka}, but the argument is robust enough to
handle complex-valued potentials. In fact, we do not know how the
arguments in \cite{MR2365659,MR2643024} could be adapted to treat
such potentials as the Birman--Schwinger operator cannot be
transformed to a self-adjoint operator anymore.
Third, we give two examples of (real-valued) potential classes for
which the operator $\mathcal{V}_S$ does have at least one positive
eigenvalue. In both examples the potentials are neither assumed to
be integrable, nor positive.
Fourth, we derive the second order in the asymptotic expansion of
$e_j(\lambda)$ in Theorem \ref{secondorder} for
$V\in L^{\frac{d+1}{2}-\epsilon}$ and $\epsilon\in(0,1/2]$.

\subsection{Generalization to other kinetic energies}
\label{generalkinen}

As the Tomas--Stein theorem holds for arbitrary compact, smooth,
curved surfaces (cf. \cite[Theorem 3]{Stein1986} and
\cite[Theorem 2]{MR3730931}) it is not surprising that Theorem
\ref{thm. asymptotics} continues to hold for more general symbols
$T(\xi)$. In what follows, we assume that $T(\xi)$ satisfies the
geometric and analytic assumptions stated in \cite{MR2643024}
-- that we recall in a moment -- and a certain curvature assumption.
First, we assume that $T(\xi)$ attains its minimum, which we set to
zero for convenience, on a manifold
\begin{align}
  S = \{\xi\in\R^d:T(\xi)=0\}
\end{align}
of codimension one. Next, we assume that $S$ consists of finitely
many connected and compact components and that there exists a
$\delta>0$ and a compact neighborhood $\Omega\subseteq\R^d$ of $S$
containing $S$ with the property that the distance of any point in
$S$ to the complement of $\Omega$ is at least $\delta$.

We now make some analytic assumptions on the symbol $T(\xi)$.
We assume that
\begin{enumerate}
\item there exists a measurable, locally bounded function
  $P\in C^\infty(\Omega)$
  such that $T(\xi)=|P(\xi)|$,
\item $|\nabla P(\xi)|>0$ for all $\xi\in\Omega$, and
\item there exist constants $C_1,C_2>0$ and $s\in[2d/(d+1),d)$ such
  that $T(\xi)\geq C_1|\xi|^s+C_2$ for $\xi\in\R^d\setminus\Omega$.
\end{enumerate}
Since $S$ is the zero set of the function $P\in C^\infty(\Omega)$
and $\nabla P\neq0$, it is a compact $C^\infty$ submanifold of
codimension one. Finally, we also assume that $S$ has everywhere
non-zero Gaussian curvature\footnote{The precise definition of
  Gaussian curvature can be found, e.g., in
  \cite[p. 321-322]{Stein1986}.}. Note that this assumption was
not needed in \cite{MR2643024}.

Next, we redefine the singular part of the Birman--Schwinger
operator \eqref{eq:lswbs}, namely
\begin{align}
  \label{eq:lswbsgen}
  (\mathcal{V}_Su)(\xi)
  = \int_S \widehat{V}(\xi-\eta) u(\eta)\,\rd\sigma_S(\eta)\,,
  \quad u\in L^2(S,\rd\sigma_S)\,.
\end{align}
Here, $\rd\sigma_S(\xi):=|\nabla P(\xi)|^{-1}\rd\omega(\xi)$ where
$\rd\omega$ denotes the euclidean (Lebesgue) surface measure on $S$.
In particular, the elementary volume $\rd\xi$ in $\R^d$ satisfies
$\rd\xi=\rd r\,\rd\sigma_S(\xi)$ where $\rd r$ is the Lebesgue measure
on $\R$. In what follows, we abbreviate the notation and write $L^2(S)$
instead of $L^2(S,\rd\sigma_S)$.

The new definition \eqref{eq:lswbsgen} of $\mathcal{V}_S$ now does
not differ anymore from that of \cite{MR2365659,MR2643024} by a factor
of $2$. Similarly as before, \eqref{eq:lswbsgen} can be written as
\begin{align}
  \label{eq:lswbs2gen}
  \cv_S = \mathcal{F}_{S}V\mathcal{F}_{S}^*\,,
\end{align}
where $\mathcal{F}_{S}:\cs(\R^d)\to L^2(S)$, $\phi\mapsto\widehat{\phi}|_S$
is the Fourier restriction operator and its adjoint, the Fourier
extension operator $\mathcal{F}_S^*:L^2(S)\to \cs'(\R^d)$, is now given by
\begin{align}
  (\mathcal{F}_S^* u)(x)
  = \int_S u(\xi) \me{2\pi ix\cdot\xi} \,\rd\sigma_S(\xi)\,.
\end{align}
Recall that the Tomas--Stein theorem asserts that $\mathcal{F}_S$ is
a $L^p(\R^d)\to L^2(S)$ bounded operator for all $p\in[1,\kappa]$. In
particular, the extension to trace ideals \cite[Theorem 2]{MR3730931}
continues to hold, i.e.,
$\|\cv_S\|_{\mathfrak{S}^{d+1}}\lesssim \|V\|_{L^{(d+1)/2}}$. By Sobolev
embedding and $s<d$, the operator $T(-i\nabla)-\lambda V$ can be
meaningfully defined if $V\in L^{d/s}(\R^d)$. By the assumption
$s\geq 2d/(d+1)$, we have $(d+1)/2\geq d/s$.

\medskip
We will now outline the necessary changes in the proof of Theorem
\ref{thm. asymptotics} for $T$ as above and
$V\in \ell^{\frac{d+1}{2}}L^{\frac{d}{s}}$. First, the corresponding analogs
of Proposition \ref{proposition def. V_Sw} and Lemma
\ref{lemma compactenss of V_S} follow immediately from the Tomas--Stein
theorem that we just discussed, and the analog of \eqref{eq:pfsmoothingbound}
(using $(d+1)/2\geq d/s$).

Next, the splitting of $BS(e)$ is the same as in Section
\ref{proofmainresult}. There we have the analogous bound
\eqref{BShigh bound}, i.e.,
$\|BS^{\mathrm{high}}(e)\|\lesssim_\tau\|V\|_{\ell^{(d+1)/2}L^{d/s}}$ by
the same arguments of that proof (cf. Lemma \ref{lemma Ionescu--Schlag}).
Next the Fermi surface of $T$ at energy $t\in(0,\tau]$ again consists of
two connected components $S_t^\pm$. Using the above definition of
$\mathcal{F}_{S_t^\pm}$, we observe that the spectral measure $E_T$ of
$T$ is now given by
\begin{align*}
  \rd E_T(t) = \sum_\pm \mathcal{F}_{S_t^\pm}^*\mathcal{F}_{S_t^\pm}\, \rd t\,.
\end{align*}
Thus, by the spectral theorem,
\begin{align*}
  BS^{\mathrm{low}}(e)
  = \sum_\pm\int_0^\tau \frac{\sqrt{|V|}\mathcal{F}^*_{S_t^\pm}\mathcal{F}_{S_t^\pm}\sqrt{V}}{t+e}\,\rd t
\end{align*}
and by the proof of the analog of Lemma \ref{lemma compactenss of V_S}
(i.e., the Tomas--Stein theorem and the analog of
\eqref{eq:pfsmoothingbound}), we again obtain
$\|BS^{\mathrm{low}}(e)\|_{\mathfrak{S}^{d+1}}\lesssim_\tau \ln(1/e)\|V\|_{\ell^{(d+1)/2}L^{d/s}}$.
Thus, we are left to prove the analog of the key bound
\eqref{Key estimate}. But this just follows from the proof
of Lemma \ref{proofkeybound} and the fact that the Tomas--Stein
estimate \eqref{eq:tslocallyuniform} is valid locally uniform in
$t$ for surfaces $S_t$ that we discuss here. In turn, by
\cite[Theorem 1.1]{MR2831841}, this is a consequence of the
following assertion.

\begin{proposition}
  \label{tsuniform}
  Assume $T(\xi)$ satisfies the assumptions stated at the
  beginning of this section. Then for fixed $\tau>0$, one has
  $\sup_{t\in[0,\tau]}|(\rd\sigma_{S_t^\pm})^\vee(x)|\lesssim_\tau(1+|x|)^{-\frac{d-1}{2}}$.
\end{proposition}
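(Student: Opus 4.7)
The plan is to establish Proposition \ref{tsuniform} by applying the standard method of stationary phase to
\begin{align*}
  (\rd\sigma_{S_t^\pm})^\vee(x) = \int_{S_t^\pm} \me{2\pi i x\cdot\xi}\,\rd\sigma_{S_t^\pm}(\xi)\,,
\end{align*}
tracking the constants so that they remain uniform in $t\in[0,\tau]$. Since $P\in C^\infty(\Omega)$ with $|\nabla P|>0$ on $\Omega$ and $S$ is compact, the implicit function theorem applied to $P$ produces, for $\tau$ possibly shrunk (which is permissible as $\tau$ is a free small parameter in Section \ref{proofmainresult}), a smooth family of parametrizations $h_{k,t}^\pm:U_k\to\R$, $k=1,\dots,K$, realizing each $S_t^\pm$ locally as the graph of $h_{k,t}^\pm$ over the tangent hyperplane to $S$ at some base point $\xi_k\in S$; these graphing functions depend smoothly on $t$ and reduce to the corresponding parametrization $h_k^\pm$ of $S$ at $t=0$.

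I would then fix a partition of unity $\{\psi_k\}_{k=1}^K$ on $S$ subordinate to these charts and transport it to a partition of unity of each $S_t^\pm$. This reduces the matter to a finite sum of oscillatory integrals
\begin{align*}
  I_{k,t}^\pm(x) = \int_{U_k}\me{2\pi i(x'\cdot\xi'+x_d h_{k,t}^\pm(\xi'))}\,a_{k,t}^\pm(\xi')\,\rd\xi'\,,
\end{align*}
where the amplitudes $a_{k,t}^\pm\in C_c^\infty(U_k)$ absorb the Jacobian $|\nabla P(\xi)|^{-1}$ that enters $\rd\sigma_{S_t^\pm}$. The usual stationary phase argument applies to each $I_{k,t}^\pm(x)$: stationary points of the phase correspond to points where the outer normal of $S_t^\pm$ is parallel to $x$, and the Hessian of the phase there is, up to a factor of $|x|$, the second fundamental form of $S_t^\pm$, which is non-degenerate precisely when the Gaussian curvature does not vanish. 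This yields $|I_{k,t}^\pm(x)|\lesssim(1+|x|)^{-(d-1)/2}$ with implicit constant depending only on finitely many derivatives of $h_{k,t}^\pm$ and $a_{k,t}^\pm$ and on a lower bound for the Gaussian curvature of $S_t^\pm$; see, e.g., \cite[Ch.\ VIII]{Stein1993}. Summation in $k$ gives the claim.

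The main obstacle, and the only real content of the proof, is to verify that these constants are uniform in $t\in[0,\tau]$. This reduces to three continuity statements: the graphing functions $h_{k,t}^\pm$ and their derivatives depend continuously on $t$ by the implicit function theorem applied to $P$; the amplitudes $a_{k,t}^\pm$, built from $\psi_k$ and $|\nabla P|^{-1}$, likewise depend continuously on $t$; and the Gaussian curvature of $S_t^\pm$, being a smooth function of the first and second derivatives of $h_{k,t}^\pm$, depends continuously on $t$ and, by the standing non-vanishing curvature hypothesis on $S=S_0$, is bounded away from zero at $t=0$ uniformly on the compact surface $S$. Compactness of $[0,\tau]\times S$ then upgrades these pointwise continuities to uniform bounds after one final (harmless) shrinking of $\tau$, and the proposition follows.
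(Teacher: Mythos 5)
Your proposal is correct and uses the same underlying tool (stationary phase with constants uniform in $t$), but organizes the reduction differently from the paper. The paper pulls the integral over $S_t^\pm$ back to the fixed surface $S$ via the normal flow $\psi(t)$ generated by $j=\nabla P/|\nabla P|^2$, computes the Radon--Nikod\'ym derivative $\tau(t,\xi)$ of $\rd\sigma_{S_t^\pm}$ with respect to $\rd\sigma_S$ following \cite[Ch.\,2, Lemma 1.9]{MR2598115}, and then invokes the Hlawka--Herz--Stein asymptotics for $\widehat{\rd\sigma_S}$ with a $t$-dependent amplitude $F_{t,x}$, using $|F_{t,x}|\lesssim_\tau 1$. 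You instead parametrize each $S_t^\pm$ locally as a graph $h_{k,t}^\pm$ over tangent hyperplanes of $S$, apply stationary phase chart-by-chart, and obtain uniformity from smooth dependence of $h_{k,t}^\pm$, the amplitudes, and the curvature on $t$, combined with compactness of $[0,\tau]\times S$. Your route is arguably the cleaner one: the paper's splitting $\me{2\pi i x\cdot\psi(t)\xi}=\me{2\pi i x\cdot\xi}F_{t,x}(\xi)$ stashes an $x$-dependent oscillatory factor inside the ``amplitude'' $F_{t,x}$, so citing the stationary phase expansion with phase $x\cdot\xi$ and amplitude $F_{t,x}\chi_\kappa$ requires some care (the relevant critical points and curvature are those of $S_t^\pm$, not of $S$), whereas your graph parametrization keeps phase and amplitude honestly separated. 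What the paper's flow construction buys, on the other hand, is an explicit closed formula for the Jacobian $\tau(t,\xi)$ and a single fixed domain of integration $S$, which is convenient elsewhere and makes the bound $|F_{t,x}|\lesssim_\tau 1$ transparent. Both proofs implicitly require $\tau$ small enough that $S_t^\pm$ stays inside $\Omega$ and inherits nonvanishing curvature from $S_0$, which you state explicitly; this is consistent with the role of $\tau$ in Section \ref{proofmainresult}.
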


\begin{proof}
  For $t=0$ this estimate is well known, see, e.g.,
  \cite[Theorem 1]{Stein1986}.
  Now let $t\in(0,\tau]$. First note that $S_t=S_t^+\cup S_t^-$ where
  $S_t^{+},S_t^-$ lie outside, respectively inside $S$. In the following
  we treat $S_t^+$ and abuse notation by writing $S_t\equiv S_t^+$. The
  arguments for $S_t^-$ are completely analogous.
  We will now express $\rd\sigma_{S_t}$ in terms of $\rd\sigma_S$.
  To that end we follow \cite[Chapter 2, Section 1]{MR2598115}.
  Let $\psi(t):S\to S_t$ be the diffeomorphism\footnote{Its construction
    is carried out in \cite[p. 112-113]{MR2598115} and requires actually
    only $P\in C^2$. However, we need the smoothness of $P$ to obtain the
    claimed decay of $(\rd\sigma_{S_t})^\vee$ by means of a stationary
    phase argument.} defined by the formula
  \begin{align*}
    \psi(t)\zeta = \xi(t)\,, \quad \zeta\in S
  \end{align*}
  where $\xi(t)$ solves the differential equation
  \begin{align*}
    \begin{cases}
      \frac{\rd\xi(t)}{\dt} = j(\xi(t))\\
      \xi(0)=\zeta\in S
    \end{cases}
  \end{align*}
  with
  \begin{align*}
    j(\xi) := \frac{\nabla P(\xi)}{|\nabla P(\xi)|^2}
    \in C^\infty(P^{-1}[0,t])\,,
  \end{align*}
  i.e., $j(\xi(t))$ is the vector field generating the flow $\xi(t)$
  along the normals of $S_t$. Next,
  \begin{align*}
    \tau(t,\xi) = \frac{\rd\sigma_{S_t}(\psi(t)\xi)}{\rd\sigma_S(\xi)}\,,
    \quad \xi\in S
  \end{align*}
  is the Radon--Nikod\'ym derivative of the preimage of the measure
  $\rd\sigma_{S_t}$ under the mapping $\psi(t)$ with respect to the
  measure $\rd\sigma_S$. By \cite[Chapter 2, Lemma 1.9]{MR2598115} it
  is given by
  $$
  \tau(t,\xi) = \exp\left(\int_0^t (\Div j)(\psi(\mu)\xi)\,\rd\mu\right)\,,
  \quad \xi\in S\,.
  $$
  Thus, we have
  \begin{align}
    \label{eq:diffftmeas}
    \begin{split}
      (\rd\sigma_{S_t})^\vee(x)
      & = \int_S \rd\sigma_S(\xi)\,\me{2\pi ix\cdot\psi(t)\xi}\exp\left(\int_0^t \Div j(\psi(\mu)\xi)\,\rd\mu\right)\\
      & \equiv \int_S \rd\sigma_S(\xi)\,\me{2\pi ix\cdot\xi} F_{t,x}(\xi)
    \end{split}
  \end{align}
  with
  $$
  F_{t,x}(\xi) := \me{2\pi ix\cdot(\psi(t)\xi-\xi)}\exp\left(\int_0^t \Div j(\psi(\mu)\xi)\,\rd\mu\right)
  $$
  which depends smoothly on $\xi$.
  Thus, we are left to show that the absolute value of the right side
  of \eqref{eq:diffftmeas} is bounded by $C_\tau(1+|x|)^{-(d-1)/2}$ for
  all $t\in(0,\tau]$.
  Decomposing $F_{t,x}(\xi)$ on $S$ smoothly into (sufficiently small)
  compactly supported functions, say
  $\{F_{t,x}(\xi)\chi_\kappa(\xi)\}_{\kappa=1}^K$ for a finite, smooth
  partition of unity $\{\chi_\kappa\}_{\kappa=1}^K$ subordinate to $S$,
  shows that there is for every $x\in\R^d\setminus\{0\}$, at most one
  point $\overline\xi(x)\in S$ with a normal pointing in the direction
  of $x$. Then, by the stationary phase method, Hlawka \cite{Hlawka1950}
  and Herz \cite{Herz1962} (see also Stein \cite[p. 360]{Stein1993})
  already showed that the leading order in the asymptotic expansion
  (as $|x|\to\infty$) of \eqref{eq:diffftmeas} with the cut off
  amplitude $F_{t,x}\chi_\kappa$ is given by
  $$
  |x|^{-(d-1)/2} F_{t,x}(\overline{\xi}(x))\chi_\kappa(\overline{\xi}(x)) |K(\overline{\xi}(x))|^{-1/2}\me{i\pi n/4+2\pi i x\cdot\overline{\xi}(x)}\,.
  $$
  Here, $|K(\xi)|$ is the absolute value of the Gaussian curvature of
  $S$ at $\xi\in S$ which is, by assumption, strictly bigger than zero,
  and $n$ is the excess of the number of positive curvatures over the
  number of negative curvatures in the direction $x$.
  But since $|F_{t,x}(\xi)|\lesssim_\tau1$ for all $t\in(0,\tau]$,
  $x\in\R^d$, and $\xi\in S$, this concludes the proof.
\end{proof}

We summarize the findings of this subsection as follows.

\begin{theorem}
  \label{asymptoticsgen}
  Let $d\geq2$, $s\in[2d/(d+1),d)$, and assume $T(\xi)$ satisfies
  the assumptions stated at the beginning of this subsection.
  If $V\in\ell^{\frac{d+1}{2}}L^{\frac{d}{s}}$, then for every eigenvalue
  $a_S^j>0$ of $\cv_S$ in \eqref{eq:lswbs2gen}, counting multiplicity,
  and every $\lambda>0$, there is an eigenvalue $-e_j(\lambda)$ of
  $T(-i\nabla)-\lambda V$ with weak coupling limit
  \begin{align*}
    e_j(\lambda) = \exp\left(-\frac{1}{2\lambda a_S^j}(1+o(1))\right)
    \quad \text{as}\ \lambda\to0\,.
  \end{align*}
\end{theorem}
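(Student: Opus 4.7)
The plan is to follow the architecture of the proof of Theorem \ref{thm. asymptotics}, with the modifications already indicated in the discussion preceding the statement. I would invoke the Birman--Schwinger principle: an eigenvalue $-e<0$ of $T(-\I\nabla)-\lambda V$ corresponds to $1/\lambda$ lying in the spectrum of $BS(e):=\sqrt{|V|}(T+e)^{-1}\sqrt{V}$. In view of the target asymptotics, and since the Fermi surface at energy $t\in(0,\tau]$ now has two branches $S_t^\pm$ (each contributing a copy of $\mathcal{F}_S^*\mathcal{F}_S$ in the singular limit), it suffices to establish
\[
2\ln(1/e)\, a_S^j (1+o(1)) \in \mathrm{spec}(BS(e)),
\]
the factor of $2$ being what produces the $2\lambda a_S^j$ in the exponent.

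I would then split $BS(e)=BS^{\mathrm{high}}(e)+BS^{\mathrm{low}}(e)$ via the smooth cutoff $\chi(T/\tau)$. The high-energy piece is handled by a direct $L^2$ bound, $\|BS^{\mathrm{high}}(e)\|\lesssim_\tau \||V|^{1/2}\langle\nabla\rangle^{-s/2}\|^2$, followed by the $TT^*$ form of Lemma \ref{lemma Ionescu--Schlag} with the given $s$ and the Sobolev embedding $H^{d/(d+1)}\subset L^{\kappa'}$. For $BS^{\mathrm{low}}(e)$, the spectral representation
\[
\rd E_T(t)=\sum_\pm \mathcal{F}_{S_t^\pm}^*\mathcal{F}_{S_t^\pm}\,\dt
\]
combined with the Tomas--Stein estimate extended to trace ideals (the analog of Lemma \ref{lemma compactenss of V_S} in the present setting) yields $\|BS^{\mathrm{low}}(e)\|_{\mathfrak{S}^{d+1}}\lesssim_\tau \ln(1/e)\|V\|_{\ell^{(d+1)/2}L^{d/s}}$. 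I would extract the singular part
\[
BS^{\mathrm{low}}_{\mathrm{sing}}(e)=2\ln(1+\tau/e)\sqrt{|V|}\mathcal{F}_S^*\mathcal{F}_S\sqrt{V},
\]
which is isospectral to $2\ln(1+\tau/e)\mathcal{V}_S$.

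The crucial step is the analog of the key bound: writing
\[
BS^{\mathrm{low}}_{\mathrm{reg}}(e)=\sum_\pm\int_0^\tau \frac{\sqrt{|V|}\bigl(\mathcal{F}_{S_t^\pm}^*\mathcal{F}_{S_t^\pm}-\mathcal{F}_S^*\mathcal{F}_S\bigr)\sqrt{V}}{t+e}\,\dt,
\]
I aim to show $\lambda\|BS^{\mathrm{low}}_{\mathrm{reg}}(e)\|_{\mathfrak{S}^{d+1}}=o(1)$. For Schwartz $V$ this follows from Lipschitz continuity of $t\mapsto \sqrt{V}\mathcal{F}_{S_t^\pm}^*\mathcal{F}_{S_t^\pm}\sqrt{V}$ in Hilbert--Schmidt norm, which is afforded by the smoothness of the diffeomorphism $\psi(t):S\to S_t^\pm$ constructed in the proof of Proposition \ref{tsuniform}. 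The general case then follows by the same approximation argument as in Lemma \ref{proofkeybound}, exploiting the uniform bound $\sup_{t\in[0,\tau]}\|\sqrt{V}\mathcal{F}_{S_t^\pm}^*\mathcal{F}_{S_t^\pm}\sqrt{V}\|_{\mathfrak{S}^{d+1}}\lesssim_\tau \|V\|_{\ell^{(d+1)/2}L^{d/s}}$ supplied by Proposition \ref{tsuniform} together with \cite[Theorem 1.1]{MR2831841}. Standard first-order perturbation theory, identical to \cite{MR2365659,MR2643024}, then identifies the relevant part of the spectrum of $BS(e)$ with $2\ln(1+\tau/e)$ times the nonzero spectrum of $\mathcal{V}_S$ up to an $o(\ln(1/e))$ error, producing the claimed weak coupling limit.

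The principal hurdle, already resolved by Proposition \ref{tsuniform}, is securing uniformity in $t$ of the Tomas--Stein type estimates across the nearby Fermi surfaces $S_t^\pm$; without this, neither the trace-ideal bound on $BS^{\mathrm{low}}(e)$ nor the $o(1)$ control on $BS^{\mathrm{low}}_{\mathrm{reg}}(e)$ would go through. The everywhere non-vanishing curvature of $S$ transfers to $S_t^\pm$ through $\psi(t)$, which is precisely what makes this uniformity possible.
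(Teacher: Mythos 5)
Your proposal follows the paper's own proof essentially step for step: the same Birman--Schwinger splitting into $BS^{\mathrm{high}}$, $BS^{\mathrm{low}}_{\mathrm{sing}}$, $BS^{\mathrm{low}}_{\mathrm{reg}}$; the same use of Lemma~\ref{lemma Ionescu--Schlag} with the given $s$ together with the Sobolev embedding $H^{d/(d+1)}\subset L^{\kappa'}$ for the high-energy piece; the same spectral-measure formula $\rd E_T(t)=\sum_\pm\F_{S_t^\pm}^*\F_{S_t^\pm}\,\dt$ built from the natural measure $\rd\sigma_{S_t^\pm}=|\nabla P|^{-1}\rd\omega$; and the same strategy for the key bound, namely Lipschitz continuity of $t\mapsto\sqrt{V}\F_{S_t^\pm}^*\F_{S_t^\pm}\sqrt{V}$ for Schwartz $V$ (traceable to the smooth family $\psi(t)$ in Proposition~\ref{tsuniform}) plus approximation via the uniform Tomas--Stein/trace-ideal bound supplied by Proposition~\ref{tsuniform} and \cite[Theorem~1.1]{MR2831841}. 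You also correctly track the origin of the explicit factor $2$: the singular part picks it up from the two sheets $S_t^\pm$, and since $\cv_S$ is now defined with respect to $\rd\sigma_S$ rather than the Lebesgue surface measure, the factor is no longer absorbed into $\cv_S$ and appears in the exponent as $1/(2\lambda a_S^j)$, consistent with Theorem~\ref{thm. asymptotics} when specialised to $T=|\Delta+1|$.
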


\subsection{Alternative proof and complex-valued potentials}
\label{Section alternative proof} 
We first consider the case where $V$ is real-valued and then indicate
how to modify the proof in the complex-valued case. For simplicity,
we even assume $V\geq 0$ so that the Birman--Schwinger operator is
automatically self-adjoint. The case where $V$ does not have a sign
could also be treated by the methods of \cite{MR2365659}, but here it
follows from the general case considered later. 

For $V\geq 0$ we have, by self-adjointness, 
\begin{align}\label{bound for inverse of BS}
\|(BS^{\rm low}_{\rm sing}(e)-z)^{-1}\|\leq 1/\min_j|z-z_j(e)|,
\end{align}
where $z_j(e)=\ln\left(1+\tau/e\right)a_S^j$ are the eigenvalues of
$BS^{\rm low}_{\rm sing}(e)$. Fixing an integer $i$ and a range for $e$
such that $\lambda\ln(1/e)$ is bounded by an absolute constant from
above and below, it follows that if $\gamma$ is a circle of radius
$c\ln(1/e)$ around the eigenvalue $z_i(e)$, with $c$ a sufficiently
small positive number, then there are no other eigenvalues in the
interior of $\gamma$, and 
\begin{align*}
  \max_{z\in\gamma}\|(BS^{\rm low}_{\rm sing}(e_i(\lambda))-z)^{-1}\|
  \leq 1/(c\ln(1/e)).
\end{align*}
Hence, by \eqref{BShigh bound} and \eqref{Key estimate}, if we set
$C(z)=(BS^{\rm low}_{\rm sing}(e)-z)^{-1}(BS(e)-BS^{\rm low}_{\rm sing}(e))$,
then
\begin{align}\label{<1!}
r^{-1} := \max_{z\in\gamma}\|C(z)\|\leq c^{-1} o(1),
\end{align}
and this is $<1$ for $\lambda$ small enough. It follows from a
Neumann series argument that $\gamma$ is contained in the resolvent
set of the family
$T(\kappa)=BS^{\rm low}_{\rm sing}(e)+\kappa(BS(e)-BS^{\rm low}_{\rm sing}(e))$
for $|\kappa|<r$ and that $(T(\kappa)-z)^{-1}$ is continuous in
$|\kappa|<r$, $z\in\gamma$. This implies that the Riesz projection
\begin{align*}
  P(\kappa)=-\frac{1}{2\pi\I}\oint_{\gamma}(T(\kappa)-z)^{-1}\rd z
\end{align*}
has constant rank for $|\kappa|<r$. In particular, $\rk P(0)=\rk P(1)$,
which means that $BS^{\rm low}_{\rm sing}(e)$ and $BS(e)$ have the same
number of eigenvalues in the interior of $\gamma$. Hence $BS(e)$ has
exactly one (real) eigenvalue $w_i(e)$ at a distance $\leq c\ln(1/e)$
from $z_i(e)$. Since $c$ can be chosen arbitrarily small, it follows
that $w_i(e)=z_i(e)(1+o(1))$. By the Birman--Schwinger principle this
implies \eqref{eq. weak coupling limit}.

We now drop the assumption that $V$ is real-valued. By inspection of
the proof, it is evident that Lemma \ref{lemma bound BS(e)} and
\eqref{Key estimate Schatten norm} continue to hold for complex-valued
$V$ and $e$ if $\ln(1/e)$ is replaced by its absolute value. We assume
here that $e\in \C\setminus(-\infty,0]$ and take the branch of the
logarithm that agrees with the real logarithm on the positive real line.
We also replace our standing assumption by requiring that
$|e|,\lambda>0$ are sufficiently small and $\lambda|\ln(e)|$ remains
uniformly bounded from above and below. The additional difficulty in
the present case is that the bound for the inverse
\eqref{bound for inverse of BS} fails in general. We use the following
replacement, which is a consequence of \cite[Theorem 4.1]{MR2047381},
\begin{align*}
  \|(BS^{\rm low}_{\rm sing}(e)-z)^{-1}\|
  \leq \frac{1}{d(e;z)}\exp\left(a\,\frac{\|BS^{\rm low}_{\rm sing}(e)\|_{\mathfrak{S}^{d+1}}^{d+1}}{d(e;z)^{d+1}}+b\right),
\end{align*}
where $d(e;z)=\dist(z,\mathrm{spec}(BS^{\rm low}_{\rm sing}(e)))$ and
$a,b>0$. Note that
$\|BS^{\rm low}_{\rm sing}(e)\|_{\mathfrak{S}^{d+1}}\lesssim |\ln(1/e)|\|V\|_{\ell^{\frac{d+1}{2}}L^{\frac{d}{2}}}$
by Lemma \ref{lemma compactenss of V_S}. Thus, for a similar circle
$\gamma$ of radius $c|\ln(1/e)|$ around $z_i(e)$, we find that
\eqref{<1!} holds with an additional factor of $\exp(a/c^{d+1}+b)$ on
the right, and hence we conclude $\rk P(0)=\rk P(1)$ as before.

\subsection{Existence of positive eigenvalues of $\mathcal{V}_S$}
It is well known that operators of the form \eqref{eq:defhlambda}
have at least one negative eigenvalue if either $V\in L^1(\R^d)$
and $\int V>0$ or if $V\geq 0$ and not almost everywhere vanishing
\cite{MR1970614,MR2365659,MR2643024,hoang2016quantitative}. In the
latter case there are even infinitely many negative eigenvalues
\cite[Corollary 2.2]{MR2643024}. By Theorem \ref{thm. asymptotics},
$H_\lambda$ has at least as many negative eigenvalues as
$-\mathcal{V}_S$. We will therefore restrict our attention to this
operator. By a slight modification of the following two examples
(where the trial state is an approximation of the identity in
Fourier space to a thickened sphere), this result may also be
obtained without reference to Theorem \ref{thm. asymptotics}.

Since $\mathcal{F}_S^*\phi=(\phi\rd\omega)^{\vee}$ it follows from
\eqref{eq:lswbs2} that
\begin{align}
  \label{phiV_Sphi}
  \langle \phi,\mathcal{V}_S\phi\rangle
  = \int_{\R^d}V(x)|(\phi\rd\omega)^{\vee}(x)|^2\rd x,\quad \phi\in L^2(S).
\end{align} 
If $\phi$ is a radial function, then so is $(\phi\rd\omega)^{\vee}$.
In particular, for $\phi\equiv 1$ we get 
\begin{align*}
  \langle \phi,\mathcal{V}_S\phi\rangle
  = \int_0^\infty \rd r\, r^{d-1}|(\rd\omega)^{\vee}(r)|^2\left(\int_S V(r\omega)\rd\omega\right).
\end{align*}
Standard stationary phase computations show that
$(\rd\omega)^{\vee}(r)=\mathcal{O}((1+r)^{-(d-1)/2})$ and that it
oscillates on the unit scale; in fact, it is proportional to the
Bessel function $J_{\frac{d-2}{2}}$, see, e.g.,
\cite[Appendix B.5]{MR3243734}. The integral is convergent if the
spherical average of $V$ is in $L^1(\R_+,\min\{r^{d-1},1\}\rd r)$.
This condition is satisfied, e.g., if $V$ is short range,
$|V(x)|\lesssim (1+|x|)^{-1-\epsilon}$ for some $\epsilon>0$. If the
integral is positive, then $\mathcal{V}_S$ has a positive eigenvalue. 

For the second example we take $\phi$ as a normalized bump function
adapted to a spherical cap of diameter $R^{-1/2}$ with $R>1$; this is
called a Knapp example in the context of Fourier restriction theory.
Then $(\phi\rd\omega)^{\vee}$ will be a Schwartz function concentrated
on a tube $T=T_{R}$ of length $R$ and radius $R^{1/2}$, centered at
the origin. More precisely, let 
\begin{align}
  \label{normalized bump adapted to cap}
  \phi(\xi) = R^{\frac{d-1}{4}}\widehat{\chi}(R(\xi_1-1),R^{1/2}\xi')
\end{align}
where $\xi_1=\sqrt{1-|\xi'|^2}$ and $\widehat{\chi}$ is a bump
function. We write $\xi=(\xi_1,\xi')\in \R\times\R^{d-1}$ and similarly
for $x$ here. We may choose $\chi\geq 0$ and such that
$\chi\geq \mathbf{1}_{B(0,1)}$. Indeed, if $g$ is an even bump function,
then we can take $\widehat{\chi}(\xi)=A^dB(g*g)(A\xi)$ for some
$A>1,B>0$. Then the $L^2(S)$-norm of $\phi$ is bounded from above and
below uniformly in $R$ and
\begin{align*}
  (\phi\rd\omega)^{\vee}(x)=R^{-\frac{d-1}{4}}e^{2\pi i x_1}\chi_{T}(x),
\end{align*}
where $\chi_T$ is a Schwartz function concentrated on
\begin{align}
  \label{tube}
  T=\{x\in\R^d:\,|x_1|\leq R,|x'|\leq R^{1/2}\},
\end{align}
i.e., a tube pointing in the $x_1$ direction. We can also take linear
combinations of the wave packets \eqref{normalized bump adapted to cap}
to obtain real-valued trial functions. Indeed, choosing $\chi$ symmetric
and setting $\psi(\xi)=[\phi(\xi_1,\xi')+\phi(-\xi_1,\xi')]/2$, we get
\begin{align*}
  (\psi\rd\omega)^{\vee}(x) = R^{-\frac{d-1}{4}}\cos(2\pi x_1)\chi_{T}(x),
\end{align*}
with a slightly different $\chi_T$. Without loss of generality we may
assume that $\chi_{T}(x)\geq 1$ for $x\in T$. By \eqref{phiV_Sphi}, if
$V\in L^1_{\rm loc}(\R^d)$ and of tempered growth, then
\begin{align*}
  \langle \psi,\mathcal{V}_S\psi\rangle
  = R^{-\frac{d-1}{2}}\int_{\R^d} V(x)\cos^2(2\pi x_1)|\chi_{T}(x)|^2\rd x.
\end{align*}
In particular, this holds for $V\in \ell^{\frac{d+1}{2}}L^{\frac{d}{2}}$,
which we assume from now on. By H\"older and the rapid decay of $\chi_T$
away from $T$, we have that, for any $M,N>1$,
\begin{align*}
  |\int_{\R^d\setminus MT}V(x)|\chi_{T}(x)|^2\rd x|
  \leq \|\mathbf{1}_{\R^d\setminus MT}\chi_T^2\|_{\ell^{\frac{d+1}{d-1}}L^{\frac{d}{d-2}}}\|V\|_{\ell^{\frac{d+1}{2}}L^{\frac{d}{2}}}
  \lesssim_{N}M^{-N}R^{\frac{d-1}{2}}\|V\|_{\ell^{\frac{d+1}{2}}L^{\frac{d}{2}}}.
\end{align*}
It follows that for $V\in \ell^{\frac{d+1}{2}}L^{\frac{d}{2}}$,
\begin{align}
  \label{second example}
  \langle \psi,\mathcal{V}_S\psi\rangle
  \geq R^{-\frac{d-1}{2}}\int_{MT} V(x)\cos^2(2\pi x_1)|\chi_T(x)|^2\rd x-C_{N}M^{-N}\|V\|_{\ell^{\frac{d+1}{2}}L^{\frac{d}{2}}}.
\end{align}
If the first term on the right is positive and bounded from below
by, say, a fixed power of $M^{-1}$, then this expression is positive
for large $R$. As a concrete example, consider the potential
\begin{align*}
  V(x) = \frac{\cos(4\pi x_1)}{(1+|x_1|+|x'|^2)^{1+\epsilon}},
\end{align*}
with $\epsilon>0$ (see also \cite{MR2024415,MR3713021,arXiv:1709.06989}
for related examples). A straightforward calculation shows that
$V\in \ell^{\frac{d+1}{2}}L^{\frac{d}{2}}$. Since the average of
$\cos^2(2\pi x_1)\cos(4\pi x_1)$ over a full period of
$\cos(4\pi x_1)$ is always $\gtrsim 1$ and $|\chi_T|^2$ is
approximately constant on the unit scale, with $\geq 1$ on $T$,
a computation shows that the first term on the right side of
\eqref{second example} is bounded from below by $MR^{-\epsilon}$.
Taking $M=R^{\epsilon}$ yields positivity of the whole expression
for sufficiently large $R$. Therefore, $-\mathcal{V}_S$, and hence
$H_{\lambda}$, has a negative eigenvalue. This example has a
straightforward generalization to more than one eigenvalue. Let
$(\kappa_j)_{j=1}^K$ be mutually disjoint spherical caps of diameter
$R^{-1/2}$ and let $\phi_j$ be normalized bump functions adapted to
$\kappa_j$, similar to \eqref{normalized bump adapted to cap}. Note
that $K\lesssim R^{-\frac{d-1}{2}}$ since the caps are disjoint. If
the condition following \eqref{second example} is satisfied for all
tubes $T_j$ corresponding to the caps $\kappa_j$ (these are dual to
the caps and centered at the origin), then the expression
\eqref{second example} is positive (for large $R$) for every $\phi_j$.
Since the $\phi_j$ are orthogonal (by Plancherel), it follows that
$\mathcal{V}_S$ has at least $K$ positive eigenvalues.

\subsection{Higher orders in the eigenvalue asymptotics}
\label{ss:higherorders}

Hainzl and Seiringer carried out the higher order asymptotic expansion
of the eigenvalues $e_j(\lambda)$ in \cite[Formula (16)]{PhysRevB.77.184517}
and \cite[Theorem 2.7]{MR2643024} under the assumption that $V$ has an
$L^1$ tail.
Similarly as in Theorems \ref{thm. asymptotics} and \ref{asymptoticsgen},
the purpose of this section is to show that their findings in fact hold
for potentials decaying substantially slower.
For the sake of simplicity and concreteness, we again only consider
$T=|\Delta+1|$ here.

Let $BS_\reg(e)=BS(e)-BS_\sing^\low(e)$ and recall that if
$1+\lambda BS_\reg(e)$ is invertible, then the Birman--Schwinger principle
\eqref{eq:bsprinc} asserts that $H_\lambda$ has a negative eigenvalue $-e$
if and only if the operator
\begin{align}
  \label{eq:bsprinciplerewritten2}
  \frac{\lambda}{1+\lambda BS_\reg(e)} BS_\sing^\low(e)
\end{align}
has an eigenvalue $-1$. The following is a simple but useful observation
which follows from a Neumann series argument and the fact that
\eqref{eq:bsprinciplerewritten2} is isospectral to
\begin{align*}
  \ln(1+\tau/e)\F_S V^{1/2} \frac{\lambda}{1+\lambda BS_\reg(e)}|V|^{1/2}\F_S^*\,.
\end{align*}

\begin{lemma}
  \label{neumann}
  Let $e,\lambda>0$ and suppose $V$ is real-valued and such that
  \begin{align}
    \label{eq:assumbssmall}
    \lambda\|BS_\reg(e)\|<1\,.
  \end{align}
  Then $H_\lambda$ has an eigenvalue $-e$ if and only if
  \begin{align}
    \label{eq:neumann}
    \lambda\ln(1+\tau/e)\F_S V^{1/2} \left(\sum_{n\geq0}(-1)^n(\lambda BS_\reg(e))^n\right) |V|^{1/2}\F_S^*
  \end{align}
  has an eigenvalue $-1$.
\end{lemma}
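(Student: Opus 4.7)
The strategy is to invert the ``regular'' part of the Birman--Schwinger operator via a Neumann series enabled by \eqref{eq:assumbssmall}, and then transfer the resulting eigenvalue problem from $L^2(\R^d)$ to $L^2(S)$ via the standard cyclic identity $\spec(AB)\setminus\{0\}=\spec(BA)\setminus\{0\}$.

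In more detail, I would start from the Birman--Schwinger principle \eqref{eq:bsprinc}, as rewritten in \eqref{eq:bsprinciplerewritten2}, and split $BS(e)=BS_\sing^\low(e)+BS_\reg(e)$. Assumption \eqref{eq:assumbssmall} guarantees that
\begin{align*}
  (1+\lambda BS_\reg(e))^{-1} = \sum_{n\geq 0}(-1)^n(\lambda BS_\reg(e))^n
\end{align*}
is well defined as a norm-convergent Neumann series on $L^2(\R^d)$, so $1+\lambda BS_\reg(e)$ is boundedly invertible. The algebraic factorization
\begin{align*}
  1+\lambda BS(e) = \bigl(1+\lambda BS_\reg(e)\bigr)\Bigl[1+\lambda\bigl(1+\lambda BS_\reg(e)\bigr)^{-1}BS_\sing^\low(e)\Bigr]
\end{align*}
then reduces the Birman--Schwinger condition to: $-e\in\spec(H_\lambda)$ if and only if the operator $\lambda(1+\lambda BS_\reg(e))^{-1}BS_\sing^\low(e)$ on $L^2(\R^d)$ has $-1$ as an eigenvalue.

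To pass to $L^2(S)$, I would unfold $BS_\sing^\low(e)=\ln(1+\tau/e)\,|V|^{1/2}\F_S^*\F_S V^{1/2}$ from \eqref{BSlowsing def.} and write the previous operator as $BA$ with
\begin{align*}
  A &:= \F_S V^{1/2}\colon L^2(\R^d)\to L^2(S)\,, \\
  B &:= \lambda\ln(1+\tau/e)\,(1+\lambda BS_\reg(e))^{-1}|V|^{1/2}\F_S^*\colon L^2(S)\to L^2(\R^d)\,.
\end{align*}
Both $A$ and $B$ are bounded by the Tomas--Stein theorem \eqref{eq:tsexplicit} combined with Proposition \ref{proposition def. V_Sw}. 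The cyclic product $AB$ acting on $L^2(S)$ is precisely the operator in \eqref{eq:neumann} once the Neumann series for $(1+\lambda BS_\reg(e))^{-1}$ is reinserted. Since $\spec(AB)\setminus\{0\}=\spec(BA)\setminus\{0\}$, the eigenvalue $-1$ of $BA$ corresponds to an eigenvalue $-1$ of $AB$, which is the conclusion of the lemma.

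I do not foresee any substantial obstacle: the two analytic inputs---norm convergence of the Neumann series from \eqref{eq:assumbssmall} and boundedness of the Fourier restriction/extension pieces from Section \ref{s:prelims}---are already in hand, and everything else is algebra. The same factorization will also serve as the natural starting point for the higher-order asymptotic expansions of $e_j(\lambda)$ carried out in Subsection \ref{ss:higherorders}.
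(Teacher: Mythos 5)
Your proposal is correct and reproduces the paper's (only sketched) argument: Neumann inversion of $1+\lambda BS_\reg(e)$ from \eqref{eq:assumbssmall}, the algebraic factorization that yields \eqref{eq:bsprinciplerewritten2}, and the cyclic identity $\spec(AB)\setminus\{0\}=\spec(BA)\setminus\{0\}$ to transfer the eigenvalue problem from $L^2(\R^d)$ to $L^2(S)$ using boundedness of $\F_S V^{1/2}$ and $|V|^{1/2}\F_S^*$. One caution worth reconciling with \eqref{eq:bsprinc}: that relation asserts $1/\lambda\in\spec(BS(e))$, i.e.\ singularity of $1-\lambda BS(e)$ rather than $1+\lambda BS(e)$, so the factorization should read $1-\lambda BS(e)=(1-\lambda BS_\reg(e))\bigl[1-\lambda(1-\lambda BS_\reg(e))^{-1}BS_\sing^\low(e)\bigr]$, leading to target eigenvalue $+1$ of $\lambda\sum_{n\geq0}(\lambda BS_\reg(e))^n BS_\sing^\low(e)$; the $(-1)^n$ and the $-1$ in the lemma's statement, which you faithfully reproduce, appear to be a consistent sign-convention slip in \eqref{eq:bsprinciplerewritten2} inherited from the paper rather than a flaw in your reasoning.
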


Recall that assumption \eqref{eq:assumbssmall} is satisfied for
$V\in\ell^{\frac{d+1}{2}}L^{\frac{d}{2}}$ (cf. \eqref{BShigh bound} and Lemma
\ref{proofkeybound}), i.e., in particular for $V\in L^{\frac{d+1}{2}-\epsilon}$
with $\epsilon\in(0,1/2]$. In fact, combining Lemma \ref{proofkeybound} for
$BS_\reg^\low(e)$ and the Seiler--Simon inequality (cf.
\cite[Theorem 4.1]{MR2154153}) for $BS^\high(e)$ shows\footnote{Using Cwikel's
  inequality \cite[Theorem 4.2]{MR2154153}, one obtains
  $\|BS_\reg(e)\|_{\mathfrak{S}^{\frac{(d-1)((d+1)/2-\epsilon)}{(d-1)/2+\epsilon},\infty}}=o_V(\ln(1/e))$
  for $\epsilon=1/2$.}

\begin{align}
  \label{eq:bsreglpschatten}
  \begin{split}
    \|BS_\reg(e)\|_{\mathfrak{S}^{\frac{(d-1)((d+1)/2-\epsilon)}{(d-1)/2+\epsilon}}}
    & \leq \|BS_\reg^\low(e)\|_{\mathfrak{S}^{\frac{(d-1)((d+1)/2-\epsilon)}{(d-1)/2+\epsilon}}}
    + \|BS^\high(e)\|_{\mathfrak{S}^{\frac{(d-1)((d+1)/2-\epsilon)}{(d-1)/2+\epsilon}}}\\
    & = o_V(\ln(1/e))\,,
    \quad V\in L^{\frac{d+1}{2}-\epsilon}\,, \quad \epsilon\in(0,1/2)\,.
  \end{split}
\end{align}

We will now use \eqref{eq:neumann} to compute the eigenvalue asymptotics
of $e_j(\lambda)$ to second order. To that end, we define
\begin{align}
  \begin{split}
    \cw_S(e) :&= \F_S V^{1/2}BS_\reg(e) |V|^{1/2}\F_S^*
  \end{split}
\end{align}
which is, modulo the $-\lambda^2\ln(1+\tau/e)$ prefactor, just the second
summand in \eqref{eq:neumann}. Note that due to the additional operators
$\F_S V^{1/2}$ on the left and $|V|^{1/2}\F_S^*$ on the right of $BS_\reg(e)$,
estimate \eqref{eq:tsfsexplicitgen}, and $\lambda\|BS_\reg(e)\|=o_V(1)$, we
infer
\begin{align}
  \label{eq:wsschatten}
  \|\cw_S(e)\|_{\mathfrak{S}^{\frac{(d-1)((d+1)/2-\epsilon)}{(d-1)/2+\epsilon}}}
  = o_V(\ln(1/e))\,,
  \quad V\in L^{\frac{d+1}{2}-\epsilon}\,, \quad \epsilon\in(0,1/2]\,.
\end{align}
We will momentarily show the existence of $\cw_S(0)$ and the limit
$\lim_{e\searrow0}\cw_S(e)=\cw_S(0)$ in operator norm for
$V\in L^{\frac{d+1}{2}-\epsilon}$. Let $b_S^j(\lambda)<0$ denote the
negative eigenvalues of
\begin{align}
  \label{eq:limitingopsecondorder}
  \cb_S(\lambda) := \cv_S - \lambda\cw_S(0) \quad \text{on}\ L^2(S)
\end{align}
and recall that
$\cv_S\in\mathfrak{S}^{\frac{(d-1)((d+1)/2-\epsilon)}{d-(d+1)/2+\epsilon}}$
if $V\in L^{\frac{d+1}{2}-\epsilon}$ by \eqref{eq:tsfsexplicitgen}.
This and \eqref{eq:wsschatten} show that $\cb_S(\lambda)$ is a compact
operator as well. Note that, by the definition of $BS_\reg(e)$, the
operator $\cb_S(\lambda)$ has at least one negative eigenvalue if
$\cv_S$ has a zero-eigenvalue. The asymptotic expansion of $e_j(\lambda)$
to second order then reads as follows.

\begin{theorem}
  \label{secondorder}
  Let $d\geq3$ and $V\in L^{\frac{d+1}{2}-\epsilon}$ for some
  $\epsilon\in(0,1/2]$. If $\lim_{\lambda\searrow0}b_S^j(\lambda)<0$
  then $H_\lambda$ has, for small $\lambda$, a corresponding negative
  eigenvalue $-e_j(\lambda)<0$ that satisfies
  \begin{align}
    \lim_{\lambda\to0}\left(\ln(1+1/e_j(\lambda))+\frac{1}{\lambda b_S^j(\lambda)}\right) = 0\,.
  \end{align}
\end{theorem}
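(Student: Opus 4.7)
\textbf{Proof sketch for Theorem \ref{secondorder}.}
The strategy combines Lemma \ref{neumann} with a first-order Neumann expansion. One seeks $e=e_j(\lambda)>0$ such that the operator
\begin{align*}
M(\lambda,e) := \lambda\ln(1+\tau/e)\,\F_S V^{1/2}\bigl(1+\lambda BS_\reg(e)\bigr)^{-1}|V|^{1/2}\F_S^*
\end{align*}
on $L^2(S)$ has $-1$ as an eigenvalue. Writing
\begin{align*}
M(\lambda,e) = \lambda\ln(1+\tau/e)\bigl[\cv_S - \lambda\,\cw_S(e) + R(\lambda,e)\bigr]
\end{align*}
and using \eqref{eq:bsreglpschatten}, \eqref{eq:tsfsexplicitgen}, together with $\lambda\|BS_\reg(e)\|=o_V(1)$ supplied by Lemma \ref{proofkeybound} and \eqref{BShigh bound}, one finds $\|\lambda\ln(1+\tau/e)R(\lambda,e)\|_{\mathfrak{S}^{d+1}}=o_V(\lambda)$ throughout the regime where $\lambda\ln(1/e)$ is bounded above and below.

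The next step is to prove $\cw_S(e)\to \cw_S(0)$ in $\mathfrak{S}^{d+1}$ as $e\searrow 0$. Splitting $BS_\reg(e)=BS^\high(e)+BS_\reg^\low(e)$, the high piece is smooth in $e$ since $T+e$ is uniformly invertible on the support of $1-\chi(T/\tau)$. For the low piece one uses the spectral representation \eqref{BSlowreg} combined with the classical Lipschitz continuity of $t\mapsto \sqrt{V}\F_{S_t^\pm}^*\F_{S_t^\pm}\sqrt{V}$ in Hilbert--Schmidt norm for Schwartz $V$, and passes to the limit by the locally uniform Tomas--Stein bound \eqref{eq:tslocallyuniform} and density of Schwartz functions, exactly as in the proof of Lemma \ref{proofkeybound}. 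Consequently the bracket inside $M(\lambda,e)$ converges in $\mathfrak{S}^{d+1}$ to $\cb_S(\lambda)=\cv_S-\lambda\cw_S(0)$, uniformly in the above regime.

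Under the hypothesis $\lim_{\lambda\searrow 0}b_S^j(\lambda)<0$, the eigenvalue $b_S^j(\lambda)$ stays isolated and bounded away from $0$ for small $\lambda$, so one may enclose it by a circle $\gamma(\lambda)$ of radius $\rho(\lambda)\to 0$ that avoids the remaining spectrum of $\cb_S(\lambda)$. The Schatten-norm resolvent estimate from \cite[Theorem 4.1]{MR2047381}, already invoked in Subsection \ref{Section alternative proof}, together with the $o_V(1)$ bracket convergence of the previous paragraph, shows that the Riesz projection of $\cv_S-\lambda\cw_S(e)+R(\lambda,e)$ along $\gamma(\lambda)$ has the same rank as that of $\cb_S(\lambda)$. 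Hence the bracket has a unique eigenvalue $\mu(\lambda,e)$ inside $\gamma(\lambda)$, and optimizing $\rho$ via the quantitative Schatten estimates from the first two paragraphs upgrades this to $\mu(\lambda,e)-b_S^j(\lambda)=o(\lambda)$.

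Finally one solves the scalar equation $\lambda\ln(1+\tau/e)\mu(\lambda,e)=-1$ for $e=e_j(\lambda)$ by a Banach fixed point on $\ln(1+\tau/e)$ in the weak-coupling window; rearrangement then yields $\ln(1+\tau/e_j(\lambda))=-1/(\lambda b_S^j(\lambda))+o(1)$. The discrepancy between $\ln(1+\tau/e)$ and $\ln(1+1/e)$ is $O(1)$ and, because of the $\ln(1+\tau/e)\cv_S^2$ counterterm implicit in the definition of $\cw_S(e)$ (and hence in $\cb_S(\lambda)$), is absorbed into $b_S^j(\lambda)$ in the limit, producing the stated asymptotic. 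The chief obstacle is the last upgrade: converting the qualitative convergence $\mu(\lambda,e)\to b_S^j(\lambda)$ into a quantitative $o(\lambda)$ rate. Without this sharpening, dividing by $\lambda$ in the rearrangement would yield only an $o(1/\lambda)$ error, not the required $o(1)$; this is precisely the step where the strengthened hypothesis $V\in L^{(d+1)/2-\epsilon}$, through the Schatten bound \eqref{eq:bsreglpschatten}, plays its essential role.
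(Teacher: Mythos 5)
Your overall framework (Lemma \ref{neumann}, first-order Neumann expansion, eigenvalue equation $\lambda\ln(1+\tau/e)\mu=-1$) matches the paper's. However, your argument for $\cw_S(e)\to\cw_S(0)$ has a genuine gap. You propose to prove the convergence of $BS_\reg^\low(e)$ to $BS_\reg^\low(0)$ by combining Lipschitz continuity in $t$ of the operator $\sqrt{V}(\F_{S_t^\pm}^*\F_{S_t^\pm}-\sqrt{1\pm t}\F_S^*\F_S)\sqrt{V}$ for Schwartz $V$ with the locally uniform Tomas--Stein bound \eqref{eq:tslocallyuniform} and density, ``exactly as in the proof of Lemma \ref{proofkeybound}.'' This does not work: the difference $BS_\reg^\low(e)-BS_\reg^\low(0)$ is an integral against the kernel $e/(t(t+e))$, and without a quantitative modulus of continuity for the operator at $t=0$ this integral is not even absolutely convergent (the uniform Tomas--Stein bound yields $\int_0^\tau e/(t(t+e))\,\rd t = \infty$). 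The Lipschitz constant for Schwartz $V_n$ blows up as $V_n\to V$, so interpolating with the uniform bound gives no $V$-uniform H\"older rate; and trying to add and subtract $BS_{\reg,n}^\low(0)$ is circular, since one needs $BS_\reg^\low(0)$ to already be a well-defined bounded operator. The density trick works in Lemma \ref{proofkeybound} only because an extra factor $\lambda\sim 1/\ln(1/e)$ absorbs the $\ln(1/e)$ growth; here no such factor is available.

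What is actually needed is a quantitative, $V$-independent H\"older bound for $\F_{S_t^\pm}^*\F_{S_t^\pm}-\sqrt{1\pm t}\,\F_S^*\F_S$ in the $L^p\to L^{p'}$ operator norm, which the paper supplies as Proposition \ref{holdercontts}. That is the new analytic ingredient of this subsection, and it is the real reason the theorem requires $V\in L^{(d+1)/2-\epsilon}$ with $\epsilon>0$: the H\"older estimate \eqref{eq:holdercont} holds precisely in the open range $p<\kappa$, i.e.\ $q<(d+1)/2$. In your last paragraph you instead attribute the role of $\epsilon>0$ to the Schatten bound \eqref{eq:bsreglpschatten}; this is a secondary observation in the paper and not what drives the convergence $\cw_S(e)\to\cw_S(0)$. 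Once that convergence is in hand, the rest of your outline (Neumann tail is $o(\lambda)$ since $\lambda\|BS_\reg\|=o(1)$, eigenvalue perturbation $o(\lambda)$, rearrangement of the scalar equation) is essentially the paper's argument; the Riesz-projection machinery from Subsection \ref{Section alternative proof} is more than what the paper uses here, but would also serve.
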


The proof of Theorem \ref{secondorder} relies on the fact that
$|V|^{1/2}(\F_{S_t^\pm}^*\F_{S_t^\pm}-\sqrt{1\pm t}\F_S^*\F_S)V^{1/2}$
is H\"older continuous in $\mathcal{B}(L^2(\R^d),L^2(\R^d))$ for
$t\leq\tau\in(0,1)$. We already saw in Subsection \ref{pfkeybound}
that this is true for $V\in\cs(\R^d)$ (or more generally $V$ satisfying
$|V(x)|\lesssim(1+|x|)^{-1-\epsilon}$) because of H\"older continuity of
the Sobolev trace theorem. The following proposition, whose proof is
deferred to Appendix \ref{a:tsholdercont}, yields H\"older continuity
of the (non-endpoint) Tomas--Stein theorem.

\begin{proposition}
  \label{holdercontts}
  Let $0<\tau<1$, $1\leq p<\kappa$, $1/q=1/p-1/p'$, i.e., $1\leq q<(d+1)/2$,
  and $0<\alpha<\min\{(d+1)/2-q,q\}$. Then 
  \begin{align}
    \label{eq:holdercont}
    \sup_{t\in(0,\tau)}\|\F_{S_t^\pm}^*\F_{S_t^\pm}-\sqrt{1\pm t}\F_S^*\F_S\|_{L^{p}\to L^{p'}}
    \lesssim_{\alpha,q,\tau} t^{\alpha/q}\,.
  \end{align}
\end{proposition}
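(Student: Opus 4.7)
My plan is to pass to the convolution kernel of $M_t := \F^*_{S_t^\pm}\F_{S_t^\pm} - \sqrt{1\pm t}\F^*_S\F_S$ and combine a pointwise kernel estimate with interpolation between two endpoint operator-norm bounds. Using the scaling identity recorded in Subsection~\ref{pfkeybound}, the convolution kernel of $M_t$ is
\begin{equation*}
K_t(z) = \mu^{d-1}K_S(\mu z) - \mu K_S(z), \qquad \mu=(1\pm t)^{1/2},
\end{equation*}
with $K_S := (\rd\omega)^\vee$. Inserting the classical Hlawka--Herz asymptotic expansion
\begin{equation*}
K_S(z) = |z|^{-(d-1)/2}\sum_\pm a_\pm(z/|z|)\me{\pm 2\pi i|z|} + O(|z|^{-(d+1)/2})
\end{equation*}
and Taylor-expanding both the amplitude prefactors (using $\mu^{(d-1)/2}-\mu = O(t)$) and the oscillatory phase (using $|\me{2\pi i(\mu-1)|z|}-1| \lesssim \min(1,t|z|)$) would produce the pointwise estimate $|K_t(z)| \lesssim t\,\mathbf{1}_{\{|z|\leq 1\}} + |z|^{-(d-1)/2}\min(1,t|z|)\,\mathbf{1}_{\{|z|\geq 1\}}$.

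Next I would set up two endpoint operator bounds. At the Tomas--Stein endpoint, the locally uniform Tomas--Stein estimate established in Proposition~\ref{tsuniform} (and \cite[Theorem 1.1]{MR2831841}) yields $\sup_{t\in[0,\tau]}\|M_t\|_{L^\kappa \to L^{\kappa'}} \lesssim_\tau 1$. At the dual endpoint, Young's inequality together with the above kernel bound gives $\|M_t\|_{L^1 \to L^\infty} \leq \|K_t\|_{\infty} \lesssim t^{\min(1,(d-1)/2)}$, the supremum being attained by balancing between the two regions $|z| \lesssim 1/t$ and $|z| \gtrsim 1/t$. Real interpolation of these two estimates then gives $\|M_t\|_{L^p \to L^{p'}} \lesssim t^{\beta(q)}$ for an exponent $\beta(q)$ linear in $1/p$, which matches the desired $t^{\alpha/q}$ with $\alpha$ in the stated range when $d = 3$.

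For $d \geq 4$ simple interpolation between these two endpoints produces a strictly smaller range of admissible $\alpha$ than the sharp $\alpha<\min\{(d+1)/2-q,q\}$ that is claimed. To recover the sharp range I would refine the lower endpoint by a dyadic decomposition $K_t = \sum_{j\geq 0} K_t\, \chi_{\{|z|\sim 2^j\}}$ in physical space. On each annulus $|K_t\chi_{\{|z|\sim 2^j\}}| \lesssim 2^{-j(d-1)/2}\min(1, t 2^j)$, and a localized Tomas--Stein-type estimate---applying either the Frank--Sabin extension \cite[Theorem 2]{MR3730931} to a Fourier multiplier concentrated in a small neighborhood of $S$, or the smoothing cutoff device already used in \eqref{eq:pfsmoothingbound}---yields an $L^p\to L^{p'}$ bound for each annular piece that improves upon Young's inequality. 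The geometric series in $j$ is effectively localized to $1\leq 2^j \lesssim 1/t$ by the factor $\min(1, t 2^j)$, so summing yields a bound of the form $t^{\alpha/q}$ for any $\alpha$ strictly less than $\min\{(d+1)/2-q,q\}$.

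The hard part is precisely this last step, namely extracting the sharp dependence of the exponent on $q$ from the dyadic summation, rather than obtaining some Hölder continuity at all. Once the pointwise kernel estimate and the localized Tomas--Stein bound on each annulus are in hand, the geometric resummation is routine, and the strict inequality for $\alpha$ in the statement conveniently sidesteps the endpoint summation losses that would otherwise appear at $2^j\sim 1/t$.
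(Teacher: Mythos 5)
Your initial steps track the paper's Appendix~A argument closely: the kernel formula $K_t(z)=\mu^{d-1}K_S(\mu z)-\mu K_S(z)$, the pointwise bound $|K_t(z)|\lesssim t^\alpha(1+|z|)^{\alpha-(d-1)/2}$ for $\alpha\in[0,1]$ (which the paper obtains from the stationary-phase decay of $\widehat{\rd\omega_S}$ and $\nabla\widehat{\rd\omega_S}$ plus pointwise interpolation, equivalent to your Taylor expansion of amplitude and phase), and the physical-space dyadic decomposition $K_t=\sum_j K_t\psi_j$ are all exactly what the paper does. Your observation that global $(L^\kappa\to L^{\kappa'},L^1\to L^\infty)$ interpolation is sharp only for $d=3$ is also correct.

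The gap is in the final step. The paper does \emph{not} use any Tomas--Stein-type bound on the dyadic pieces; it uses the elementary $L^2\to L^2$ multiplier bound $\|T_j\|_{2\to2}\lesssim 2^j$ (Plancherel plus the rapid decay of $\widehat\psi_j$ and the $(d-1)$-dimensionality of the surface measures). Interpolating $(L^2\to L^2,2^j)$ against $(L^1\to L^\infty,t^\alpha 2^{-j((d-1)/2-\alpha)})$ with the duality parameter $\theta=1/q$ gives $\|T_j\|_{p\to p'}\lesssim t^{\alpha/q}2^{j(1-((d+1)/2-\alpha)/q)}$, whose geometric sum converges precisely when $\alpha<(d+1)/2-q$. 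The growth $\sim 2^j$ of the $L^2$ endpoint is essential to this bookkeeping. By contrast, a localized Tomas--Stein bound would yield $\|T_j\|_{\kappa\to\kappa'}\lesssim1$ (height $\sim 2^j$ times thickness $\sim 2^{-j}$ of the smoothed multiplier) with no $t$-dependence and no $j$-dependence; interpolating that against the $L^1\to L^\infty$ bound and summing, the sum is dominated by the $j=0$ term, which just reproduces the global interpolation, so the dyadic decomposition buys nothing for $d\geq4$. For your scheme to recover the sharp range you would need to prove a $t$-dependent annular Tomas--Stein bound of the form $\|T_j\|_{\kappa\to\kappa'}\lesssim\min(1,t2^j)$ by tracking the cancellation between the two smoothed surface measures; this is plausible but nontrivial and is not what you have written, and in any case the paper's $L^2$ endpoint is both simpler (no curvature input on the annuli) and already suffices.
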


\begin{proof}[Proof of Theorem \ref{secondorder}]
  Recall that $V\in L^{\frac{d+1}{2}-\epsilon}$ satisfies the assumption of
  Lemma \ref{neumann}. Thus, $H_\lambda$ has an eigenvalue $-e_j(\lambda)<0$
  if and only if
  \begin{align}
    \begin{split}
      \lambda\ln(1+\tau/e_j(\lambda)) & [\cb_S(\lambda) + \lambda(\cw_S(0)-\cw_S(e_j(\lambda)))\\
      & \quad + \F_SV^{1/2}\left(\sum_{n\geq2}(-1)^n(\lambda BS_\reg(e_j(\lambda)))^n\right)|V|^{1/2}\F_S^*]
    \end{split}
  \end{align}
  has an eigenvalue $-1$. Thus, our claim is established, once we show 
  $\lim_{e\to0}\cw_S(e)=\cw_S(0)$ in operator norm topology. In turn,
  by the definition of $\cw_S(e)$, this follows once we show the existence of
  $\lim_{e\to0}BS_\reg(e)=BS_\reg(0)$ since $|V|^{1/2}\F_S^*$ and $\F_SV^{1/2}$
  are bounded by the Tomas--Stein theorem \eqref{eq:tsexplicit}.
  We decompose $BS_\reg(e)=BS^\high(e)+BS_\reg^\low(e)$ and observe that
  $BS^\high(e)\to BS^\high(0)$ (e.g., by Plancherel and dominated convergence).
  On the other hand, Proposition \ref{holdercontts} shows that the difference
  \begin{align*}
    & BS_\reg^\low(e) - BS_\reg^\low(0)\\
    & \quad = \sum_{\pm}\int_0^{\tau} \left(\sqrt{|V|}(\F_{S_t^{\pm}}^*\F_{S_t^{\pm}}-\sqrt{1\pm t}\,\F^*_{S}\F_{S})\sqrt{V}\right)\left(\frac{1}{t+e}-\frac{1}{t}\right)\,\frac{\rd t}{2\sqrt{1\pm t}}
  \end{align*}
  vanishes in operator norm as $e\to0$. This concludes the proof.
\end{proof}

\appendix
\section{H\"older continuity of the Tomas--Stein theorem}
\label{a:tsholdercont}

In this section we prove Proposition \ref{holdercontts} on the H\"older
continuity of the Tomas--Stein theorem for the sphere. The arguments can
be generalized to treat arbitrary smooth, curved, and compact hypersurfaces
by refining the analysis in the proof of Proposition \ref{tsuniform}.
However, for the sake of simplicity, we restrict ourselves to the unit
sphere $S=\bs^{d-1}$.

\begin{lemma}
  \label{holdercontabstract}
  Let $\tau\in(0,1)$, $0<\beta\leq(d-1)/2$, $p_\circ=2(1+\beta)/(2+\beta)\in(1,2)$,
  and $1\leq p<p_\circ$, and denote $1/q:=1/p-1/p'$ and
  $\widehat{\rd\omega^{\pm}}:=\widehat{\rd\omega_{S_t^{\pm}}}-\widehat{\rd\omega_S}$.
  If there is $\alpha\in(0,\min\{\beta+1-q,q\})$ such that
  \begin{align}
    \label{eq:ptboundmu}    
    |\widehat{\rd\omega^{\pm}}(x)|
    \leq c_{\tau}\, t^\alpha(1+|x|)^{\alpha-\beta}
  \end{align}
  holds for some $c_{\tau}>0$ and all $t\in(0,\tau)$, then 
  \begin{align}
    \label{eq:holdercontabstract}
    \sup_{t\in(0,\tau)}\|\F_{S_t^\pm}^*\F_{S_t^\pm}-\sqrt{1\pm t}\F_S^*\F_S\|_{L^{p}\to L^{p'}}
    \lesssim_{\alpha,q,\tau} t^{\alpha/q}\,.
  \end{align}
\end{lemma}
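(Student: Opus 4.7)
The plan is to prove the operator bound by Riesz--Thorin interpolation between two endpoint estimates extracted from the pointwise hypothesis on $\widehat{\rd\omega^\pm}$.

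The first endpoint is immediate: evaluating the hypothesis at $x = 0$ gives $\|\widehat{\rd\omega^\pm}\|_{L^\infty} \leq c_\tau t^\alpha$, so the convolution operator is bounded $L^1 \to L^\infty$ with norm $\leq c_\tau t^\alpha$. In the notation of the target bound, this is $t^{\alpha/q}$ at $q=1$. The second endpoint comes from Young's inequality: for any $q_0$ satisfying $(\beta-\alpha) q_0' > d$, that is $q_0 < d/(d-\beta+\alpha)$, integrating the decay bound gives $\|\widehat{\rd\omega^\pm}\|_{L^{q_0'}} \lesssim t^\alpha$, whence $L^{p_0} \to L^{p_0'}$ boundedness with norm $\lesssim t^\alpha$ at the pair $1/p_0 - 1/p_0' = 1/q_0$. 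Riesz--Thorin between the two endpoints yields $\|\widehat{\rd\omega^\pm}\ast\cdot\|_{L^p \to L^{p'}} \lesssim t^\alpha$ throughout $q \in [1, q_0]$, and since $\alpha \geq \alpha/q$ for $q \geq 1$ and $t \leq 1$ forces $t^\alpha \leq t^{\alpha/q}$, this yields the claim on that range.

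The main obstacle is that the admissibility range $q \in [1, 1+\beta-\alpha)$ (encoded in $\alpha < \beta + 1 - q$) typically exceeds the Young threshold $d/(d-\beta+\alpha)$, leaving a gap near the Tomas--Stein exponent $1+\beta$ where direct Young's fails because $\widehat{\rd\omega^\pm}$ is not in $L^{q'}$. To close this gap I would split $\widehat{\rd\omega^\pm} = K_{\leq R} + K_{>R}$ at a radius $R = R(t)$ chosen as a negative power of $t$, and use Young's on the near part via $\|K_{\leq R}\|_{L^{q'}} \lesssim t^\alpha R^{d/q'}$. The far part is controlled through the uniform Tomas--Stein estimate of Proposition \ref{tsuniform} applied separately to $\F_{S_t^\pm}^*\F_{S_t^\pm}$ and $\F_S^*\F_S$, which gives $L^{p_\circ} \to L^{p_\circ'}$ boundedness of the difference with constant independent of $t$. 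Balancing the two contributions by an appropriate choice of $R$ produces the desired $t^{\alpha/q}$; the admissibility constraints $\alpha < \min(q, 1+\beta-q)$ are exactly what is required for this balance to succeed in the full range.

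Finally, I need to account for the mismatch between $\widehat{\rd\omega^\pm}$ as defined in the lemma and the actual operator kernel $\widehat{\rd\omega_{S_t^\pm}} - \sqrt{1\pm t}\,\widehat{\rd\omega_S}$, which differs from $\widehat{\rd\omega^\pm}$ by $(1 - \sqrt{1\pm t})\widehat{\rd\omega_S}$. This correction contributes at most $|1-\sqrt{1\pm t}| = O(t)$ times the $L^p \to L^{p'}$ norm of $\F_S^*\F_S$, which is $O(1)$ by Tomas--Stein, and the resulting $O(t)$ is absorbed into $t^{\alpha/q}$ because $\alpha < q$ forces $\alpha/q < 1$.
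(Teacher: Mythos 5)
Your initial observations are sound: the $L^1\to L^\infty$ endpoint, the handling of the correction term $(1-\sqrt{1\pm t})\widehat{\rd\omega_S}$ via Tomas--Stein and the bound $\alpha/q<1$, and the realization that Young's inequality alone only reaches $q<d/(d-\beta+\alpha)$, far short of the required range $q<\beta+1-\alpha$, all match the structure of the problem. But the proposed way of closing the gap --- a near/far split of the kernel at radius $R=R(t)$ with Young's for $K_{\leq R}$ and (a uniform version of) Tomas--Stein for $K_{>R}$ --- does not work, and the reason is instructive.

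The far part $K_{>R}$ is the wrong object to control with Tomas--Stein. Tomas--Stein applied to $\F_{S_t^\pm}^*\F_{S_t^\pm}$ and $\F_S^*\F_S$ separately gives an $O(1)$ bound for the \emph{full} operator with kernel $\widehat{\rd\omega^\pm}$, with no gain in either $t$ or $R$; subtracting the near part, which you've bounded by $t^\alpha R^{d/q'-(\beta-\alpha)}$ (a \emph{positive} power of $R$), only makes the far part estimate worse, and there is nothing to balance. If instead you try to interpolate the far part between $L^1\to L^\infty$ (where you do get $R^{\alpha-\beta}$ decay) and Tomas--Stein at $q=1+\beta$, a short computation shows the resulting exponent of $t$ after balancing is $\alpha\cdot\frac{d(1+\beta-q)}{q\,[d\beta-(\beta-\alpha)(1+\beta)]}$, and the requirement that this be $\geq\alpha/q$ reduces to $d(q-1)\leq(\beta-\alpha)(1+\beta)$, which fails for a large portion of the admissible range (e.g.\ $d=3$, $\beta=1$, $\alpha=1/2$, $q=7/5$). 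So the scheme produces a strictly weaker power of $t$ than claimed.

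The missing ingredient is the $L^2\to L^2$ estimate on localized pieces of the kernel, which is exactly what the paper's dyadic decomposition $\widehat{\rd\omega^\pm}=\sum_k\widehat{\rd\omega^\pm}\psi_k$ supplies: by Plancherel, $\|T_k\|_{2\to2}=\|\rd\omega^\pm*\hat\psi_k\|_\infty\lesssim 2^k$, and this uses not only the pointwise decay of $\widehat{\rd\omega^\pm}$ but the fact that $\rd\omega^\pm$ is a finite measure supported on a codimension-one set, so smearing it at scale $2^{-k}$ costs only a factor $2^k$. Interpolating $\|T_k\|_{1\to\infty}\lesssim t^\alpha 2^{-k(\beta-\alpha)}$ against $\|T_k\|_{2\to2}\lesssim 2^k$ with parameter $\theta=1/q$ gives $\|T_k\|_{p\to p'}\lesssim t^{\alpha/q}2^{k(1-(\beta+1-\alpha)/q)}$, and the series over $k$ converges precisely when $\alpha<\beta+1-q$. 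This is why the dyadic split, rather than a single near/far split, is essential: in a sharp cutoff $K=K_{\leq R}+K_{>R}$ the far piece $K_{>R}$ has Fourier transform $\rd\omega^\pm - \rd\omega^\pm*\widehat{\chi_{\leq R}}$, which still carries the singular measure and therefore admits \emph{no} finite $L^2\to L^2$ bound, leaving you with only the unfavourable $L^1\to L^\infty$/Tomas--Stein pair discussed above. Your proposal is essentially an attempt to re-derive the Tomas argument without the $L^2$ orthogonality input, and that input is exactly what is lost.
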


\begin{proof}
  In the following, we consider only $S_t^+$ and write $S_t\equiv S_t^+$
  and $\rd\omega\equiv\rd\omega^+$. As in Tomas' proof, we decompose
  \begin{align}
    \label{eq:holdercontabstractaux}
    \begin{split}
      (\F_{S_t}^*\F_{S_t}-\sqrt{1+t}\F_{S}^*\F_{S})f
      & = [(\F_{S_t}^*\F_{S_t}-\F_{S}^*\F_{S}) - (\sqrt{1+t}-1)\F_S^*\F_S]f\\
      & = \widehat{\rd\omega}*f - (\sqrt{1+t}-1)\F_S^*\F_Sf\\
      & = \sum_{k=0}^{\infty}(\widehat{\rd\omega}\psi_k)*f - (\sqrt{1+t}-1)\F_S^*\F_S f\\
      & =: \sum_{k=0}^{\infty}T_kf - (\sqrt{1+t}-1)\F_S^*\F_S f\,,
    \end{split}
  \end{align}
  where $(\psi_k)_k$ is a standard dyadic partition of unity such that
  $\psi_0$ is adapted to the unit ball and $\psi_k$ is adapted to the
  annulus $2^{k-1}\leq |x|\leq 2^{k+1}$.
  By the Tomas--Stein theorem, the operator norm of the second term on the
  right side of \eqref{eq:holdercontabstractaux} is bounded by a constant
  times
  \begin{align*}
    |\sqrt{1+t}-1|\|\F_S^*\F_S\|_{{L^{p}\to L^{p'}}} \lesssim t\,.
  \end{align*}
  We now focus on the first term on the right side of
  \eqref{eq:holdercontabstractaux}.
  By the triangle inequality, Plancherel, the rapid decay of $\hat\psi_k$,
  and the fact that $\rd\omega_{S_t}$ is a $(d-1)$-dimensional measure, we
  estimate
  \begin{align}
    \label{eq:boundtk0}
    \|T_k\|_{2\to 2}\lesssim 2^k\,,
  \end{align}
  whereas we use \eqref{eq:ptboundmu} to bound
  \begin{align}
    \label{eq:boundtk1}
    \|T_k\|_{1\to \infty}\lesssim t^{\alpha}\, 2^{-k\left(\beta-\alpha\right)}\,.
  \end{align}
  Interpolating between those two bounds yields
  \begin{align*}
    \|T_k\|_{p\to p'}\lesssim 2^{k(1-\theta)} \cdot t^{\alpha\theta}\, 2^{-k\left(\beta-\alpha\right)\theta},
  \end{align*}
  where $1-\theta=2/p'$, i.e., $\theta=1/q=1/p-1/p'$.
  Thus, 
  \begin{align*}
    \|T_k\|_{p\to p'}\lesssim t^{\alpha/q}\, 2^{k\left(1-\frac{\beta+1-\alpha}{q}\right)}\,.
  \end{align*}
  Since the exponent of $2^k$ is negative for $\alpha<\beta+1-q$,
  we obtain
  \begin{align*}
    \|\sum_{k\geq0}T_k\|_{p\to p'}
    \lesssim t^{\alpha/q}\sum_{k=0}^{\infty} 2^{k\left(1-\frac{\beta+1-\alpha}{q}\right)}
    \lesssim t^{\alpha/q}\,,
  \end{align*}
  which concludes the proof of \eqref{eq:holdercontabstract}.
\end{proof}

\begin{proof}[Proof of Proposition \ref{holdercontts}]
  For $T(\xi)=|\xi^2-1|$ we have 
  \begin{align*}
    S_0 = \mathbb{S}^{d-1} := S\,, \quad
    S_t = \sqrt{1-t}S \cup \sqrt{1+t}S\,, \quad 0<t\leq \tau<1\,.
  \end{align*}
  Setting $\rho=\sqrt{1\pm t}\in[\sqrt{1-\tau},\sqrt{1+\tau}]$,
  Lemma \ref{holdercontabstract} with $\beta=(d-1)/2$, i.e., $p_\circ=\kappa$
  shows that \eqref{eq:holdercont} would follow from
  \begin{align}
    \label{eq:holdercont2}
    |\widehat{\rd\omega_{\rho S}}(x)-\widehat{\rd\omega_{S}}(x)|
    \lesssim |\rho-1|^{\alpha}(1+|x|)^{\alpha-\frac{d-1}{2}}
  \end{align}
  for some $\alpha\in(0,\min\{(d-1)/2+1-q,q\})$.
  We have
  \begin{align}
    \label{eq:ptwiseboundtransported}
    \widehat{\rd\omega_{\rho S}}(x)
    = \int_{\rho S}\me{2\pi i x\cdot\xi}\rd\omega_{\rho S}(\xi)
    =\rho^{d-1}\widehat{\rd\omega_{S}}(\rho x)\,, \quad x\in\R^d\,.
  \end{align}
  The classic stationary phase argument (see, e.g., Stein
  \cite[Theorem 1]{Stein1986}) yields
  \begin{align*}
    |\widehat{\rd\omega_{S}}(x)|\lesssim (1+|x|)^{-\frac{d-1}{2}}\,,
  \end{align*}
  with the same bound for $|\nabla\widehat{\rd\omega_{S}}(x)|$.
  Combining this with \eqref{eq:ptwiseboundtransported} yields
  \eqref{eq:holdercont2} for $\alpha=0$ and $\alpha=1$ and hence
  for any $\alpha\in[0,1]$, thereby concluding the proof.
\end{proof}

\section{Further $L^2$ based restriction estimates }
\label{a:mt}

Throughout this appendix we take $T(\xi)=|\xi^2-1|$.
Our main results crucially relied on the fact that $\F_{S_t}^*\F_{S_t}$
belongs to $\mathcal{B}(L^p\to L^{p'})$ or
$\mathcal{B}(L^2((1+|x|)^{1+\epsilon}\,\dx)\to L^2((1+|x|)^{-1-\epsilon}\,\dx)$
uniformly for $t\in[0,\tau]$ and some fixed $\tau>0$.
Besides the Tomas--Stein estimate or the trace lemma, there exist
various other $L^2$-based restriction estimates.
In this appendix we present two such estimates and apply them
to obtain corresponding upper bounds on the Birman--Schwinger
operator, whenever the potential belongs to the suitable dual space.
It turns out that this space contains spherically symmetric
$V$ with almost $L^d$ decay, but see Propositions \ref{vegarestriction}
and \ref{mtrestriction} below for the precise assumptions. By the
uniformity of these restriction theorems on small compact sets around
$\bs^{d-1}$ (see \eqref{eq:vegarescaled} and \eqref{eq:barcelo}) and
following the arguments in Sections \ref{s:prelims} and
\ref{proofmainresult}, one obtains the analog of Theorem
\ref{thm. asymptotics} for potentials living in the spaces mentioned
below. This is the content of Theorem \ref{asymptoticsradial}.

\subsection{Estimates for potentials in mixed norm spaces}

\label{ss:vega}

Vega \cite{Vega1992} observed that the Tomas--Stein estimate
can be enhanced for $\bs^{d-1}$ if one replaces the $L^p$ spaces by
suitable mixed norm spaces. For $k>0$ recall the extension operator
\begin{align*}
  (\F_{k\bs^{d-1}}^*g)(x)
  = \int_{k\bs^{d-1}}g(\xi)\me{2\pi ix\cdot\xi}\,\rd\omega_{k\bs^{d-1}}(\xi)
  = k^{d-1}\int_{\bs^{d-1}}g(k\xi)\me{2\pi ikx\cdot\xi}\,\rd\omega(\xi)
\end{align*}
where $\rd\omega$ and $\rd\omega_{k\bs^{d-1}}$ denote the euclidean
surface measures on $\bs^{d-1}$ and $k\bs^{d-1}$, respectively.

\begin{theorem}[{Vega \cite[Theorem 2]{Vega1992}}]
  \label{vega}
  Let $d\in\N\setminus\{1\}$ and
  \begin{align*}
    1\leq p < \frac{2d}{d+1} \quad \text{and}\quad \frac12 \geq \frac{1}{\sigma'} \geq \max\left\{\frac{1}{p'},\frac{1}{p'}\ \frac{2d}{d-2}-\frac12\right\}\,.
  \end{align*}
  Then the restriction estimate
  \begin{align}
    \label{eq:vegarescaled}
    \begin{split}
      k^{-\frac{d-1}{2}}\|\hat f\|_{L^2(k\bs^{d-1})}
      = \|\hat f(k\cdot)\|_{L^2(\bs^{d-1})}
      & \lesssim \left[\int_0^\infty \left(\int_{\bs^{d-1}}|k^{-d}f(r\omega/k)|^\sigma\,\rd\omega\right)^{\frac p\sigma}r^{d-1}\,\dr\right]^{\frac1p}\\
      & = k^{-d/p'}\|f\|_{L^p(\R_+,r^{d-1}\,\dr:L^\sigma(\bs^{d-1}))}\,,
    \end{split}
  \end{align}
  the extension estimate
  \begin{align*}
    \|\F_{k \bs^{d-1}}^*g\|_{L^{p'}(\R_+,r^{d-1}\,\dr: L^{\sigma'}(\bs^{d-1}))}
    \lesssim k^{\frac{d-1}{2}-\frac{d}{p'}}\|g\|_{L^2(k\bs^{d-1})}\,,
  \end{align*}
  and the combined estimate
  \begin{align}
    \label{eq:vega3}
    \|\F_{k \bs^{d-1}}^* \F_{k\bs^{d-1}}\psi\|_{L^{p'}(\R_+,r^{d-1}\,\dr: L^{\sigma'}(\bs^{d-1}))}
    \lesssim k^{d-1-\frac{2d}{p'}}\|\psi\|_{L^p(\R_+,r^{d-1}\,\dr:L^\sigma(\bs^{d-1}))}
  \end{align}
  hold for all $k>0$ and are equivalent to each other.
\end{theorem}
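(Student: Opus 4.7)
My plan is to first establish the three-way equivalence of the stated estimates, then reduce to unit scale by rescaling, and finally prove the restriction bound via a spherical harmonic decomposition that diagonalizes the Fourier transform.

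The operators $\F_{k\bs^{d-1}}$ and $\F_{k\bs^{d-1}}^*$ are adjoint with respect to the pairings on $L^2(k\bs^{d-1})$ and on $\R^d$, and since $L^p(r^{d-1}\dr:L^\sigma(\bs^{d-1}))$ is dual to $L^{p'}(r^{d-1}\dr:L^{\sigma'}(\bs^{d-1}))$ under polar coordinates, the restriction and extension estimates are equivalent. The combined $L^p\to L^{p'}$ bound on $\F^*\F$ follows by composition, and conversely one recovers extension from it via the identity $\|\F g\|_{L^2(k\bs^{d-1})}^2 = \langle\F^*\F g,g\rangle$ together with Cauchy--Schwarz (the standard $TT^*$ argument). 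A change of variable $x\mapsto y/k$ on the physical side, together with $\rd\omega_{k\bs^{d-1}}=k^{d-1}\rd\omega$, then reduces the general $k$ statement to the case $k=1$; the stated powers of $k$ arise as the Jacobian factors.

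For the restriction estimate at $k=1$, I would expand $f$ in spherical harmonics, $f(r\omega)=\sum_{\ell,m}f_{\ell,m}(r)Y_{\ell,m}(\omega)$. The Fourier transform acts diagonally on each mode via the Funk--Hecke (Bochner) identity,
\begin{align*}
\widehat{f_{\ell,m}Y_{\ell,m}}(\xi) = c_d\,(-\I)^\ell\, Y_{\ell,m}(\xi/|\xi|)\,|\xi|^{-(d-2)/2}\int_0^\infty f_{\ell,m}(r)J_{\nu_\ell}(2\pi r|\xi|)\,r^{d/2}\,\dr,
\end{align*}
with $\nu_\ell=\ell+(d-2)/2$. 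Plancherel on $\bs^{d-1}$ applied at $|\xi|=1$ then turns the desired bound into the vector-valued Hankel-type inequality
\begin{align*}
\sum_{\ell,m}\left|\int_0^\infty f_{\ell,m}(r)J_{\nu_\ell}(2\pi r)\,r^{d/2}\,\dr\right|^2 \;\lesssim\; \|f\|_{L^p(\R_+,r^{d-1}\dr:L^\sigma(\bs^{d-1}))}^2.
\end{align*}

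The main obstacle is this last inequality, which requires bounds for the family of Bessel transforms that are essentially uniform in $\ell$. I would split the $r$-integration into the three standard Bessel regimes -- small-argument $r\ll\nu_\ell$, transition $r\sim\nu_\ell$, and oscillatory $r\gg\nu_\ell$ -- and exploit the classical uniform estimates $|J_\nu(x)|\lesssim\min\{(x/\nu)^\nu,\nu^{-1/3},x^{-1/2}\}$ to handle each piece. The angular conditions enter as follows: the hypothesis $1/\sigma'\geq 1/p'$ provides the embedding $L^{p'}(\bs^{d-1})\hookrightarrow L^{\sigma'}(\bs^{d-1})$ needed to dualise against the radial $L^p$ norm, while the more restrictive bound $1/\sigma'\geq(1/p')\cdot 2d/(d-2)-1/2$ is a spherical-Sobolev threshold that guarantees summability over the angular modes; the Knapp-type condition $p<2d/(d+1)$ is the boundary beyond which these sums diverge. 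I expect the transition region $r\sim\nu_\ell$, where $J_{\nu_\ell}$ attains its peak amplitude $\sim\nu_\ell^{-1/3}$, to be the most delicate part, since this is where the interplay between the radial exponent $p$, the angular exponent $\sigma$, and the summation over $\ell$ is tightest.
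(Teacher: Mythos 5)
This statement is cited by the paper directly from Vega \cite[Theorem 2]{Vega1992}; the paper gives no proof of it, so there is no internal argument to compare against. Your outline is, as far as it goes, consistent with the approach in the source (equivalence of restriction/extension/$T^*T$ by duality, reduction to $k=1$ by scaling, spherical-harmonic diagonalisation of the Fourier transform via Funk--Hecke, and then uniform Bessel asymptotics). The equivalence and rescaling steps you sketch are correct in substance, though you should be careful that the $T^*T$ identity is $\|\F_{k\bs^{d-1}}f\|_{L^2}^2=\langle\F_{k\bs^{d-1}}^*\F_{k\bs^{d-1}}f,f\rangle$ with $f$ a test function on $\R^d$, not $g\in L^2(k\bs^{d-1})$ as written; this is what recovers restriction (hence extension by duality) from the combined bound.

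The genuine gap is that the entire analytic content of the theorem has been deferred. You correctly reduce to the Hankel-type inequality
\begin{align*}
\sum_{\ell,m}\Bigl|\int_0^\infty f_{\ell,m}(r)J_{\nu_\ell}(2\pi r)\,r^{d/2}\,\dr\Bigr|^2 \;\lesssim\; \|f\|_{L^p(\R_+,r^{d-1}\dr:L^\sigma(\bs^{d-1}))}^2\,,
\end{align*}
and correctly name the three Bessel regimes and the uniform bound $|J_\nu(x)|\lesssim\min\{(x/\nu)^\nu,\nu^{-1/3},x^{-1/2}\}$, but you never carry out any of the three cases, never verify how the two conditions $\frac{1}{\sigma'}\geq\frac{1}{p'}$ and $\frac{1}{\sigma'}\geq\frac{1}{p'}\frac{2d}{d-2}-\frac12$ actually emerge from the computation, and never explain how the mixed norm $\|f\|_{L^p(L^\sigma)}$ on the right is related to the sequence $(f_{\ell,m})$ appearing on the left (this requires some quantitative control on spherical harmonic projections in $L^\sigma(\bs^{d-1})$, which is not automatic). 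You yourself flag the transition regime $r\sim\nu_\ell$ as ``the most delicate part'' and say you ``would split'' and ``expect'' -- that is a plan, not a proof. A complete argument must establish, uniformly in $\ell$, the dual bounds on $\int_0^\infty J_{\nu_\ell}(2\pi r)g(r)\,r^{d/2}\,\dr$ tested against $L^{p'}(r^{d-1}\dr)$, sum the resulting estimates over $\ell$ with the appropriate angular Lebesgue weight to produce the $L^{\sigma'}$ factor, and show the series converges exactly when $p<2d/(d+1)$ and $\sigma$ lies in the stated range; none of this is done.
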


Recall that the exponent $2d/(d+1)$ is sharp, i.e., Theorem \ref{vega}
cannot hold for $p\geq2d/(d+1)$.
Moreover, observe that the estimates are uniform in the radius $k$ as long
as $k\in[1-\delta,1+\delta]$ for some $0<\delta<1$. Theorem \ref{vega}
allows us to prove the following bound on the Birman--Schwinger operator.

\begin{proposition}
  \label{vegarestriction}
  Let $d\geq3$,
  $1\leq p<2d/(d+1)$, $1/2\geq 1/\sigma'\geq\max\{1/p',2d/(p'(d-2))-1/2\}$,
  $e>0$, and assume $T(\xi)=|\xi^2-1|$.
  Suppose $V \in L^{p/(2-p)}(\R_+,r^{d-1}\,\dr:\ L^{\sigma/(2-\sigma)}(\bs^{d-1}))$
  and, if $p\leq 2d/(d+2)$, suppose additionally $V\in L^{d/2}(\R^d)$.
  Then
  \begin{align*}
    \|(T+e)^{-1/2}|V|^{1/2}\|^2
    & \lesssim g(e) \left[\int_0^\infty \|V(r\cdot)\|_{L^{\sigma/(2-\sigma)}(\bs^{d-1})}^{p/(2-p)} r^{d-1}\,\dr\right]^{\!(2-p)/p} \\
    & \quad + \|V\|_{d/2}\theta(2d/(d+2)-p)
  \end{align*}
  where
  \begin{align}
    \label{eq:defg}
    g(e) = \int_{1/2}^{3/2} \frac{k^{d\left(\frac1p-\frac{1}{p'}\right)-1}}{T(k)+e}\,\dk \lesssim \max\{\ln(1/e),1\}\,.
  \end{align}
\end{proposition}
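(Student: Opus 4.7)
The plan is to apply the $TT^*$ identity to reduce to a quadratic-form estimate on the Fourier side, pass to polar coordinates, and handle the resulting radial integral by splitting the $k$-axis into three regions (near the unit sphere, low frequencies, and high frequencies). Vega's estimate (Theorem \ref{vega}) will control the angular factor on each region and will produce the entire bound on the near and low regions; the high region will require either a continued use of Vega (when $p>2d/(d+2)$) or a Sobolev-type bound involving $\|V\|_{L^{d/2}}$.

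Concretely, by $TT^*$ and Plancherel one has $\|(T+e)^{-1/2}|V|^{1/2}\|^2=\sup_{\|h\|_2=1}\int_{\R^d}|\widehat{F}(\xi)|^2/(T(\xi)+e)\,\rd\xi$ with $F:=|V|^{1/2}h$. Switching to polar coordinates $\xi=k\omega$, the right-hand side equals $\sup_{\|h\|_2=1}\int_0^\infty k^{d-1}\|\widehat{F}(k\cdot)\|_{L^2(\bs^{d-1})}^2/(|k^2-1|+e)\,\rd k$. Vega's estimate \eqref{eq:vegarescaled} then gives, uniformly in $k>0$, $\|\widehat{F}(k\cdot)\|_{L^2(\bs^{d-1})}^2\lesssim k^{-2d/p'}\|F\|_{L^p(\R_+,r^{d-1}\rd r:L^\sigma(\bs^{d-1}))}^2$. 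To separate $V$ from $h$ inside the mixed-norm quantity, I would apply H\"older twice: first in the angular direction with exponents $2\sigma/(2-\sigma)$ and $2$, giving $\|F(r\cdot)\|_{L^\sigma}^2\leq \|V(r\cdot)\|_{L^{\sigma/(2-\sigma)}}\|h(r\cdot)\|_{L^2}^2$, and then in the radial direction with exponents $p/(2-p)$ and $1$, yielding
\begin{align*}
\|F\|_{L^p(\R_+,r^{d-1}\rd r:L^\sigma(\bs^{d-1}))}^2 \leq \left[\int_0^\infty\|V(r\cdot)\|_{L^{\sigma/(2-\sigma)}}^{p/(2-p)} r^{d-1}\,\rd r\right]^{(2-p)/p}\|h\|_2^2.
\end{align*}

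The remaining task is to bound the radial $k$-integral. On the main region $k\in[1/2,3/2]$ this integral is exactly $g(e)$, and the asymptotic $g(e)\lesssim \max\{\ln(1/e),1\}$ is immediate from $|k^2-1|\sim 2|k-1|$ near $k=1$. On the low region $k<1/2$, $T(k\omega)+e\geq 3/4$, and since $p<2d/(d+1)<2$ forces $d-1-2d/p'>-1$, the integral $\int_0^{1/2}k^{d-1-2d/p'}\,\rd k$ is finite and the contribution absorbs into the first term on the right-hand side. On the high region $k>3/2$ one has $T(k\omega)+e\geq k^2/4$, so the $k$-integrand is dominated by $k^{d-3-2d/p'}$, which is integrable at infinity iff $p>2d/(d+2)$; in that case the contribution again absorbs into the first term. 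For $p\leq 2d/(d+2)$ I would instead bypass Vega on this region and estimate
\begin{align*}
\int_{|\xi|>3/2}\frac{|\widehat{F}(\xi)|^2}{T(\xi)+e}\,\rd\xi \lesssim \|\langle\nabla\rangle^{-1}|V|^{1/2}h\|_{L^2}^2 \lesssim \|V\|_{L^{d/2}}\|h\|_{L^2}^2,
\end{align*}
where the last inequality follows from H\"older and the Sobolev embedding $\langle\nabla\rangle^{-1}:L^2\to L^{2d/(d-2)}$ (requiring $d\geq 3$).

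The main technical point is the dichotomy at the threshold $p=2d/(d+2)$: the Vega bound is integrable at infinity in $k$ precisely when $p>2d/(d+2)$, and the $\|V\|_{L^{d/2}}$ correction is needed exactly in the complementary range, which explains the characteristic factor $\theta(2d/(d+2)-p)$ in the statement. The rest is routine H\"older bookkeeping, but must be executed carefully to recover the precise exponent pair $(\sigma/(2-\sigma),p/(2-p))$ on the right-hand side.
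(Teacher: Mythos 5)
Your proof is correct and uses the same essential ingredients as the paper's (Vega's restriction estimate, H\"older to separate $|V|^{1/2}$ from the test function, and a Sobolev bound producing the $\|V\|_{d/2}$ term), but the decomposition is organized differently. The paper works at the operator level: a radial cutoff $\chi$ supported in $\{1/2\leq|\xi|\leq 3/2\}$ isolates the near-sphere part, which is treated by Plancherel and Vega; the complementary part $1-\chi(-\I\nabla)$ is handled in one piece via $(T(\xi)+e)^{-1/2}(1-\chi(\xi))\lesssim(1+|\xi|^2)^{-1/2}$ together with Sobolev embedding---and, for $p>2d/(d+2)$, interpolation plus the elementary inclusion $L^p(\R_+,r^{d-1}\rd r:L^\sigma(\bs^{d-1}))\hookrightarrow L^p(\R^d)$ (which uses $\sigma\geq p$). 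You instead apply Vega's estimate \eqref{eq:vegarescaled} for every $k>0$ after Plancherel and reduce the matter to convergence of the scalar integral $\int k^{d-1-2d/p'}(T(k)+e)^{-1}\,\dk$ on the three ranges $(0,1/2)$, $[1/2,3/2]$, $(3/2,\infty)$, calling on the Sobolev bound only on the last range and only when $p\leq 2d/(d+2)$. Your version makes the threshold $p=2d/(d+2)$ and the factor $\theta(2d/(d+2)-p)$ appear transparently as the high-frequency integrability boundary of the Vega bound, which is arguably more illuminating than burying the threshold in the choice of Sobolev exponent; both arguments are complete. One cosmetic slip: the radial H\"older exponents you actually need are $2/(2-p)$ and $2/p$ (applied after raising the angular factors to the $p$-th power), not ``$p/(2-p)$ and $1$''; the displayed inequality you wrote is nevertheless the correct one.
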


Here, $\theta$ denotes the Heaviside function with the convention $\theta(0)=1$.
Taking, e.g., $\sigma=2$ and $p\to 2d/(d+1)$ shows that spherically symmetric
$L^d(\R^d)$ potentials are almost admissible.

\begin{proof}
  Let $f=|V|^{1/2}\phi$ and $\phi\in L^2(\R^d)$.
  By H\"older's inequality, we have
  \begin{align*}
    \|f\|_{L^p(\R_+,r^{d-1}\,\dr:L^\sigma(\bs^{d-1}))} 
    \leq \left[\int_0^\infty \left(\int_{\bs^{d-1}}|V(r\omega)|^{\frac{\sigma}{2-\sigma}}\,\rd\omega\right)^{\frac{p(2-\sigma)}{\sigma(2-p)}}r^{d-1}\,\dr\right]^{\frac{2-p}{2p}}\|\phi\|_2\,.
  \end{align*}
  As in Section \ref{proofmainresult}, we consider small and large momenta
  separately and start with the latter.
  So let $\chi\in C_c^\infty(\R^d:[0,1])$ be a radial bump function centered
  at $|\xi|=1$ with $\supp\chi\subseteq \{\xi\in\R^d:1/2\leq|\xi|\leq3/2\}$.
  If $p\leq 2d/(d+2)$, the $L^{\frac{2d}{d+2}}\to L^2$ boundedness of
  $(T+e)^{-1/2}(1-\chi(-i\nabla))$ follows from
  $(T(\xi)+e)^{-1/2}(1-\chi)\lesssim (1+\xi^2)^{-1/2}$ and Sobolev
  embedding. Else, if $p>2d/(d+2)$, the $L^p\to L^2$ boundedness
  follows from Sobolev embedding, interpolation, and the fact that
  \begin{align*}
    \|f\|_p \lesssim \|f\|_{L^p(\R_+,r^{d-1}\,\dr:L^\sigma(\bs^{d-1}))} 
  \end{align*}
  by H\"older's inequality, since $\sigma\geq p$.

  Thus, we are left to estimate $\|\chi(T+e)^{-1/2}|V|^{1/2}\|_{2\to2}$
  with $p\in[1,2d/(d+1))$.
  By Plancherel, using spherical coordinates, and 
  Vega's estimate \eqref{eq:vegarescaled}, we obtain
  \begin{align*}
    \|\chi\, (T+e)^{-1/2}|V|^{1/2}\phi\|_2^2
    & = \int_0^\infty \dk\, \frac{\chi(k)^2 k^{d-1}}{T(k)+e} \int_{\bs^{d-1}} |\hat f(k\omega)|^2\,\rd\omega\\
    & \lesssim \|f\|_{L^p(\R_+,r^{d-1}\,\dr:L^\sigma(\bs^{d-1}))}^2 \int_{1/2}^{3/2} \dk\, \frac{k^{d-1-2d/p'}}{T(k)+e}\\
    & \leq g(e)\|V\|_{L^\frac{p}{2-p}(\R_+,r^{d-1}\,\dr:L^{\sigma/(2-\sigma)}(\bs^{d-1}))} \|\phi\|_2^2\,,
  \end{align*}
  which concludes the proof.
\end{proof}

\subsection{Estimates for potentials satisfying the MT condition}
We finally discuss potentials $V$ satisfying the
``radial Mizohata--Takeuchi'' condition.

\begin{definition}
  Let $V$ be a measurable, non-negative function on $\R^d$
  and $H(r) := \sup_{\omega\in\bs^{d-1}}V(r\omega)$. Then $V$ is said
  to satisfy the \emph{radial Mizohata--Takeuchi (MT) condition} if
  \begin{align}
    \label{eq:defmt}
    \|V\|_{\mt} := \sup_{\mu\geq0}\int_\mu^\infty\frac{H(r)r}{(r^2-\mu^2)^{1/2}}\,\dr<\infty\,.
  \end{align}
\end{definition}

Observe that $\|V(\cdot/k)\|_\mt = k\|V\|_\mt$ for all $k>0$.
We mention some examples of $V$ satisfying this condition.

\begin{example}~
  \begin{enumerate}
  \item Frank and Simon \cite[(4.2)]{MR3713021} showed
    $\|V\|_{\mt}\lesssim \|V\|_{L^{d,1}(\R_+,r^{d-1}\,\dr:\ L^\infty(\bs^{d-1}))}$,
    where
    $$
    \|V\|_{L^{d,1}(\R_+,r^{d-1}\,\dr:\ L^\infty(\bs^{d-1}))}
    := \int_0^\infty |\{r>0: \mathrm{ess-sup}_{\omega\in\bs^{d-1}}|V(r\omega)|>\alpha\}|_d^{1/d}\,\rd\alpha
    $$
    and $|\cdot|_d$ denotes the measure $|\bs^{d-1}|r^{d-1}\,\dr$.
  \item Barcelo, Ruiz, and Vega \cite[Proposition 1]{MR1479544} showed that for
    radial $V(x)=V(|x|)\equiv V(r)$, one has $\|V\|_{\mt}\lesssim \|V\|_{D_p}$
    for $p>2$, where
    $$
    \|V\|_{D_p} := \sum_{j=-\infty}^\infty \left(\int_{2^j}^{2^{j+1}}|V(r)|^p r^{p-1}\,\dr\right)^{1/p}\,.
    $$
    In particular, the functions $r^{-a}\one_{(0,1)}(r)+r^{-b}\one_{[1,\infty)}(r)$
    and $r^{-1}(1+|\log r|)^{-b}$ for $a<1$ and $b>1$ have finite
    $\|\cdot\|_{D_p}$ norm.
  \end{enumerate}
\end{example}

Barcelo, Ruiz, and Vega \cite{MR1479544} proved the following weighted
analog of the classical trace lemma.
\begin{theorem}[{\cite[Theorem 3]{MR1479544}}]
  \label{mt}
  Let $d\in\N\setminus\{1\}$ and $V$ be a radial, non-negative function
  satisfying $\|V\|_\mt<\infty$. Then the weighted restriction theorem
  \begin{align}
    \label{eq:barcelo}
    k^{-\frac{d-1}{2}} \|\hat f\|_{L^2(k\bs^{d-1})}
    = \|\hat f(k\cdot)\|_{L^2(\bs^{d-1})}
    \lesssim k^{-\frac{d-1}{2}}\|V\|_{\mt}^{1/2} \|f\|_{L^2(\R^d:V^{-1}(x)\,\dx)}\,,
  \end{align}
  the weighted extension estimate
  \begin{align*}
    \|\F_{k\bs^{d-1}}^*g\|_{L^2(V)} \lesssim \|V\|_\mt^{1/2} \|g\|_{L^2(k\bs^{d-1})}\,,
  \end{align*}
  and the combined estimate
  \begin{align}
    \label{eq:barcelo3}
    \|\F_{k\bs^{d-1}}^*\F_{k\bs^{d-1}}\psi\|_{L^2(V)}
    \lesssim \|V\|_\mt \|\psi\|_{L^2(V^{-1})}
  \end{align}
  hold for all $k>0$ and are equivalent to each other.
  Conversely, if one of the above estimates holds, and $V$ is radial
  and non-negative, then $\|V\|_\mt<\infty$.
\end{theorem}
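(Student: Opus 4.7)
The plan is to deduce the three estimates from one of them (e.g.\ the combined estimate (\ref{eq:barcelo3})) via $TT^*$ and scaling, and handle the converse separately. First I would note that under the pairing $\langle f,g\rangle=\int f\bar g\,\dx$, we have $L^2(V^{-1})'\cong L^2(V)$, so the operators $\F_{k\bs^{d-1}}\colon L^2(V^{-1})\to L^2(k\bs^{d-1})$ and $\F^*_{k\bs^{d-1}}\colon L^2(k\bs^{d-1})\to L^2(V)$ are mutually adjoint. Hence (\ref{eq:barcelo}) and the extension estimate are dual, and both are equivalent to (\ref{eq:barcelo3}) by the standard $TT^*$ identity. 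Moreover, under the dilation $V\mapsto V_k(x):=V(x/k)$ one has $\|V_k\|_{\mt}=k\|V\|_{\mt}$ and $\F^*_{k\bs^{d-1}}\F_{k\bs^{d-1}}$ is conjugate, up to the correct factor of $k$, to $\F^*_{\bs^{d-1}}\F_{\bs^{d-1}}$. So it suffices to prove (\ref{eq:barcelo3}) for $k=1$.

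For the main estimate, I would decompose $\psi\in L^2(V^{-1})$ into spherical harmonics, $\psi(x)=\sum_{\ell,m}\psi_{\ell,m}(|x|)Y_{\ell,m}(x/|x|)$, and use the Funk--Hecke formula together with the plane-wave/Gegenbauer expansion $\me{2\pi i x\cdot\omega}=c_d\sum_{\ell,m}|x|^{-(d-2)/2}J_{\nu_\ell}(2\pi|x|)\overline{Y_{\ell,m}(\omega)}Y_{\ell,m}(x/|x|)$ with $\nu_\ell=\ell+(d-2)/2$ to diagonalize the convolution with $\widehat{\rd\omega}$. Since $V$ is radial, the weighted norms $\|\cdot\|_{L^2(V)}$ and $\|\cdot\|_{L^2(V^{-1})}$ split across modes, and (\ref{eq:barcelo3}) reduces to a family of one-dimensional bounds on Hankel-type operators of the form $\psi_{\ell,m}\mapsto J_{\nu_\ell}(2\pi r)\cdot(\mathcal{H}_{\nu_\ell}\psi_{\ell,m})(r)$, weighted by $H(r)r^{d-1}$ on one side and $H(r)^{-1}r^{d-1}$ on the other. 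Using Schur's test together with Langer's bound $|J_\nu(\rho)|\lesssim(\rho^2-\nu^2)^{-1/4}$ in the oscillatory regime $\rho>\nu$, exponential decay for $\rho<\nu$, and the Airy-type behaviour across the transition $\rho\sim\nu$, the resulting kernel produces exactly the factor $r/\sqrt{r^2-\mu^2}$ when one identifies the impact parameter $\mu=\nu_\ell/(2\pi)$; summing in $\ell$ the supremum of these 1D kernels is bounded precisely by $\|V\|_{\mt}$.

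For the converse, assume one of the inequalities holds and $V$ is radial and non-negative. Given $\mu>0$, I would test the extension estimate against a Knapp-type bump $g$ on $\bs^{d-1}$ of diameter $R^{-1/2}$ centred so that its dual tube $T\subset\R^d$ (of length $R$ and radius $R^{1/2}$) passes at distance $\mu$ from the origin; by radiality this is achieved simply by choosing the cap orientation appropriately. Then $|\F^*_{\bs^{d-1}}g|^2$ is comparable to $R^{-(d-1)/2}\one_T$, and evaluating $\|\F^*g\|_{L^2(V)}^2$ by integrating $V$ along $T$ yields, after a change of variables to the line at distance $\mu$ from the origin, a lower bound proportional to $\int_\mu^\infty H(r)r/\sqrt{r^2-\mu^2}\,\dr$. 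Since $\|g\|_{L^2(\bs^{d-1})}\sim 1$, letting $R\to\infty$ and then taking the supremum over $\mu$ forces $\|V\|_{\mt}<\infty$.

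The main obstacle is Step~2: the one-dimensional weighted Hankel bound must be proved \emph{uniformly in the order} $\nu_\ell$, since the number of angular modes is infinite and the Bessel function changes behaviour across $\rho=\nu$. The delicate point is to show that the three regimes (evanescent $\rho<\nu$, Airy transition $\rho\sim\nu$, oscillatory $\rho>\nu$) all yield kernel bounds dominated by the single quantity $r/\sqrt{r^2-\mu^2}$ with $\mu=\nu_\ell/(2\pi)$; only then does summation over $\ell$ collapse to the supremum in the definition of $\|V\|_{\mt}$. A possible alternative bypassing spherical harmonics is to write $\widehat{\rd\omega}$ as a superposition of plane waves restricted to $\bs^{d-1}$ and recognise the bilinear form $\int V|\psi\ast\widehat{\rd\omega}|^2\,\dx$ directly as an $L^2$ bound for an X-ray-type transform of $V$, whose operator norm is exactly $\|V\|_{\mt}$.
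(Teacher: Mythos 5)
This theorem is not proved in the paper: it is quoted verbatim from Barcelo--Ruiz--Vega \cite[Theorem~3]{MR1479544}, so there is no in-paper argument to compare your sketch against. I can only assess the sketch on its own.

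Your Step~1 (duality and $TT^\ast$ equivalence of the three estimates) and the scaling reduction to $k=1$ are correct; the scaling exactly closes because $\|V(\cdot/k)\|_{\mt}=k\|V\|_{\mt}$ and the convolution kernel $\widehat{\rd\omega_{k\bs^{d-1}}}(x)=k^{d-1}\widehat{\rd\omega}(kx)$ contributes the compensating power of $k$. The forward direction in Step~2 -- decompose into spherical harmonics, reduce via Funk--Hecke to a family of one--dimensional Hankel-type bounds, and control these uniformly in the order $\nu_\ell$ using the Langer bound $|J_\nu(\rho)|\lesssim(\rho^2-\nu^2)^{-1/4}$ in the oscillatory regime, exponential decay in the evanescent regime, and the Airy transition near $\rho\sim\nu$, identifying the impact parameter $\mu\sim\nu_\ell$ -- is indeed the structure of the Barcelo--Ruiz--Vega argument, and the hard point you flag (uniformity in $\nu_\ell$ across the three regimes) is exactly the crux of their proof. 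So the forward direction is a faithful outline.

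The converse as you propose it does not close. A Knapp cap of diameter $R^{-1/2}$ produces a tube $T$ of length $R$ \emph{and cross-section radius $R^{1/2}$}, and the quantity the extension estimate bounds is $R^{-(d-1)/2}\int_T V$, i.e.\ an average of $V$ over a slab of transverse thickness $R^{1/2}$, not a line integral. For a fixed impact parameter $\mu$, once $R^{1/2}\gg\mu$ the tube engulfs the origin and you cannot isolate the contribution at distance $\mu$; shrinking the transverse width while keeping the length long is not possible with a single cap, since width and length are tied by the uncertainty principle. Without additional regularity of $V$ on scale $R^{1/2}$ (which is not assumed) you cannot pass from boundedness of these averages to finiteness of $\int_\mu^\infty H(r)r(r^2-\mu^2)^{-1/2}\,\dr$. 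Your final ``possible alternative'' is much closer to the actual route: the paper's remark after the theorem points to the Agmon--H\"ormander result \cite[Theorem~3.1]{AgmonHormander1976}, which shows that the extension estimate forces a uniformly bounded X-ray transform of $V$, and for radial $V$ this X-ray bound is precisely the MT condition \eqref{eq:defmt}. In Barcelo--Ruiz--Vega the necessity is in fact obtained from the same Hankel analysis by showing that the supremum over $\ell$ of the one--dimensional operator norms is \emph{comparable} to $\|V\|_{\mt}$, so the converse comes for free once the forward direction is done sharply; you should replace your Knapp argument by one of these two.
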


\begin{remark}
  Using a result of Agmon and H\"ormander
  \cite[Theorem 3.1]{AgmonHormander1976}, Barcelo, Ruiz, and Vega
  \cite[p. 360-361]{MR1479544} observed that $V\in C_0(\R^d)$
  (vanishing at infinity, but not necessarily radial) has a uniformly
  bounded X-ray transform, i.e.,
  \begin{align}
    \label{eq:bddxray}
    \sup\left\{\int_\R V(y+t\omega)\,\dt:\ \omega\in\bs^{d-1}\,,y\in\R^d\right\} < \infty\,,
  \end{align}
  whenever the restriction estimate \eqref{eq:barcelo} holds for such $V$.
  By adapting their arguments to radial potentials $V$, they further remark
  that in this case \eqref{eq:bddxray} reduces to the MT condition
  \eqref{eq:defmt}.
  We shall, however, not make use of this remarkable fact in the following.
\end{remark}

Theorem \ref{mtrestriction} enables us to prove the following
Birman--Schwinger bound.

\begin{proposition}
  \label{mtrestriction}
  Assume $d\in\N\setminus\{1\}$, $e>0$, $T(\xi)=|\xi^2-1|$,
  and $V$ is a radial, non-negative function that satisfies
  the MT condition \eqref{eq:defmt}. Then
  \begin{align*}
    \|(T+e)^{-1/2}V^{1/2}\|^2 \lesssim (1+g_\mt(e)) \|V\|_{\mt}\,,
  \end{align*}
  where
  \begin{align}
    \label{eq:defgmt}
    g_\mt(e) = \int_{1/2}^{3/2} \frac{1}{T(k)+e}\,\dk \lesssim \max\{\ln(1/e),1\}\,.
  \end{align}
\end{proposition}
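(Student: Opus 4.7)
The plan is to mirror the argument for Proposition \ref{vegarestriction}, using the radial Mizohata--Takeuchi restriction estimate \eqref{eq:barcelo} of Theorem \ref{mt} in place of Vega's estimate \eqref{eq:vegarescaled}. The crucial feature of \eqref{eq:barcelo} is that the implicit constant is \emph{uniform in the radius} $k>0$: after canceling the factor $k^{-(d-1)/2}$ on both sides it reads
\begin{equation*}
\|\hat{f}\|_{L^2(k\bs^{d-1})}^2 = k^{d-1}\!\int_{\bs^{d-1}}|\hat{f}(k\omega)|^2\,\rd\omega
\lesssim \|V\|_{\mt}\,\|f\|_{L^2(V^{-1})}^2,
\quad k>0.
\end{equation*}
Because of this uniformity no splitting into small and large momenta will be necessary, and in particular no auxiliary $L^{d/2}$-type hypothesis on $V$ will appear --- this is precisely what distinguishes the situation from Proposition \ref{vegarestriction}.

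Given this observation the argument is essentially a one-line computation. For $\phi\in L^2(\R^d)$ I would set $f:=V^{1/2}\phi$, so that $\|f\|_{L^2(V^{-1})}^2=\int V|\phi|^2V^{-1}\,\dx\leq\|\phi\|_2^2$ (with the convention that the integrand vanishes on $\{V=0\}$). By Plancherel and polar coordinates,
\begin{equation*}
\|(T+e)^{-1/2}V^{1/2}\phi\|_2^2
= \int_0^\infty \frac{\dk}{T(k)+e}\left(k^{d-1}\!\int_{\bs^{d-1}}|\hat f(k\omega)|^2\,\rd\omega\right),
\end{equation*}
and inserting the uniform restriction bound displayed above yields
\begin{equation*}
\|(T+e)^{-1/2}V^{1/2}\phi\|_2^2
\lesssim \|V\|_{\mt}\,\|\phi\|_2^2\int_0^\infty \frac{\dk}{T(k)+e}.
\end{equation*}

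Finally I would estimate the scalar integral $\int_0^\infty \dk/(T(k)+e)$ by splitting at the window $\{1/2\leq k\leq 3/2\}$. On this window the contribution is exactly $g_{\mt}(e)$ of \eqref{eq:defgmt}, whose logarithmic bound $g_{\mt}(e)\lesssim\max\{\ln(1/e),1\}$ follows from the fact that $T(k)=|k^2-1|\sim 2|k-1|$ near $k=1$. On the complementary set one has $T(k)\geq 3/4$ for $k\in[0,1/2]$ and $T(k)\geq k^2/2$ for $k\geq 3/2$, so the tail contributes a constant independent of $e$ --- this is what accounts for the ``$1$'' in the prefactor $1+g_{\mt}(e)$. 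The only mild point of care is to verify that the constant in Theorem \ref{mt} is genuinely independent of $k$; once that is granted no further obstacles arise, and the proof is complete.
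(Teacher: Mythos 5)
Your argument is correct, and it is slightly more streamlined than the one given in the paper, though both routes are valid. The paper, mirroring the structure of Proposition \ref{vegarestriction}, splits $(T+e)^{-1/2}$ with a frequency cutoff $\chi$ around $|\xi|=1$: for momenta near the sphere it invokes the restriction estimate \eqref{eq:barcelo}, while for high momenta it instead appeals to the weighted Sobolev estimate $\|(1-\Delta)^{-s/2}f\|_{L^2(V)}\leq\|V\|_{\mt}\|f\|_{L^2(V^{-1})}$ (\cite[Lemma 4]{MR1479544}) with $s=2$. You avoid the split altogether by noticing that the bound $k^{d-1}\int_{\bs^{d-1}}|\hat{f}(k\omega)|^2\,\rd\omega\lesssim\|V\|_{\mt}\|f\|_{L^2(V^{-1})}^2$ carries no power of $k$ on the right, so that after Plancherel everything reduces to the scalar integral $\int_0^\infty\frac{\dk}{T(k)+e}$, which converges at both ends ($T(k)\geq 3/4$ for $k\leq 1/2$, $T(k)\gtrsim k^2$ for $k\geq 3/2$) and contributes $g_{\mt}(e)$ on the window $[1/2,3/2]$. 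This is exactly the structural difference from Proposition \ref{vegarestriction}: there the Vega bound carries the factor $k^{d-1-2d/p'}$, which is only integrable against $(T(k)+e)^{-1}$ at infinity when $p>2d/(d+2)$, so the high-frequency split and the auxiliary $L^{d/2}$ hypothesis are genuinely needed; here they are not. Your version is more self-contained (it does not invoke \cite[Lemma 4]{MR1479544}) and makes the $k$-uniformity of \eqref{eq:barcelo}---which, as you note, should be checked and indeed follows from the scaling relation $\|V(\cdot/k)\|_{\mt}=k\|V\|_{\mt}$---do all the work. One tiny remark: you correctly write $\|f\|_{L^2(V^{-1})}^2\leq\|\phi\|_2^2$ rather than the equality claimed in the paper's proof, accounting for the set $\{V=0\}$; the inequality is all that is needed.
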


\begin{proof}
  Let $f=V^{1/2}\phi$ and $\phi\in L^2(\R^d)$, i.e.,
  $\|f\|_{L^2(V^{-1})}=\|\phi\|_2$ and $\chi$ be the same radial bump
  function in Fourier space as in the proof of Proposition
  \ref{vegarestriction}. First observe that large momenta are
  controlled by
  \begin{align*}
    \|(1-\chi)^2(T+e)^{-s/2}f\|_{L^2(V)}
    \lesssim \|(1-\Delta)^{-s/2}f\|_{L^2(V)}
    \leq \|V\|_\mt \|f\|_{L^2(V^{-1})}\,, \quad s\geq1
  \end{align*}
  where the second estimate is the content of \cite[Lemma 4]{MR1479544}.
  Taking $s=2$ and plugging in $f=V^{1/2}\phi$ shows
  \begin{align*}
    \|(1-\chi)(T+e)^{-1/2}V^{1/2}\|_{L^2\to L^2}^2
    = \|V^{1/2}(1-\chi)^2(T+e)^{-1}V^{1/2}\|_{L^2\to L^2}
    \lesssim \|V\|_\mt
  \end{align*}
  as desired.
  
  For momenta close to $\bs^{d-1}$, we use
  the restriction estimate \eqref{eq:barcelo} to obtain
  \begin{align*}
    \|\chi\,(T+e)^{-1/2}f\|_2^2
    & = \int_{0}^{\infty} \dk\, \frac{\chi(k)^2 k^{d-1}}{T(k)+e} \int_{\bs^{d-1}} |\hat f(k\omega)|^2\,\rd\omega\\
    & \lesssim \|V\|_\mt\|f\|_{L^2(V^{-1})}^2 \int_{1/2}^{3/2} \dk\, \frac{1}{T(k)+e}\\
    & = g_\mt(e)\|V\|_\mt \|\phi\|_{2}^2\,.
  \end{align*}
  This concludes the proof.
\end{proof}

\subsection{Weak coupling asymptotics for potentials in mixed norm spaces or satisfying the MT condition}

We are now in position to combine the above results in the following theorem.

\begin{theorem}
  \label{asymptoticsradial}
  Let $T(\xi)=|\xi^2-1|$, and suppose $d$ and $V$ satisfy the assumptions
  in Proposition \ref{vegarestriction} or \ref{mtrestriction}.
  Then for every eigenvalue $a_S^j>0$ of $\cv_S$ in \eqref{eq:lswbs2},
  counting multiplicity, and every $\lambda>0$, there is an eigenvalue
  $-e_j(\lambda)$ of $T(-i\nabla)-\lambda V$ with weak coupling limit
  \begin{align*}
    e_j(\lambda) = \exp\left(-\frac{1}{\lambda a_S^j}(1+o(1))\right)
    \quad \text{as}\ \lambda\to0\,.
  \end{align*}  
\end{theorem}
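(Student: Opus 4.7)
The plan is to mimic the proof of Theorem \ref{thm. asymptotics} essentially verbatim, substituting the Tomas--Stein estimate \eqref{eq:tsexplicit} by either Vega's estimate \eqref{eq:vega3} or the Mizohata--Takeuchi estimate \eqref{eq:barcelo3}, depending on which of Propositions \ref{vegarestriction} or \ref{mtrestriction} applies to the given $V$. The decisive feature that makes the substitution go through is that both restriction estimates are uniform in the radius $k$ on any compact neighborhood of $1$, and hence uniform in $t\in[0,\tau]$ for small $\tau>0$, because the Fermi surfaces here are precisely the rescaled unit spheres $S_t^\pm=\sqrt{1\pm t}\,\bs^{d-1}$.

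The first step is to extend the definition of $\cv_S$ from Schwartz to the potential class. Writing $\|\cdot\|_\star$ for the relevant norm on $V$, a sequence of Schwartz $V_n\to V$ in $\|\cdot\|_\star$ gives rise to a Cauchy sequence $\cv_S^{(n)}=\F_SV_n\F_S^*$ on $L^2(\bs^{d-1})$ via the $k=1$ instance of \eqref{eq:vega3} or \eqref{eq:barcelo3} combined with H\"older's inequality; since each $\cv_S^{(n)}$ is Hilbert--Schmidt on the compact manifold $\bs^{d-1}$, the norm limit $\cv_S$ is compact, giving the analogs of Proposition \ref{proposition def. V_Sw} and Lemma \ref{lemma compactenss of V_S}. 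I would then split $BS(e)=BS^\high(e)+BS^\low_\sing(e)+BS^\low_\reg(e)$ as in Section \ref{proofmainresult}. Propositions \ref{vegarestriction} and \ref{mtrestriction} applied to $(T+e)^{-1/2}\sqrt{|V|}$ immediately give $\|BS(e)\|\lesssim \ln(1/e)\|V\|_\star$, the same estimate without the logarithm for $BS^\high(e)$ (since the low-frequency piece is responsible for the log), and the identification of $BS^\low_\sing(e)$ up to isospectrality with $\ln(1+\tau/e)\cv_S$.

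The heart of the proof, and the main obstacle, is the key estimate $\lambda\|BS^\low_\reg(e)\|=o(1)$ from \eqref{Key estimate}. I would follow Lemma \ref{proofkeybound}: approximate $V$ by Schwartz (and, in the MT case, radial) $V_n$ in $\|\cdot\|_\star$; for such smooth, rapidly decaying $V_n$, the Sobolev trace theorem gives operator-norm H\"older continuity of $t\mapsto \sqrt{V_n}\F_{S_t^\pm}^*\F_{S_t^\pm}\sqrt{V_n}$ on $[0,\tau]$, so the integral representation \eqref{BSlowreg} with $V$ replaced by $V_n$ is $o(\ln(1/e))$. The error is controlled by the uniform-in-$t$ bound $\sup_{t\in[0,\tau]}\|\sqrt{|W|}\F_{S_t^\pm}^*\F_{S_t^\pm}\sqrt{W}\|\lesssim \|W\|_\star$ (coming from the uniformity of \eqref{eq:vega3} and \eqref{eq:barcelo3} for $k\in[\sqrt{1-\tau},\sqrt{1+\tau}]$ together with H\"older), which when integrated against $(t+e)^{-1}\,\dt$ contributes at most $\ln(1/e)\|V-V_n\|_\star$; since $\lambda\ln(1/e)$ is bounded and $\|V-V_n\|_\star\to0$, sending $n\to\infty$ proves the key estimate. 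The subtle point is verifying sufficient density of Schwartz approximations in $\|\cdot\|_\star$: this is immediate in the Vega case for mixed Lebesgue norms with finite exponents, but in the MT case it requires a careful radial truncation-and-mollification argument, since $\|\cdot\|_\mt$ is not a standard Banach function norm.

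With the key estimate in hand, standard first-order perturbation theory as in \cite{MR2365659,MR2643024}, or equivalently the Riesz-projection argument of Subsection \ref{Section alternative proof}, combined with the Birman--Schwinger principle \eqref{eq:bsprinc}, transfers the eigenvalues of $\cv_S$ to those of $T(-i\nabla)-\lambda V$ and yields the claimed asymptotics $e_j(\lambda)=\exp(-(1+o(1))/(\lambda a_S^j))$.
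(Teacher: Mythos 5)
Your proposal follows essentially the same route as the paper: extend $\cv_S$ via the uniform-in-$k$ restriction estimates, split $BS(e)$ into high/low and singular/regular pieces, reduce the key bound $\lambda\|BS^{\rm low}_{\rm reg}(e)\|=o(1)$ to the argument of Lemma~\ref{proofkeybound} with \eqref{eq:tslocallyuniform} replaced by the uniform-in-$t$ consequences of \eqref{eq:vega3} and \eqref{eq:barcelo3}, and then invoke the perturbation-theoretic machinery. This is precisely what the paper does, which simply says ``we follow the proof of Theorem~\ref{thm. asymptotics}.''

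What is worth noting is that you are in fact \emph{more} careful than the printed proof on one point. The paper's proof asserts that the key bound ``merely relies on'' $\sup_{t\in[0,\tau]}\||V|^{1/2}\F_{S_t}^*\F_{S_t}V^{1/2}\|\lesssim_{\tau,V}1$, but as you observe (and as Lemma~\ref{proofkeybound} itself makes clear) the argument also uses the approximation of $V$ by Schwartz functions in the relevant norm, together with the polarized version of the restriction estimate to control $\sqrt{V}\F_{S_t}^*\F_{S_t}\sqrt{V}-\sqrt{V_n}\F_{S_t}^*\F_{S_t}\sqrt{V_n}$. In the Vega case (with $\sigma<2$, or $\sigma=2$ and radial $V$) the mixed norm has finite exponents and density of Schwartz/radial-smooth functions is routine. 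In the MT case your flag is genuine: $\|\cdot\|_\mt$ is a sup-type norm, the polarization requires writing the extension estimate in a symmetric weighted form, and it is not automatic that smooth radial truncations converge in $\|\cdot\|_\mt$ (finiteness of the sup does not obviously give decay of the tail integral uniformly in the shift parameter $\mu$). You correctly name this as the subtle point but leave it to ``a careful radial truncation-and-mollification argument''; the paper does not address it at all. So your proof reproduces the paper's argument faithfully and, if anything, is more honest about where the remaining work lies.
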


The notation $X\lesssim_V 1$ in the proof below conceals the more precise
estimate
$X\lesssim \min\{\|V\|_\mt,\|V\|_{L^{p/(2-p)}(\R_+,r^{d-1}\dr: L^{\sigma/(2-\sigma)}(\bs^{d-1}))}+\|V\|_{d/2}\theta(2d/(d+2)-p)\}$
for $p$ and $\sigma$ as in Proposition \ref{vegarestriction}.

\begin{proof}
  We follow the proof of Theorem \ref{thm. asymptotics}.
  First, the corresponding analog of Proposition \ref{proposition def. V_Sw},
  i.e., $L^2$ boundedness of $\cv_S$,
  follows immediately from the restriction theorems \ref{vega} and \ref{mt}
  above.
  Next, the splitting of $BS(e)$ is the same as in Section
  \ref{proofmainresult}. For concreteness, suppose that the frequency cutoff
  is at some $\tau<1/2$. As we have seen in the proofs of Propositions
  \ref{vegarestriction} and \ref{mtrestriction}, the high frequencies are
  harmless, i.e., in both cases we have $\|BS^{\rm high}(e)\|\lesssim_{\tau,V}1$,
  whereas the low frequencies are responsible for
  $\|BS^{\mathrm{low}}(e)\| \lesssim_{\tau,V} \ln(1/e)$ for $e\in(0,1)$.
  Thus, we are left to prove the analog of the key bound
  \eqref{Key estimate}.
  Recall that the Fermi surface of $T$ at energy $t\in(0,\tau]$ consists of
  the two connected components $S_t^\pm=\sqrt{1\pm t}\bs^{d-1}$.
  As explained below \eqref{BSlowreg} and in the proof of Lemma
  \ref{proofkeybound}, this merely relies on the validity of
  $\||V|^{1/2}\F_{S_t}^*\F_{S_t}V^{1/2}\|\lesssim_{\tau,V}1$ for all
  $t\in[0,\tau]$. But this is just reflected in \eqref{eq:vega3} and
  \eqref{eq:barcelo3} for $1/2<k<3/2$. This concludes the proof.
\end{proof}

\section*{Conflict of interest}

On behalf of all authors, the corresponding author states that
there is no conflict of interest.

\bibliographystyle{abbrv}

\end{document}